\newcommand{\dbtilde}[1]{\accentset{\approx}{#1}}
\newcommand*\rfrac[2]{{}^{#1}\!/_{#2}}
\theoremstyle{plain}
\newtheorem{theorem}{Theorem}[section]
\newtheorem{lemma}[theorem]{Lemma}
\newtheorem{proposition}[theorem]{Proposition}
\newtheorem{corollary}[theorem]{Corollary}
\theoremstyle{definition}
\newtheorem{definition}[theorem]{Definition}
\newtheorem{remark}[theorem]{Remark}
\newtheorem{conjecture}[theorem]{Conjecture}
\begin{document}

\title{Nonperturbative theory of power spectrum in complex systems}

\author{\hspace{-2cm}Roman Riser$^{1,2}$, Vladimir Al. Osipov$^{1}$, and Eugene Kanzieper$^1$\footnote{Corresponding author.}}

\address{\hspace{-2cm}$^1$~Department of Mathematics, Holon Institute of Technology, Holon 5810201, Israel\\
         \hspace{-2cm}$^2$~Department of Mathematics and Research Center for Theoretical Physics\\
         \hspace{-1.8cm} and Astrophysics, University of Haifa, Haifa 3498838, Israel
        }
\noindent\newline\newline
\begin{abstract}
The power spectrum analysis of spectral fluctuations in complex wave and quantum systems has emerged as a useful tool for studying their internal dynamics. In this paper, we formulate a nonperturbative theory of the power spectrum for complex systems whose eigenspectra -- not necessarily of the random-matrix-theory (RMT) type -- possess stationary level spacings. Motivated by potential applications in quantum chaology, we apply our formalism to calculate the power spectrum in a tuned circular ensemble of random $N \times N$ unitary matrices. In the limit of infinite-dimensional matrices, the exact solution produces a universal, parameter-free formula for the power spectrum, expressed in terms of a fifth Painlev\'e transcendent. The prediction is expected to hold universally, at not too low frequencies, for a variety of quantum systems with completely chaotic classical dynamics and broken time-reversal symmetry. On the mathematical side, our study brings forward a conjecture for a double integral identity involving a fifth Painlev\'e transcendent.

\noindent\newline\newline
Published in:~Annals of Physics {\bf 413}, 168065 (2020).
\end{abstract}
\newpage
\tableofcontents

\newpage
\section{Introduction} \label{intro}

The power spectrum analysis of stochastic spectra \cite{GKKMRR-2011} had recently emerged as a powerful tool for studying both system-specific and universal properties of complex wave and quantum systems. In the context of quantum systems, it reveals whether the corresponding classical dynamics is regular or chaotic, or a mixture of both, and encodes a `degree of chaoticity'. In combination with other long- and short-range spectral fluctuation measures, it provides an effective way to identify system symmetries, determine a degree of incompleteness of experimentally measured spectra, and get the clues about systems’ internal dynamics. Yet, the theoretical foundations of the power spectrum analysis of stochastic spectra have not been settled. In this paper, a nonperturbative theory of the power spectrum will be presented.

To set the stage, we review traditional spectral fluctuation measures (Section~\ref{backgro}), define the power spectrum (Definition~\ref{def-01}) and briefly discuss its early theoretical and numerical studies as well as the recently reported experimental results (Section~\ref{powerspectrum}). We then argue (Section~\ref{uncor}), that a form-factor approximation routinely used for the power spectrum analysis in quantum chaotic systems is not flawless and needs to be revisited.

\subsection{Short- and long-range measures of spectral fluctuations} \label{backgro}

Spectral fluctuations of quantum systems reflect the nature -- regular or chaotic -- of their underlying classical dynamics \cite{B-1987,BGS-1984,R-2000}. In case of fully chaotic classical dynamics, {\it hyperbolicity} (exponential sensitivity to initial conditions) and {\it ergodicity} (typical classical trajectories fill out available phase space uniformly) make quantum properties of chaotic systems universal \cite{BGS-1984}. At sufficiently long times $t > T_*$, the single particle dynamics is governed by global symmetries of the system and is accurately described by the random matrix theory (RMT) \cite{M-2004,PF-book}. The emergence of universal statistical laws, anticipated by Bohigas, Giannoni and Schmit \cite{BGS-1984}, has been advocated within a field-theoretic \cite{AASA-1996, AAS-1997} and a semiclassical approach \cite{RS-2002} which links correlations in quantum spectra to correlations between periodic orbits in the associated classical geodesics. The time scale $T_*$ of compromised spectral universality is set by the period $T_1$ of the shortest closed orbit and the Heisenberg time $T_{\rm H}$, such that $T_1 \ll T_* \ll T_{\rm H}$.

Several statistical measures of level fluctuations have been devised in quantum chaology. {\it Long-range} correlations of eigenlevels on the unfolded energy scale \cite{M-2004} can be measured by the variance $\Sigma^2(L)={\rm var}[{\mathcal N}(L)]$ of number of levels ${\mathcal N}(L)$ in the interval of length $L$. The $\Sigma^2(L)$ statistics probes the two-level correlations only and exhibit \cite{B-1985} a universal RMT behavior provided the interval $L$ is not too long, $1 \ll L \ll T_{\rm H}/T_1$. The logarithmic behavior of the number variance,
\begin{eqnarray} \label{nv}
    \Sigma^2_{\rm chaos}(L) = \frac{2}{\pi^2\beta} \ln L + {\mathcal O}(1),
\end{eqnarray}
indicates presence of the long-range repulsion between eigenlevels. Here, $\beta=1,2$ and $4$ denote the Dyson symmetry index \cite{M-2004,PF-book}. For more distant levels, $L \gg T_{\rm H}/T_1$, system-specific features show up in $\Sigma^2_{\rm chaos}(L)$ in the form of quasi-random oscillations with wavelengths being inversely proportional to periods of short closed orbits.

Individual features of quantum chaotic systems become less pronounced in spectral measures that probe the {\it short-range} fluctuations as these are largely determined by the long periodic orbits \cite{RS-2002}. The distribution of level spacing between (unfolded) consecutive eigenlevels, $P(s) = \langle \delta(s - E_j + E_{j+1})\rangle$, is the most commonly used short-range statistics. Here, the angular brackets denote averaging over the position $j$ of the reference eigenlevel or, more generally, averaging over such a narrow energy window that keeps the classical dynamics essentially intact. At small spacings, $s \ll 1$, the distribution of level spacings is mostly contributed by the {\it two-point} correlations, showing the phenomenon of symmetry-driven level repulsion, $P(s) \propto s^\beta$. (In a simple-minded fashion, this result can be read out from the Wigner surmise \cite{M-2004}). As $s$ grows, the spacing distribution becomes increasingly influenced by spectral correlation functions of {\it all} orders. In the universal regime ($s \lesssim T_{\rm H}/T_*$), these are best accounted for by the RMT machinery which produces parameter-free (but $\beta$-dependent) representations of level spacing distributions in terms of Fredholm determinants/Pfaffians and Painlev\'e transcendents. For quantum chaotic systems with broken time-reversal symmetry $(\beta=2)$ -- that will be the focus of our study -- the level spacing distribution is given by the famous Gaudin-Mehta formula, which when written in terms of Painlev\'e transcendents reads \cite{PF-book,JMMS-1980}
\begin{eqnarray}\label{LSD-PV}
    P_{\rm chaos}(s) = \frac{d^2}{ds^2} \exp \left(
            \int_{0}^{2\pi s} \frac{\sigma_0(t)}{t} dt
        \right).
\end{eqnarray}
Here, $\sigma_0(t)$ is the fifth Painlev\'e transcendent defined as the solution to the nonlinear equation $(\nu=0)$
\begin{equation} \label{PV-family}
    (t \sigma_\nu^{\prime\prime})^2 + (t\sigma_\nu^\prime -\sigma_\nu) \left(
            t\sigma_\nu^\prime -\sigma_\nu + 4 (\sigma_\nu^\prime)^2
        \right) - 4 \nu^2 (\sigma_\nu^\prime)^2 = 0
\end{equation}
subject to the boundary condition $\sigma_0(t) = -t/2\pi -(t/2\pi)^2 + o(t^2)$ as $t\rightarrow~0$.

The universal RMT laws [Eqs.~(\ref{nv}) and (\ref{LSD-PV})] apply to quantum systems with {\it completely chaotic} classical dynamics. Quantum systems whose classical geodesics is {\it completely integrable} belong to a different, Berry-Tabor universality class \cite{BT-1977}, partially shared by the Poisson point process. In particular, level spacings in a generic integrable quantum system exhibit statistics of waiting times between consecutive events in a Poisson process. This leads to the radically different fluctuation laws: the number variance $\Sigma^2_{\rm int}(L)=L$ is no longer logarithmic while the level spacing distribution $P_{\rm int}(s) = e^{-s}$ becomes exponential \cite{B-1987}, with no signatures of level repulsion whatsoever. Such a selectivity of short- and long-range spectral statistical measures has long been used to uncover underlying classical dynamics of quantum systems. (For a large class of quantum systems with mixed regular-chaotic classical dynamics, the reader is referred to Refs. \cite{T-1989, TU-1994, BKP-1998}.)

\subsection{Power spectrum: Definition and early results} \label{powerspectrum}
To obtain a more accurate characterization of the quantum chaos, it is advantageous to use spectral statistics which probe the correlations between {\it both} nearby and distant
eigenlevels. Such a statistical indicator -- {\it the power spectrum} -- has been suggested in Ref.~\cite{RGMRF-2002}.

\begin{definition}\label{def-01}
  Let $\{\varepsilon_1 \le \dots \le \varepsilon_N\}$ be a sequence of ordered unfolded eigenlevels, $N \in {\mathbb N}$, with the mean level spacing $\Delta$ and let
  $\langle \delta\varepsilon_\ell \delta\varepsilon_m \rangle$ be the covariance matrix of level displacements $\delta\varepsilon_\ell = \varepsilon_\ell - \langle \varepsilon_\ell\rangle$ from their mean $\langle \varepsilon_\ell\rangle$. A Fourier transform of the covariance matrix
\begin{eqnarray}\label{ps-def}
    S_N(\omega) = \frac{1}{N\Delta^2} \sum_{\ell=1}^N \sum_{m=1}^N \langle \delta\varepsilon_\ell \delta\varepsilon_m \rangle\, e^{i\omega (\ell-m)}, \quad \omega \in {\mathbb R}
\end{eqnarray}
is called the power spectrum of the sequence. Here, the angular brackets stand for an average over an ensemble of eigenlevel sequences.
\hfill $\blacksquare$
\end{definition}

Since the power spectrum is $2\pi$-periodic, real and even function in $\omega$,
\begin{eqnarray}\fl \qquad
S_N(\omega+2\pi) = S_N(\omega), \quad S_N^*(\omega) = S_N(\omega), \quad S_N(-\omega) = S_N(\omega),
\end{eqnarray}
it is sufficient to consider it in the interval $0 \le \omega \le \omega_{\rm Ny}$, where $\omega_{\rm Ny} = \pi$ is the Nyquist frequency. In the spirit of the discrete Fourier analysis, one may restrict dimensionless frequencies $\omega$ in Eq.~(\ref{ps-def}) to a finite set
\begin{eqnarray}\label{freq-k}
    \omega_k = \frac{2\pi k}{N}
\end{eqnarray}
with $k=\{1,2,\dots, N/2\}$, where $N$ is assumed to be an even integer. We shall see that resulting analytical expressions for $S_N(\omega_k)$ are slightly simpler than those for $S_N(\omega)$.
\noindent\par
\begin{remark}
We notice in passing that similar statistics has previously been used by Odlyzko \cite{Od-1987} who analyzed power spectrum of the {\it spacings} between zeros of the Riemann zeta function.
\hfill $\blacksquare$
\end{remark}

Considering Eq.~(\ref{ps-def}) through the prism of a semiclassical approach, one readily realizes that, at low frequencies $\omega \ll T_*/T_{\rm H}$, the power spectrum is largely affected by system-specific correlations between very distant eigenlevels (accounted for by short periodic orbits). For higher frequencies, $\omega \gtrsim T_*/T_{\rm H}$, the contribution of longer periodic orbits becomes increasingly important and the power spectrum enters the {\it universal regime}. Eventually, in the frequency domain $T_*/T_{\rm H} \ll \omega \le \omega_{\rm Ny}$, long periodic orbits win over and the power spectrum gets shaped by correlations between the nearby levels. Hence, tuning the frequency $\omega$ in $S_N(\omega)$ one may attend to spectral correlation between either adjacent or distant eigenlevels.

Numerical simulations \cite{RGMRF-2002} have revealed that the average power spectrum $S_N(\omega_k)$ discriminates sharply between quantum systems with chaotic and integrable classical dynamics. While this was not completely unexpected, another finding of Ref.~\cite{RGMRF-2002} came as quite a surprise: numerical data for $S_N(\omega_k)$, at not too high frequencies, could be fitted by simple power-law curves, $S_N(\omega_k) \sim 1/\omega_k$ and $S_N(\omega_k) \sim 1/\omega_k^2$, for quantum systems with chaotic and integrable classical dynamics, respectively. In quantum systems with mixed classical dynamics, numerical evidence was presented \cite{GRRFSVR-2005} for the power-law of the form $S_N(\omega_k) \sim 1/\omega_k^\alpha$ with the exponent $1 <\alpha < 2$ measuring a `degree of chaoticity'. The power spectrum of interface fluctuations in various growth models belonging to the $(1+1)$-dimensional Kardar-Parisi-Zhang universality class, studied in Ref.~\cite{KAT-2017} both numerically and experimentally, were found to follow the power law with $\alpha= 5/3$. The power spectrum was also measured in Sinai \cite{FKMMRR-2006} and perturbed rectangular \cite{BYBLDS-2016b} microwave billiards, microwave networks \cite{BYBLDS-2016,DYBBLS-2017} and three-dimensional microwave cavities \cite{LBYBS-2018}. For the power spectrum analysis of Fano-Feshbach resonances in an ultracold gas of Erbium atoms \cite{FMAFBMPK-2014}, the reader is referred to Ref.~\cite{PM-2015}.

For quantum chaotic systems, the universal $1/\omega_k$ law for the average power spectrum in the frequency domain $T_*/T_{\rm H}\lesssim \omega_k \ll 1$ can be read out from the existing RMT literature. Indeed, defining a set of discrete Fourier coefficients
\begin{eqnarray} \label{FK-def}
    a_k = \frac{1}{\sqrt{N}} \sum_{\ell=1}^N \delta \varepsilon_\ell \, e^{i \omega_k \ell}
\end{eqnarray}
of level displacements $\{\delta\varepsilon_\ell\}$, one observes the relation
\begin{eqnarray} \label{ak}
    S_N(\omega_k)={\rm var}[a_k].
\end{eqnarray}
Statistics of the Fourier coefficients $\{a_k\}$ were studied in some detail \cite{W-1987} within the Dyson's Brownian motion model \cite{D-1962}. In particular, it is known that, in the limit $k \ll N$, they are independent Gaussian distributed random variables with zero mean and the variance ${\rm var}[a_k] = N/(2\pi^2 \beta k)$. This immediately implies
\begin{eqnarray} \label{BM}
    S_N(\omega_k \ll 1) \approx  \frac{1}{\pi \beta \omega_k}
\end{eqnarray}
in concert with numerical findings. For larger $k$ (in particular, for $k \sim N$), fluctuation properties of the Fourier coefficients $\{a_k\}$ are unknown. In view of the relation Eq.~(\ref{ak}), a nonperturbative theory of the power spectrum to be developed in this paper sets up a well-defined framework for addressing statistical properties of discrete Fourier coefficients $\{a_k\}$ introduced in Ref.~\cite{W-1987}.

An attempt to determine $S_N(\omega_k)$ for higher frequencies up to $\omega_k = \omega_{\rm Ny}$ was undertaken in Ref.~\cite{FGMMRR-2004} whose authors claimed to express the large-$N$ power spectrum in the entire domain $T_*/T_{\rm H}\lesssim \omega_k \le \omega_{\rm Ny}$ in terms of the {\it spectral form-factor} \cite{M-2004}
\begin{eqnarray}\fl \label{FF-def}
    \quad
    K_N(\tau) = \frac{1}{N}
    \left( \Big<
        \sum_{\ell=1}^N \sum_{m=1}^N e^{2 i \pi \tau (\varepsilon_\ell - \varepsilon_m) }
    \Big> -  \Big<
        \sum_{\ell=1}^N e^{2 i \pi \tau \varepsilon_\ell }
    \Big>
    \Big<
        \sum_{m=1}^N e^{-2 i \pi \tau \varepsilon_m }
    \Big> \right)
\end{eqnarray}
of a quantum system, $\tau \ge 0$. Referring interested reader to Eqs.~(3), (8) and (10) of the original paper Ref.~\cite{FGMMRR-2004}, here we only quote a small-$\omega_k$ reduction of their result:
\begin{eqnarray} \label{FFA}
    \hat{S}_N(\omega_k \ll 1) \approx \frac{1}{\omega_k^2} K_N\left(
    \frac{\omega_k}{2\pi}
    \right).
\end{eqnarray}
(Here, the hat-symbol ($\;\hat{ }\;$) is used to indicate that the power spectrum $\hat{S}_N(\omega_k \ll 1)$ is the one furnished by the form-factor approximation.) A similar approach was also used in subsequent papers \cite{GG-2006,RMRFM-2008}.

Even though numerical simulations seemed to confirm a theoretical curve derived in Ref.~\cite{FGMMRR-2004}, we believe that the status of their heuristic approach needs to be clarified. This will be done in Section~\ref{uncor}.

\subsection{Spectra with uncorrelated spacings: Form-factor vs power spectrum} \label{uncor}

A simple mathematical model of eigenlevel sequences $\{\varepsilon_1,\dots,\varepsilon_N\}$ with identically distributed, {\it uncorrelated} spacings $\{s_1,\dots,s_N\}$, where $\ell$-th ordered eigenlevel equals
\begin{eqnarray} \label{ELSJ}
\varepsilon_\ell = \sum_{j=1}^\ell s_j,
\end{eqnarray}
provides an excellent playing ground to analyze validity of the form-factor approximation. Defined by the covariance matrix of spacings of the form ${\rm cov}(s_i,s_j) = \sigma^2 \delta_{ij}$, such that $\langle s_i \rangle =1$, it allows us to determine exactly {\it both} the power spectrum Eq.~(\ref{ps-def}) and the form-factor Eq.~(\ref{FF-def}).
\noindent\newline\newline
{\it Power spectrum.}---Indeed, realizing that the covariance matrix of ordered eigenlevels equals
\begin{eqnarray}
    \langle \delta\varepsilon_\ell \delta\varepsilon_m \rangle = \sigma^2 {\rm min}(\ell, m),
\end{eqnarray}
we derive an {\it exact} expression for the power spectrum ($N \in {\mathbb N}$)
\begin{eqnarray}\fl \qquad\quad \label{S_exp}
    S_N(\omega) =  \frac{2N+1}{4N} \frac{\sigma^2}{\sin^2(\omega/2)} \left(
        1 - \frac{1}{2N+1} \frac{\sin\left((N+1/2)\omega\right)}{\sin(\omega/2)}
    \right).
\end{eqnarray}
Equation~(\ref{S_exp}) stays valid in the entire region of frequencies $0 \le \omega \le \pi$. For a set of discrete frequencies $\omega_k = 2\pi k/N$, it reduces to
\begin{eqnarray}\label{smwk}
    S_N(\omega_k) = \frac{\sigma^2}{2\sin^2(\omega_k/2)}, \quad
    0 < \omega_k \le \pi.
\end{eqnarray}

\begin{remark}\label{rem-univer}
Notice that Eqs.~(\ref{S_exp}) and (\ref{smwk}) for the power spectrum of eigenlevel sequences with uncorrelated level spacings hold {\it universally}.
Indeed, both expressions appear to be independent of a particular choice of the level spacings distribution; the level spacing variance $\sigma^2$ is the only
model-specific parameter.
\hfill $\blacksquare$
\end{remark}
\begin{figure}
\includegraphics[width=\textwidth]{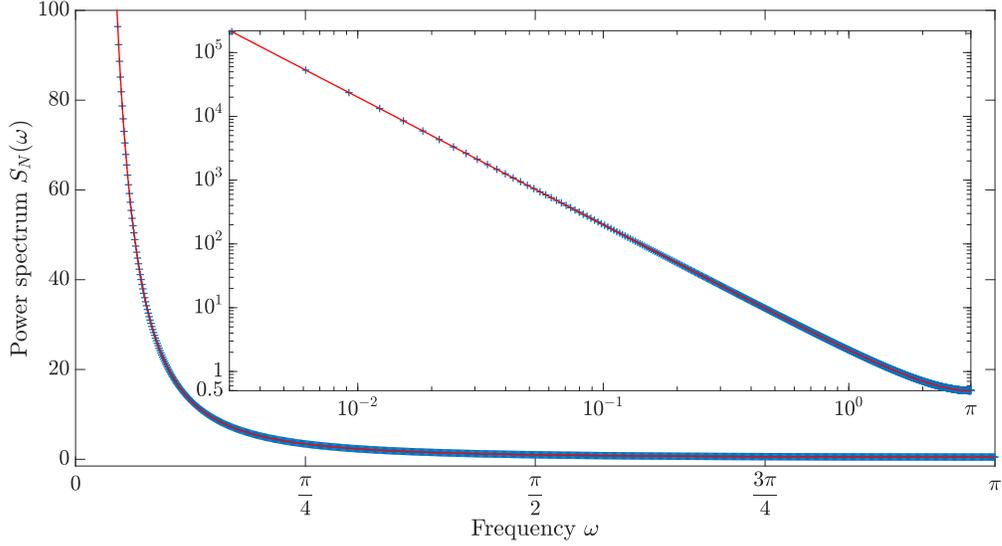}
\caption{Power spectrum $S_N(\omega)$ as a function of frequency $\omega$ for eigenlevel sequences with uncorrelated level spacings. Solid red line corresponds to the theoretical curve Eq.~(\ref{S_exp}) with $\sigma^2=1$. Blue crosses represent the average power spectrum simulated for $10$ million sequences of $N=2048$ random eigenlevels with uncorrelated, exponentially distributed spacings $s_i \sim {\rm Exp}(1)$. Inset: a log-log plot for the same graphs.}
\label{Figure_ps_exp}
\end{figure}

For illustration purposes, in Fig.~\ref{Figure_ps_exp}, we compare the theoretical power spectrum $S_N(\omega)$, Eq.~(\ref{S_exp}), with the average power spectrum {\it simulated} for an ensemble of sequences of random eigenlevels with uncorrelated, exponentially distributed level spacings $s_i \sim {\rm Exp}(1)$. Since the unit mean level spacing $\langle s_j \rangle =1$ is intrinsic to the model, the unfolding procedure is redundant. Perfect agreement between the theoretical and the simulated curves is clearly observed in the entire frequency domain $0<\omega\le \pi$.

For further reference, we need to identify three scaling limits of $S_N(\omega)$ that emerge as $N\rightarrow \infty$. In doing so, the power spectrum will be multiplied by $\omega^2$ to get rid of the singularity at $\omega=0$.

(i) The first -- infrared -- regime, refers to extremely small frequencies, $\omega \sim N^{-1}$. It is described by the double scaling limit
\begin{eqnarray} \label{SN-1st}
    {\mathcal S}^{\rm{(-1)}}(\Omega) = \lim_{N\rightarrow \infty} \omega^2 S_N(\omega)\Big|_{\omega=\Omega/N} = 2\sigma^2 \left(
    1- \frac{\sin \Omega}{\Omega}
\right),
\end{eqnarray}
where $\Omega={\mathcal O}(N^0)$. One observes:
\begin{eqnarray} \label{S1T}
    {\mathcal S}^{(-1)}(\Omega) = \left\{
                             \begin{array}{ll}
                               {\mathcal O}(\Omega^2), & \hbox{$\Omega\rightarrow 0$;} \\
                                2\sigma^2 + o(1), & \hbox{$\Omega\rightarrow \infty$.}
                             \end{array}
                           \right.
\end{eqnarray}

(ii) The second scaling regime describes the power spectrum for intermediately small frequencies $\omega \sim N^{-\alpha}$ with $0<\alpha<1$. In this case, a double
scaling limit becomes trivial:
\begin{eqnarray} \label{SN-2nd}
    {\mathcal S}^{\rm{(-\alpha)}}(\tilde\Omega) = \lim_{N\rightarrow \infty} \omega^2 S_N(\omega)\Big|_{\omega=\tilde{\Omega}/N^\alpha} = 2\sigma^2,
\end{eqnarray}
where $\tilde\Omega={\mathcal O}(N^0)$. In the forthcoming discussion of a spectral form-factor [Eq.~(\ref{K-interm})], such a scaling limit will appear with $\alpha=1/2$.

(iii) The third scaling regime describes the power spectrum for $\omega = {\mathcal O}(N^0)$ fixed as $N \rightarrow \infty$. In this case, we derive
\begin{eqnarray} \label{SN-3rd}
    {\mathcal S}^{\rm{(0)}}(\omega) = \lim_{N\rightarrow \infty} \omega^2 S_N(\omega) = \sigma^2\frac{\omega^2}{2\sin^2(\omega/2)},
\end{eqnarray}
where $\omega={\mathcal O}(N^0)$. One observes:
\begin{eqnarray} \label{S3T}
    {\mathcal S}^{(0)}(\omega) = \left\{
                             \begin{array}{ll}
                               2\sigma^2+{\mathcal O}(\omega^2), & \hbox{$\omega \rightarrow 0$;} \\
                                \sigma^2\pi^2/2, & \hbox{$\omega = \pi$.}
                             \end{array}
                           \right.
\end{eqnarray}

Equations~(\ref{S1T}), (\ref{SN-2nd}) and (\ref{S3T}) imply continuity of ${\mathcal S}(\omega)$ across the three scaling regimes. We shall return to the universal formulae Eqs.~(\ref{SN-1st}), (\ref{SN-2nd}) and (\ref{SN-3rd}) later on.
\newline\newline{\it Spectral form-factor.}---For eigenlevel sequences with identically distributed, uncorrelated level spacings, the form-factor $K_N(\tau)$ defined by Eq.~(\ref{FF-def}) can be calculated exactly, too. Defining the characteristic function of $i$-th level spacing,
\begin{eqnarray}
    \Psi_s(\tau) = \langle e^{2i\pi \tau \, s_i} \rangle = \int_{0}^{\infty} ds \, e^{2i\pi \tau s} f_{s_i}(s),
\end{eqnarray}
where $f_{s_i}(s)$ is the probability density of the $i$-th level spacing, we reduce Eq.~(\ref{FF-def}) to
\begin{eqnarray}\fl \label{K-tau-theor}
    K_N(\tau) = 1 + \frac{2}{N}{\rm Re\,} \left[
    \frac{\Psi_s(\tau)}{1-\Psi_s(\tau)} \left(
        N - \frac{1- \Psi_s^N(\tau)}{1- \Psi_s(\tau)}
    \right)
    \right] - \frac{1}{N} \left|
    \Psi_s(\tau) \frac{1- \Psi_s^N(\tau)}{1- \Psi_s(\tau)}
    \right|^2.\nonumber\\
    {}
\end{eqnarray}

In Fig.~\ref{Figure_KT_exp}, we compare the theoretical form-factor Eq.~(\ref{K-tau-theor}) with the average form-factor simulated for an ensemble of
sequences of random eigenlevels with uncorrelated, exponentially distributed level spacings as explained below Remark~\ref{rem-univer}. The simulation was based on Eqs.~(\ref{FF-def}) and (\ref{ELSJ}), and involved averaging \cite{P-1997} over ten million realizations. Referring the reader to a figure caption for detailed explanations, we plainly notice a perfect agreement between the simulations and the theoretical result Eq.~(\ref{K-tau-theor}).

\begin{figure}
\includegraphics[width=\textwidth]{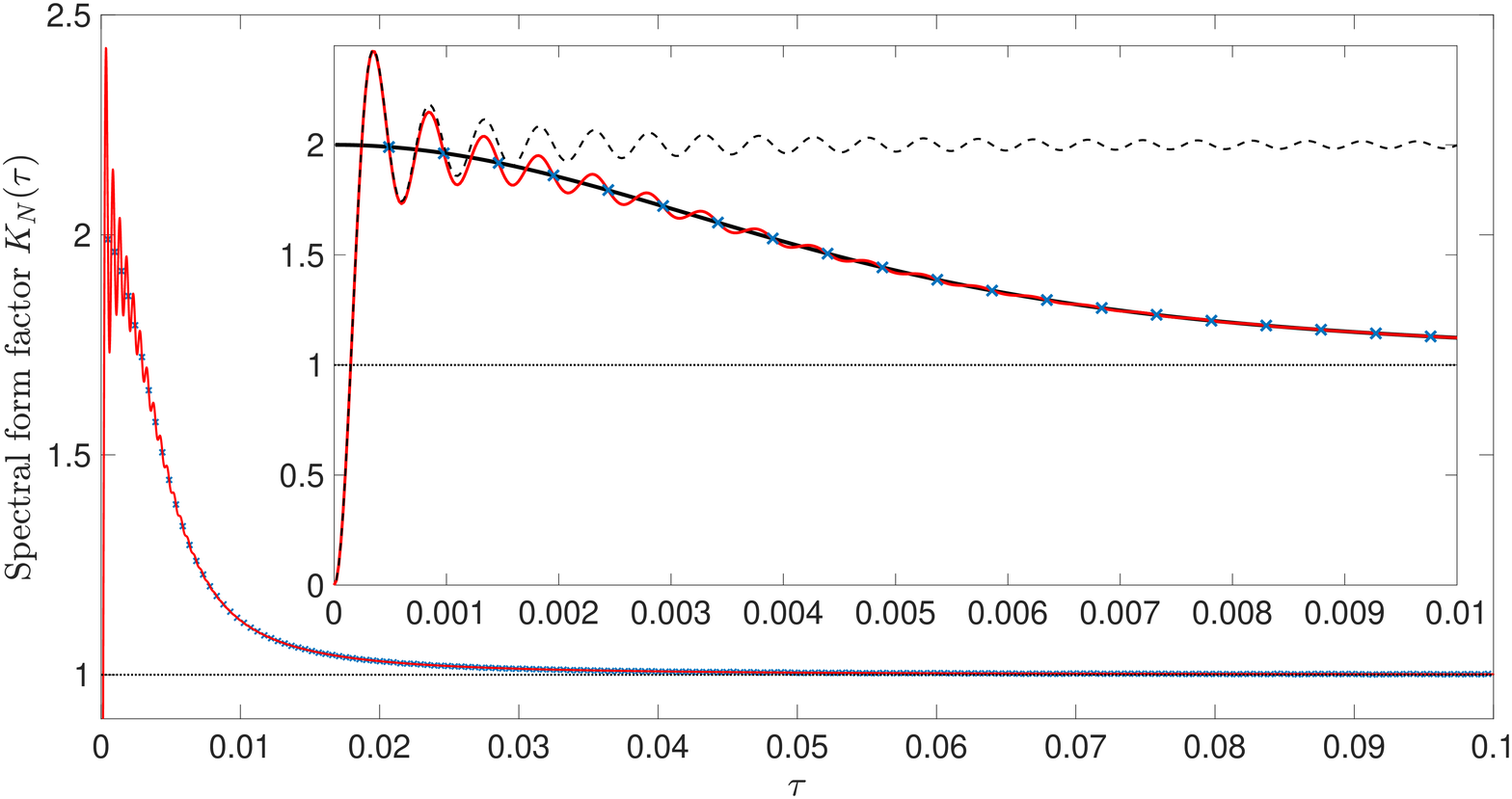}
\caption{Spectral form-factor $K_N(\tau)$ as a function of $\tau$ for a model and the data specified in the caption to Fig.~\ref{Figure_ps_exp}. Solid red line corresponds to the theoretical curve Eq.~(\ref{K-tau-theor}) with $\Psi_s(\tau) = (1-2 i \pi \tau)^{-1}$. Inset: a close-up view of the same graphs; additional black curves display limiting form-factor in various scaling regimes. Dashed line: regime ${\rm (I)}$, Eq.~(\ref{K-infrared}) with $\tau=T/N$. Solid line: regime ${\rm (II)}$, Eq.~(\ref{K-interm}) with $\tau={\mathcal T}/N^{1/2}$. Dotted line:
regime ${\rm (III)}$, Eq.~(\ref{K-tau-3}), see discussion there. Notice that the black dashed curve ${\rm [(I)]}$ starts to deviate from the red curve (after the fourth blue cross the deviation exceeds $10\%$; as $\tau$ grows further, the relative deviation approaches the factor $2$). For larger $\tau$, the black solid curve ${\rm [(II)]}$ becomes a better fit to the red curve. Finally, the red curve approaches the unity depicted by the black dotted line ${\rm [(III)]}$.}
\label{Figure_KT_exp}
\end{figure}

As $N\rightarrow \infty$, three different scaling regimes can be identified for the spectral form-factor. {\it Two} of them, arising in specific {\it double scaling} limits, appear to be {\it universal}.

(i) The first -- infrared -- regime, refers to extremely short times, $\tau \sim N^{-1}$. Assuming existence and convergence of the moment-expansion for the characteristic function $\Psi_s(\tau)$, we expand it up to the terms of order $N^{-2}$,
\begin{eqnarray}
    \Psi_s(\tau)\Big|_{\tau=T/N} = 1 + 2i\pi \frac{T}{N} - 2\pi^2 (\sigma^2+1) \frac{T^2}{N^2} + \mathcal{O}(N^{-3})
\end{eqnarray}
to derive the infrared double scaling limit for the form factor:
\begin{eqnarray} \label{K-infrared}
    K^{(-1)}(T)=\lim_{N\rightarrow \infty} K_N(\tau)\Big|_{\tau=T/N} = 2\sigma^2 \left(
        1 - \frac{\sin (2\pi T)}{2 \pi T}
    \right),
\end{eqnarray}
where $T = {\mathcal O}(N^0)$. Notice that this formula holds {\it universally} as $K^{(-1)}(T)$ does not
depend on a particular choice of the level spacings distribution; its variance $\sigma^2$ is the only model-specific parameter. One observes:
\begin{eqnarray} \label{K1T}
    K^{(-1)}(T) = \left\{
                             \begin{array}{ll}
                               {\mathcal O}(T^2), & \hbox{$T\rightarrow 0$;} \\
                                2\sigma^2 + o(1), & \hbox{$T\rightarrow \infty$.}
                             \end{array}
                           \right.
\end{eqnarray}

(ii) The second -- intermediate -- regime, refers to intermediately short times, $\tau \sim N^{-1/2}$. Expanding the
characteristic function $\Psi_s(\tau)$ up to the terms of order $N^{-1}$,
\begin{eqnarray}
    \Psi_s(\tau)\Big|_{\tau={\mathcal T}/N^{1/2}} = 1 + 2i\pi \frac{\mathcal T}{N^{1/2}} - 2\pi^2 (\sigma^2+1) \frac{{\mathcal T}^2}{N} + \mathcal{O}(N^{-3/2}),
\end{eqnarray}
we discover that, for intermediately short times, the double scaling limit of the form factor
reads
\begin{eqnarray} \label{K-interm}
    \fl \qquad\qquad
    K^{(-1/2)}({\mathcal T})=\lim_{N\rightarrow \infty} K_N(\tau)\Big|_{\tau={\mathcal T}/N^{1/2}} = \sigma^2 \left( 1
    + \frac{1 - e^{-4 \pi^2 \sigma^2 {\mathcal T}^2}}{4 \pi^2 \sigma^2 {\mathcal T}^2}\right),
\end{eqnarray}
where ${\mathcal T} = {\mathcal O}(N^0)$. Hence, in the intermediate double scaling limit, the form-factor exhibits the {\it universal} behavior too, as $K^{(-1/2)}(\mathcal{T})$ depends on a particular choice of the level spacings distribution only through its variance $\sigma^2$. One observes:
\begin{eqnarray} \label{K2T}
    K^{(-1/2)}(\mathcal{T}) = \left\{
                             \begin{array}{ll}
                               2 \sigma^2+{\mathcal O}({\mathcal T}^2), & \hbox{${\mathcal T}\rightarrow 0$;} \\
                                \sigma^2 + o(1), & \hbox{${\mathcal T}\rightarrow \infty$.}
                             \end{array}
                           \right.
\end{eqnarray}

(iii) The third scaling regime describes the form-factor for $\tau = {\mathcal O}(N^0)$ fixed as $N \rightarrow \infty$. Spotting that in this case the characteristic function $\Psi_s^N(\tau)$ vanishes exponentially fast, we derive
\begin{eqnarray} \label{K-tau-3}
    K^{(0)}(\tau) = \lim_{N\rightarrow \infty} K_N(\tau) =  1 + 2 {\rm Re\,} \left[
    \frac{\Psi_s(\tau)}{1-\Psi_s(\tau)}\right].
\end{eqnarray}
Notably, in the fixed-$\tau$ scaling limit, the form-factor is {\it no longer universal} as it depends explicitly on the particular distribution of level spacings \footnote{For the exponential distribution of level spacings the form-factor in the third scaling regime equals unity, $K^{(0)}(\tau) \equiv 1$.} through its characteristic function $\Psi_s(\tau)$. One observes:
\begin{eqnarray} \label{K3T}
    K^{(0)}(\tau) = \left\{
                             \begin{array}{ll}
                               \sigma^2+{\mathcal O}(\tau^2), & \hbox{$\tau \rightarrow 0$;} \\
                                1 + o(1), & \hbox{$\tau\rightarrow \infty$.}
                             \end{array}
                           \right.
\end{eqnarray}
\begin{figure}
\includegraphics[width=\textwidth]{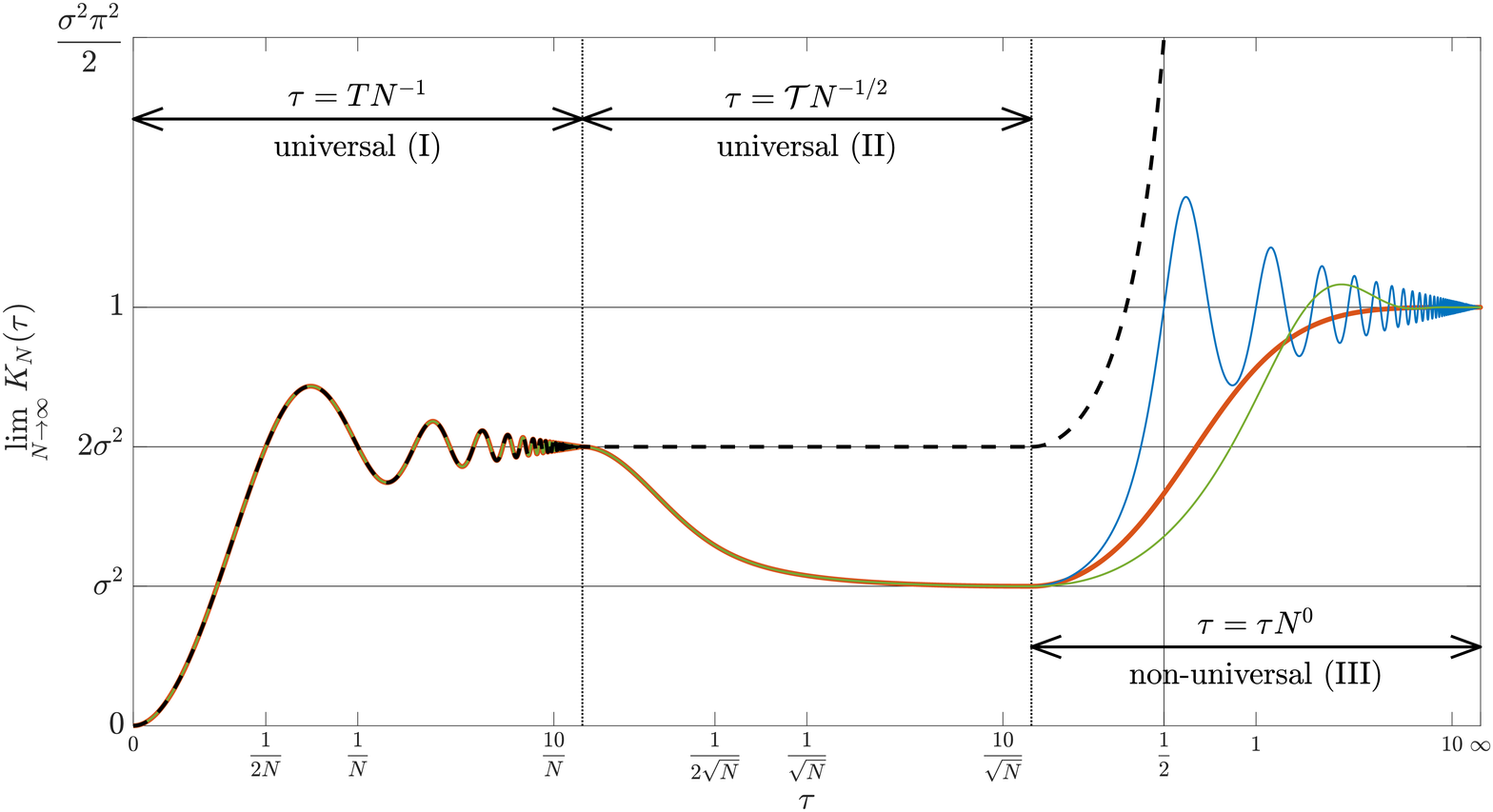}
\caption{Limiting curves ($N\rightarrow \infty$) for the form-factor across the three scaling regimes [${\rm (I)}$ -- Eq.~(\ref{K-infrared}), ${\rm (II)}$ -- Eq.~(\ref{K-interm}), and ${\rm (III)}$ -- Eq.~(\ref{K-tau-3})], glued together at vertical dotted lines. The functions $K^{(-1)}(T)$, $K^{(-1/2)}({\mathcal T})$ and
$K^{(0)}(\tau)$, describing the regimes ${\rm (I)}$, ${\rm (II)}$ and ${\rm (III)}$, correspondingly, are plotted vs variables $T=N\tau$, ${\mathcal T}=N^{1/2}\tau$ and $\tau$, each running over the entire real half-line compactified using the transformation $(0,\infty)=\tan((0,\pi/2))$. Solid red, green and blue curves correspond to the form-factor in the model of uncorrelated spacings drawn from the $\rm{Erlang}(3,3)$ (red), inverse Gaussian ${\rm IG}(1,3)$ (green) and uniform ${\rm U}(0,2)$ (blue) distributions, exhibiting identical mean and variance. The dashed black line -- to be discussed in the main text -- displays the limiting curve of the function $\lim_{N\rightarrow \infty}\omega^2 S_N(\omega)$ with $0\le \omega=2\pi\tau \le \pi$ (that is, $0\le \tau \le \rfrac{1}{2}$) for all three choices of the level spacing distribution. In the scaling regimes ${\rm (I)}$, ${\rm (II)}$ and ${\rm (III)}$, the curve is described by Eqs.~(\ref{SN-1st}), (\ref{SN-2nd}) with $\alpha=1/2$ and (\ref{SN-3rd}), respectively.
}
\label{Fig_cartoon}
\end{figure}

The three scaling regimes for the form-factor as $N \rightarrow \infty$ are illustrated in Fig.~\ref{Fig_cartoon}. The continuity of the entire curve is guaranteed by equality of limits $\lim_{T\rightarrow \infty} K^{(-1)}(T) = \lim_{{\mathcal T}\rightarrow 0} K^{(-1/2)}({\mathcal T})$ and $\lim_{{\mathcal T}\rightarrow \infty} K^{(-1/2)}({\mathcal T}) = \lim_{\tau\rightarrow 0} K^{(0)}(\tau)$, see Eqs.~(\ref{K1T}), (\ref{K2T}) and (\ref{K3T}). To highlight occurrence of both universal and non-universal $\tau$-domains in the form-factor, the latter is plotted for three different choices of level spacing distributions, $s_j \sim {\rm Erlang}(3,3)$, ${\rm IG}(1,3)$ and ${\rm U}(0,2)$, characterized by the same mean $\langle s_j \rangle =1$ and the variance $\sigma^2=1/3$:
\begin{eqnarray} \label{LSD-3}
    f_{s_j}(s) = \Theta(s) \times \left\{
                   \begin{array}{ll}
                     \displaystyle \frac{27}{2} s^2 \exp(-3s), & \hbox{${\rm Erlang}(3,3)$;} \\
                     \displaystyle \left(\frac{3}{2\pi s^3}\right)^{1/2} \exp \left( -\frac{3(s-1)^2}{2s}  \right), & \hbox{${\rm IG}(1,3)$;} \\
                     \displaystyle \frac{1}{2}\Theta(2-s), & \hbox{${\rm U}(0,2)$.}
                   \end{array}
                 \right.
\end{eqnarray}
The three curves coincide in the universal domains ${\rm (I)}$ and ${\rm (II)}$. On the contrary, in the third regime [${\rm (III)}$], the form-factor behavior is {\it non-universal} as the three curves evolve differently depending on a particular choice of the level spacing distribution. Yet, all three curves approach unity at infinity.
\noindent\newline\newline
{\it Implications for the power spectrum.}---We now turn to the discussion of a relation between the power spectrum Eq.~(\ref{ps-def}) and the form-factor Eq.~(\ref{FF-def}). To this end, we shall compare the limiting forms, as $N\rightarrow \infty$, of the form-factor, studied both analytically and numerically in the previous subsection, with the limiting behavior of the product $\omega^2 S_N(\omega)\mid_{\omega=2\pi\tau}$ as prompted by the form-factor approximation Eq.~(\ref{FFA}). The latter is plotted in Fig.~\ref{Fig_cartoon} by the black dashed line.

(i) For extremely low frequencies $\omega={\mathcal O}(N^{-1})$ (equivalently, short times $\tau={\mathcal O}(N^{-1})$) belonging to the first scaling regime [${\rm (I)}$], the two quantities are seen to {\it coincide}
\begin{eqnarray} \label{KS-1}
    {\rm (universal)}\;\;K^{\rm{(-1)}}(T) = {\rm (universal)}\;\;{\mathcal S}^{\rm{(-1)}}(\Omega)\mid_{\Omega=2\pi T},
\end{eqnarray}
see Eqs.~(\ref{SN-1st}) and (\ref{K-infrared}). The {\it universal} behavior of both spectral indicators in the domain ${\rm (I)}$ is illustrated in Fig.~\ref{Fig_cartoon} arranged for three different level spacing distributions specified by Eq.~(\ref{LSD-3}).

(ii) In the second scaling regime [${\rm (II)}$], characterized by intermediately low frequencies $\omega={\mathcal O}(N^{-1/2})$ (equivalently, $\tau={\mathcal O}(N^{-1/2})$), the limiting curve for the form-factor starts to deviate from the one for the product $\omega^2 S_N(\omega)\mid_{\omega=2\pi\tau}$, in concert with the analytical analysis,
\begin{eqnarray}
    {\rm (universal)}\;\;K^{\rm{(-1/2)}}(\mathcal{T}) \neq {\rm (universal)}\;\;{\mathcal S}^{\rm{(-1/2)}}(\tilde\Omega)\mid_{\tilde\Omega=2\pi {\mathcal T}}=2\sigma^2,
\end{eqnarray}
compare Eq.~(\ref{SN-2nd}) taken at $\alpha=1/2$ with Eq.~(\ref{K-interm}). While the product ${\mathcal S}^{\rm{(-1/2)}}(\tilde\Omega)$ is a constant throughout the entire domain ${\rm (II)}$, the form-factor is described by the universal function Eq.~(\ref{K-interm}) irrespective of a particular form of the level spacing distribution; the relative deviation between the two limiting curves reaches its maximum ($=2$) at the borderline between the regimes ${\rm (II)}$ and ${\rm (III)}$, in concert with the earlier conclusion of Ref.~\cite{ROK-2017}. How fast this factor of $2$ is approached depends only on the value of $\sigma^2$, as described by Eq.~(\ref{K-interm}). Hence, the relation Eq.~(\ref{FFA}) is clearly violated in the second scaling regime, apart from a single point at the border between the regimes ${\rm (I)}$ and ${\rm (II)}$ as stated below Eq.~(\ref{K3T}).

(iii) In the third scaling regime [${\rm (III)}$] emerging for $\omega = {\mathcal O}(N^0)$ (equivalently, $\tau = {\mathcal O}(N^0)$) the two limiting curves depart incurably from each other: while the product $\lim_{N\rightarrow \infty}\omega^2 S_N(\omega)$, shown by the dashed black line, follows the {\it universal} law Eq.~(\ref{SN-3rd}), the form-factor displays a {\it non-universal} behavior strongly depending on the particular form of level spacing distribution as highlighted by solid red, green and blue curves, see also Eq.~(\ref{K-tau-3}),
\begin{eqnarray}
    {\rm (nonuniversal)}\;\;K^{\rm{(0)}}(\tau) \neq {\rm (universal)}\;\; {\mathcal S}^{\rm{(0)}}(\omega)\mid_{\omega=2\pi \tau}.
\end{eqnarray}
Hence, the two spectral statistics -- the form-factor and the power spectrum -- {\it cannot} be reduced to each other for any finite frequency $0<\omega <\pi$ as $N \rightarrow \infty$.
\newline\newline\noindent
{\it Conclusion.}---Detailed analytical and numerical analysis, performed for eigenlevel sequences with uncorrelated, identically distributed level spacings, leads us to conclude that the spectral form-factor and the power spectrum are generically {\it two distinct statistical indicators}. This motivates us to revisit the problem of calculating the power spectrum for a variety of physically relevant eigenlevel sequences {\it beyond} the form-factor approximation. In the rest of the paper, this program, initiated in our previous publication \cite{ROK-2017}, will be pursued for (a) generic eigenlevel sequences possessing stationary level spacings and (b) eigenlevel sequences drawn from a variant of the circular unitary ensemble of random matrices. The latter case is of special interest as its $N\rightarrow \infty$ limit belongs to the spectral universality class shared by a large class of quantum systems with completely chaotic classical dynamics and broken time-reversal symmetry.

\section{Main results and discussion}

In this Section, we collect and discuss the major concepts and results of our work. Throughout the paper, we shall deal with eigenlevel sequences possessing {\it stationary level spacings} as defined below.

\begin{definition}\label{def-stationary} Consider an ordered sequence of (unfolded) eigenlevels $\{0 \le \varepsilon_1 \le \cdots \le \varepsilon_N\}$ with $N \in \mathbb{N}$. Let $\{s_1, \cdots, s_N \}$ be the sequence of spacings between consecutive eigenlevels such that $s_\ell = \varepsilon_\ell - \varepsilon_{\ell-1}$ with $\ell=1,\dots,N$ and $\varepsilon_0=0$. The sequence of level spacings is said to be {\it stationary} if (i) the average spacing
\begin{eqnarray}\label{skd}
\langle s_\ell \rangle = \Delta
\end{eqnarray}
is independent of $\ell=1,\dots,N$ and (ii) the covariance matrix of {\it spacings} is of the Toeplitz type:
\begin{eqnarray}\label{toep}
    {\rm cov}(s_\ell, s_m) =  I_{|\ell-m|} - \Delta^2
\end{eqnarray}
for all $\ell,m=1,\dots,N$. Here, $I_n$ is a function defined for non-negative integers $n$.
\hfill $\blacksquare$
\end{definition}
\noindent\par
\begin{remark}
While stationarity of level spacings is believed to emerge after unfolding procedure in the limit $N\rightarrow \infty$, see Ref.~\cite{BLS-2001}, it is not uncommon to observe stationarity even for {\it finite} eigenlevel sequences. Two paradigmatic examples of {\it finite}-$N$ eigenlevel sequences with stationary spacings include (i) a set of uncorrelated identically distributed eigenlevels \cite{RK-2019} mimicking quantum systems with integrable classical dynamics and (ii) eigenlevels drawn from the `tuned' circular ensembles of random matrices appearing in the random matrix theory approach to quantum systems with completely chaotic classical dynamics, see Section~\ref{PS-RMT-exact}.
\hfill $\blacksquare$
\end{remark}

\subsection{Main results}
{\bf First result.}---For {\it generic} eigenlevel sequences, the power spectrum Eq.~(\ref{ps-def}) is determined by {\it both} diagonal and off-diagonal elements of the covariance matrix $\langle \delta\varepsilon_\ell \delta\varepsilon_m \rangle$. In the important case of eigenlevel sequences {\it with stationary level spacings}, the power spectrum can solely be expressed in terms of {\it diagonal} elements $\langle \delta\varepsilon_\ell^2 \rangle$ of the covariance matrix. The Theorem \ref{PS-stationary-main} below, establishes an exact relation between the power spectrum (see Definition~\ref{def-01}) and a generating function of variances of {\it ordered} eigenvalues.

\begin{theorem}[First master formula]\label{PS-stationary-main}
  Let $N \in \mathbb{N}$ and $0\le \omega \le \pi$.  The power spectrum for an eigenlevel sequence $\{0\le \varepsilon_1 \le \cdots \le \varepsilon_N\}$ with stationary spacings equals
\begin{eqnarray}\label{smd-sum}
    S_N(\omega) = \frac{1}{N \Delta^2} {\rm Re} \left(  z \frac{\partial}{\partial z} - N - \frac{1-z^{-N}}{1-z}\right)
        \sum_{\ell=1}^N {\rm var}[\varepsilon_\ell]\, z^\ell,
\end{eqnarray}
where $z=e^{i \omega}$, $\Delta$ is the mean level spacing, and
\begin{eqnarray}
{\rm var}[\varepsilon_\ell] = \langle \delta\varepsilon_\ell^2 \rangle.
\end{eqnarray}
\end{theorem}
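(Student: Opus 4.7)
My plan is to exploit stationarity twice: once to reduce covariances of eigenlevels to variances of eigenlevels, and once to rewrite the double sum defining $S_N(\omega)$ as an operator acting on the generating function $V(z)=\sum_{\ell=1}^{N}\mathrm{var}[\varepsilon_\ell]\,z^{\ell}$.

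First I would derive the key identity
\begin{equation*}
\mathrm{cov}(\varepsilon_\ell,\varepsilon_m)=\tfrac12\bigl(\mathrm{var}[\varepsilon_\ell]+\mathrm{var}[\varepsilon_m]-\mathrm{var}[\varepsilon_m-\varepsilon_\ell]\bigr)
\end{equation*}
from the polarization identity, and then observe that for $\ell\le m$, stationarity of spacings (Definition~\ref{def-stationary}) makes $\varepsilon_m-\varepsilon_\ell=\sum_{j=\ell+1}^{m}s_j$ share the same second cumulant with $\sum_{j=1}^{m-\ell}s_j=\varepsilon_{m-\ell}$, because the Toeplitz structure of $\mathrm{cov}(s_j,s_k)$ is translation invariant. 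Setting $v_\ell:=\mathrm{var}[\varepsilon_\ell]$ and $v_0=0$, this yields the clean relation $\mathrm{cov}(\varepsilon_\ell,\varepsilon_m)=\tfrac12(v_\ell+v_m-v_{|\ell-m|})$, valid for all $\ell,m$.

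Next I would substitute this into Definition~\ref{def-01} and split the double sum into three pieces. The pieces involving $v_\ell$ and $v_m$ factorize as $V(z)\sum_{m=1}^{N}z^{-m}$ and its complex conjugate, whose real parts coincide. Using $\sum_{m=1}^{N}z^{-m}=-\frac{1-z^{-N}}{1-z}$ on the unit circle, these two pieces combine into $\mathrm{Re}\bigl[-\frac{1-z^{-N}}{1-z}V(z)\bigr]$. The remaining piece $\sum_{\ell,m}v_{|\ell-m|}z^{\ell-m}$ is an even, real Toeplitz-type sum that reorganizes (via the change of index $n=\ell-m$) into $2\sum_{n=1}^{N-1}(N-n)v_n\cos(n\omega)$. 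The main algebraic check is then to recognize
\begin{equation*}
-\sum_{n=1}^{N-1}(N-n)v_n\cos(n\omega)=\mathrm{Re}\Bigl[(z\partial_z-N)V(z)\Bigr],
\end{equation*}
which follows immediately since the $\ell=N$ term in $\sum_\ell(\ell-N)v_\ell z^\ell$ vanishes, reducing the range to $\ell\le N-1$. Assembling the three contributions reproduces the operator $z\partial_z-N-\frac{1-z^{-N}}{1-z}$ acting on $V(z)$, and division by $N\Delta^{2}$ completes the proof.

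The only delicate point is bookkeeping: two of the three double sums contribute cross terms of the form $V(z)\sum_m z^{-m}$ whose expansions superficially reintroduce a $-NV(z)$ contribution, so one must carefully track how the $n=0$ terms (which vanish because $v_0=0$) and the $\ell=N$ boundary terms conspire to leave exactly the stated operator. Once this is organized, everything collapses to an elementary identity between Laurent polynomials in $z=e^{i\omega}$, and the conclusion holds for every $N\in\mathbb{N}$ and $0\le\omega\le\pi$.
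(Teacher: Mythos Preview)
Your proposal is correct and follows essentially the same route as the paper: the paper's proof invokes Lemma~\ref{Lemma-stat} to obtain precisely your identity $\langle\delta\varepsilon_\ell\delta\varepsilon_m\rangle=\tfrac12(v_\ell+v_m-v_{|\ell-m|})$, then substitutes it into Definition~\ref{def-01} and ``reduces the number of summations'' without further comment. Your write-up simply makes that reduction explicit, and the algebraic bookkeeping you outline (the factorization of the $v_\ell$, $v_m$ pieces into $\mathrm{Re}\bigl[-\tfrac{1-z^{-N}}{1-z}V(z)\bigr]$ and the identification of the Toeplitz piece with $\mathrm{Re}[(z\partial_z-N)V(z)]$) is correct.
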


For the proof, the reader is referred to Section~\ref{Th23-proof}.
\noindent\newline\newline
{\bf Second result.}---Yet another useful representation -- the second master formula -- establishes an exact representation of the power spectrum in terms of a generating function
of probabilities $E_N(\ell;\epsilon)$ to observe exactly $\ell$ eigenlevels {\it below} the energy $\varepsilon$,
\begin{eqnarray}\fl \qquad \label{EML}
    E_N(\ell;\varepsilon) = \frac{N!}{\ell! (N-\ell)!}\left(\prod_{j=1}^\ell \int_{0}^{\varepsilon} d\epsilon_j\right) \left(\prod_{j=\ell+1}^N \int_{\varepsilon}^\infty d\epsilon_j\right)
    \, P_N(\epsilon_1,\dots,\epsilon_N).
\end{eqnarray}
Here, $P_N(\epsilon_1,\dots,\epsilon_N)$ is the joint probability density (JPDF) of $N$ unordered eigenlevels taken from a positive definite spectrum; it is assumed to be symmetric under a permutation of its arguments. Such an alternative albeit equivalent representation of the power spectrum will be central to the spectral analysis of quantum chaotic systems.

\begin{theorem}[Second master formula]\label{Th-2}
    Let $N \in \mathbb{N}$ and $0\le \omega \le \pi$, and let $\Phi_N(\varepsilon;\zeta)$ be the generating function
  \begin{eqnarray} \label{ps-gf}
    \Phi_N(\varepsilon;\zeta) = \sum_{\ell=0}^N (1-\zeta)^\ell E_N(\ell;\varepsilon)
  \end{eqnarray}
  of the probabilities defined in Eq.~(\ref{EML}). The power spectrum, Definition~\ref{def-01}, for an eigenlevel sequence with stationary spacings equals
\begin{eqnarray}\label{ps-2}\fl
    S_N(\omega) = \frac{2}{N \Delta^2} {\rm Re} \left(  z \frac{\partial}{\partial z} - N - \frac{1-z^{-N}}{1-z}\right)
        \frac{z}{1-z} \int_0^\infty d\epsilon \,\epsilon \left[
            \Phi_N(\epsilon;1-z) - z^N
        \right] - \tilde{S}_N(\omega),\nonumber\\
        {}
\end{eqnarray}
where $z=1-\zeta = e^{i\omega}$, $\Delta$ is the mean level spacing, and
\begin{eqnarray}\label{ps-tilde}
    \tilde{S}_N(\omega) = \frac{1}{N} {\rm Re} \left(  z \frac{\partial}{\partial z} - N - \frac{1-z^{-N}}{1-z}\right)
        \sum_{\ell=1}^N \ell^2 z^\ell \nonumber \\
         \qquad\qquad = \frac{1}{N} \left|
            \frac{1-(N+1)z^N + N z^{N+1}}{(1-z)^2}
        \right|^2.
\end{eqnarray}
\end{theorem}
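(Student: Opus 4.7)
The plan is to deduce the second master formula from Theorem~\ref{PS-stationary-main} by rewriting $\sum_{\ell=1}^N \mathrm{var}[\varepsilon_\ell]\, z^\ell$ in terms of the gap-probability generating function $\Phi_N(\varepsilon; 1-z)$. Stationarity forces $\langle \varepsilon_\ell\rangle = \sum_{j=1}^\ell \langle s_j\rangle = \ell \Delta$, so $\mathrm{var}[\varepsilon_\ell] = \langle \varepsilon_\ell^2\rangle - \ell^2 \Delta^2$; the subtracted $\ell^2 \Delta^2$ is precisely what will produce the operator form of $\tilde S_N(\omega)$ on the middle line of Eq.~(\ref{ps-tilde}), and the task reduces to handling the second moment $\langle \varepsilon_\ell^2\rangle$.

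For this I would invoke the classical relation $p_\ell(\varepsilon) = -\partial_\varepsilon \sum_{k=0}^{\ell-1} E_N(k;\varepsilon)$ between the marginal density of the $\ell$th ordered eigenlevel and the gap probabilities, and integrate by parts in $\langle \varepsilon_\ell^2 \rangle = \int_0^\infty \varepsilon^2\, p_\ell(\varepsilon)\, d\varepsilon$. The boundary terms vanish (at the origin because $\varepsilon^2\to 0$ while $\sum_{k=0}^{\ell-1} E_N(k;0)=1$ is bounded; at infinity by rapid decay), leaving $\langle \varepsilon_\ell^2\rangle = 2 \int_0^\infty \varepsilon \sum_{k=0}^{\ell-1} E_N(k;\varepsilon)\, d\varepsilon$. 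Multiplying by $z^\ell$, summing over $\ell$, and swapping the order of summation (for fixed $k$, $\ell$ runs from $k+1$ to $N$, producing a finite geometric series) gives
\[ \sum_{\ell=1}^N \langle \varepsilon_\ell^2\rangle\, z^\ell = \frac{2z}{1-z} \int_0^\infty \varepsilon \left[ \sum_{k=0}^{N-1} z^k E_N(k;\varepsilon) - z^N \sum_{k=0}^{N-1} E_N(k;\varepsilon) \right] d\varepsilon. \]
Using the probability normalization $\sum_{k=0}^N E_N(k;\varepsilon) = 1$, the two contributions involving $E_N(N;\varepsilon)$ cancel and the bracket collapses cleanly to $\Phi_N(\varepsilon; 1-z) - z^N$. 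Since $E_N(k;\varepsilon) \to \delta_{k,N}$ as $\varepsilon\to\infty$, one has $\Phi_N(\varepsilon;1-z)\to z^N$, so the $-z^N$ subtraction is exactly what guarantees the $\varepsilon$-integral converges. Inserting the result into Theorem~\ref{PS-stationary-main} then yields Eq.~(\ref{ps-2}) together with $\tilde S_N(\omega)$ in its operator form.

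It remains to verify the closed-form expression on the last line of Eq.~(\ref{ps-tilde}). A direct route differentiates the geometric series to get $\sum_{\ell=1}^N \ell z^\ell = z\,[1-(N+1)z^N + N z^{N+1}]/(1-z)^2$, applies $z\partial_z$ again to obtain $\sum_\ell \ell^2 z^\ell$, executes the operator $z\partial_z - N - (1-z^{-N})/(1-z)$ algebraically, and uses $|z|=1$ to absorb the extra prefactor into the squared modulus. A conceptually shorter argument is to apply Theorem~\ref{PS-stationary-main} to the degenerate stationary model $\varepsilon_\ell = \ell X$ with $\langle X\rangle = \Delta$ and $\mathrm{var}[X] = \Delta^2$ (perfectly correlated spacings $s_\ell \equiv X$, trivially Toeplitz with $I_n \equiv 2\Delta^2$): here $\mathrm{var}[\varepsilon_\ell] = \Delta^2 \ell^2$, while from Definition~\ref{def-01} the power spectrum is manifestly $\frac{1}{N}|\sum_\ell \ell z^\ell|^2$; equating the two expressions identifies the operator action with the closed form. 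The main obstacle is the bookkeeping around the top index $k=N$ during the double-sum interchange: without both the explicit $-z^N$ subtraction inside the bracket and the normalization $\sum_k E_N(k;\varepsilon)=1$, the integrand would not be in $L^1$ and the cancellation producing $\Phi_N(\varepsilon;1-z) - z^N$ would fail; once this is arranged, the remaining manipulations are purely algebraic.
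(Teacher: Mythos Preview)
Your proof is correct and follows essentially the same route as the paper: split $\mathrm{var}[\varepsilon_\ell]=\langle\varepsilon_\ell^2\rangle-\ell^2\Delta^2$, invoke the relation $p_\ell(\varepsilon)=-\partial_\varepsilon\sum_{k<\ell}E_N(k;\varepsilon)$ (the paper's Lemma~\ref{Lemma-probs}), integrate by parts, and collapse the double sum into $\Phi_N-z^N$. The only cosmetic difference is ordering: the paper first sums $\sum_\ell z^\ell p_\ell(\varepsilon)$ into $-\frac{z}{1-z}\partial_\varepsilon[\Phi_N-z^N]$ and then integrates against $\varepsilon^2$, whereas you integrate by parts termwise and then sum; the cancellation around $k=N$ you flag is exactly what the paper handles by keeping the $-z^N$ inside the derivative before integrating. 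Your alternative verification of the closed form for $\tilde S_N(\omega)$ via the degenerate stationary model $\varepsilon_\ell=\ell X$ is a genuinely nice touch not present in the paper, which simply states the algebraic identity.
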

For the proof, the reader is referred to Section~\ref{Th-2-proof}.

\begin{remark}
  Notably, representations Eqs.~(\ref{ps-gf}) and (\ref{ps-2}) suggest that the power spectrum is determined by {\it spectral correlation functions of all orders}. Contrary to the spacing distribution, which is essentially determined by the gap formation probability \cite{M-2004} $E_N(0;\varepsilon)$, the power spectrum depends on the {\it entire set} of probabilities $E_N(\ell;\varepsilon)$ with $\ell=0,1,\dots,N$.
\hfill $\blacksquare$
\end{remark}
\noindent\newline
{\bf Third result.}---To study the power spectrum in quantum systems with broken time-reversal symmetry and completely chaotic classical dynamics, let us consider the {\it tuned circular unitary ensemble} (${\rm TCUE}_N$). Obtained from the traditional circular unitary ensemble ${\rm CUE}_{N+1}$ \cite{M-2004} by conditioning its lowest eigen-angle to stay at zero, the ${\rm TCUE}_N$ is defined by the joint probability density of $N$ eigen-angles $\{\theta_1,\dots,\theta_N\}$ of the form
\begin{equation}\label{T-CUE} \fl \qquad
P_N(\theta_1,\dots,\theta_N) = \frac{1}{(N+1)!} \prod_{1 \le i < j \le N}^{}
    \left| e^{i\theta_i} - e^{i\theta_j} \right|^2
    \prod_{j=1}^{N} \left| 1 - e^{i\theta_j}\right|^2
\end{equation}
whose normalization is fixed by
\begin{eqnarray}\label{TCUE-norm}
\prod_{j=1}^{N}\int_0^{2\pi} \frac{d\theta_j}{2\pi}\, P_{N}(\theta_1,\dots,\theta_N)=1.
\end{eqnarray}
Such a seemingly minor tuning of ${\rm CUE}_{N+1}$ to ${\rm TCUE}_N$ induces stationarity of level spacings in ${\rm TCUE}_N$ for any $N \in {\mathbb N}$, see Corollary \ref{corr-theta-k} for the proof. We note in passing that traditional circular unitary ensemble lacks the stationarity property.

For the ${\rm TCUE}_N$, a general Definition~\ref{def-01} of the power spectrum can be adjusted as follows.

\begin{definition}
    Let $\{\theta_1 \le \cdots \le \theta_N\}$ be fluctuating ordered eigen-angles drawn from the ${\rm TCUE}_N$, $N \in {\mathbb N}$, with the mean level spacing $\Delta$ and let $\langle \delta\theta_\ell \delta\theta_m \rangle$ be the covariance matrix of eigen-angle displacements $\delta\theta_\ell = \theta_\ell - \langle \theta_\ell\rangle$ from their mean $\langle \theta_\ell\rangle$. A Fourier transform of the covariance matrix
\begin{eqnarray}\label{ps-def-tcue}
    S_N(\omega) = \frac{1}{N \Delta^2} \sum_{\ell=1}^N \sum_{m=1}^N \langle \delta\theta_\ell \delta\theta_m \rangle\, e^{i\omega (\ell-m)}, \quad \omega \in {\mathbb R}
\end{eqnarray}
is called the power spectrum of the ${\rm TCUE}_N$. Here, the angular brackets denote average with respect to the JPDF Eq.~(\ref{T-CUE}).
\hfill $\blacksquare$
\end{definition}

\begin{theorem}[Power spectrum in ${\rm TCUE}_N$]\label{Th-3}
    Let $\{\theta_1 \le \cdots \le \theta_N\}$ be fluctuating ordered eigen-angles drawn from the ${\rm TCUE}_N$.
    Then, for any $N \in {\mathbb N}$ and all $0 \le \omega \le \pi$, the power spectrum admits exact representation
\begin{eqnarray} \fl \label{ps-tcue-1}
    S_N(\omega) =  \frac{(N+1)^2}{\pi N} {\rm Re} \left(  z \frac{\partial}{\partial z} - N - \frac{1-z^{-N}}{1-z}\right)
        \frac{z}{1-z} \int_0^{2\pi} \frac{d\varphi}{2\pi} \,\varphi \, \Phi_N(\varphi;1-z) - \dbtilde{S}_N(\omega),\nonumber\\
{}
\end{eqnarray}
where
\begin{eqnarray} \label{ps-tcue-3}
    \dbtilde{S}_N(\omega) = \frac{1}{N} \left|
            \frac{1-(N+1)z^N + N z^{N+1}}{(1-z)^2}
        \right|^2 - \frac{(N+1)^2}{N} \frac{1}{|1-z|^2}
\end{eqnarray}
and
\begin{equation} \label{phin}
    \Phi_N(\varphi;\zeta) = \exp \left(
            -\int_{\cot(\varphi/2)}^{\infty} \frac{dt}{1+t^2} \left( \tilde{\sigma}_N(t;\zeta) + t \right)
        \right).
\end{equation}
Here, $z=1-\zeta=e^{i\omega}$ whilst the function $\tilde{\sigma}_N(t;\zeta)$ is a solution to the $\sigma$-Painlev\'e VI equation
\begin{eqnarray} \label{pvi} \fl
    \qquad \left( (1+t^2)\,\tilde{\sigma}_N^{\prime\prime} \right)^2 + 4 \tilde{\sigma}_N^\prime (\tilde{\sigma}_N - t \tilde{\sigma}_N^\prime)^2
    + 4 (\tilde{\sigma}_N^\prime+1)^2 \left(
        \tilde{\sigma}_N^\prime + (N+1)^2
    \right) = 0
\end{eqnarray}
satisfying the boundary condition
\begin{eqnarray} \label{pvi-bc}
    \tilde{\sigma}_N(t;\zeta) = -t + \frac{N(N+1)(N+2)}{3\pi t^2} \zeta + {\mathcal O}(t^{-4})
\end{eqnarray}
as $t\rightarrow \infty$.
\end{theorem}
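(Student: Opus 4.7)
The plan is to derive (\ref{ps-tcue-1}) by adapting the second master formula (Theorem~\ref{Th-2}) to the circle, and then to establish the Painlev\'e~VI representation (\ref{phin})--(\ref{pvi-bc}) for the generating function $\Phi_N$.

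First I would check the hypotheses of Theorem~\ref{Th-2}. Stationarity of the level spacings of ${\rm TCUE}_N$ is the content of the announced Corollary~\ref{corr-theta-k}, and the mean spacing is $\Delta=2\pi/(N+1)$ because pinning one of $N+1$ eigen-angles of ${\rm CUE}_{N+1}$ at $\theta=0$ leaves the remaining $N$ angles partitioning $[0,2\pi]$. Transferring (\ref{ps-2}) to the circle is a matter of replacing $\int_0^\infty d\epsilon$ by $\int_0^{2\pi}d\varphi$; adopting the normalized measure $d\varphi/(2\pi)$ familiar from (\ref{TCUE-norm}) and compensating by the factor $2\pi$ in the prefactor yields $(2/(N\Delta^2))\cdot 2\pi = (N+1)^2/(\pi N)$, which is the leading coefficient of (\ref{ps-tcue-1}). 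Because $\Phi_N(2\pi;\zeta)=(1-\zeta)^N=z^N$ identically, the $-z^N$ subtraction in the integrand of (\ref{ps-2}) acts on the circle only as a boundary correction; carrying out the elementary integral $\int_0^{2\pi}\varphi\,(d\varphi/(2\pi))=\pi$ and acting on the resulting $\pi z^N$ with the operator $(z\partial_z - N - (1-z^{-N})/(1-z))\,z/(1-z)$ produces the additional term $-(N+1)^2/(N|1-z|^2)$ that converts $\tilde S_N$ into $\dbtilde S_N$.

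The heart of the argument is the identification (\ref{phin}). Substituting (\ref{T-CUE}) into (\ref{ps-gf}) expresses $\Phi_N(\varphi;\zeta)$ as a Toeplitz determinant on the unit circle with symbol
\begin{equation}
w(\theta;\varphi,\zeta)=|1-e^{i\theta}|^2\bigl(1-\zeta\,\chi_{(0,\varphi)}(\theta)\bigr),
\end{equation}
a circular Jacobi ensemble carrying a Fisher--Hartwig singularity of exponent~$2$ at $\theta=0$ together with a jump at $\theta=\varphi$. By the Forrester--Witte $\tau$-function description of such Toeplitz determinants---equivalently, by the Jimbo--Miwa--Okamoto isomonodromic construction on the unit circle---its logarithmic derivative with respect to the jump position satisfies a $\sigma$-Painlev\'e~VI equation after the M\"obius reparametrization $t=\cot(\varphi/2)$, which conjugates $(0,2\pi)$ to $\mathbb{R}$ with $\varphi\to 0^+$ corresponding to $t\to+\infty$. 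Defining $\tilde\sigma_N(t;\zeta)$ via the integrand of (\ref{phin}), the general $\sigma$-PVI then specializes to (\ref{pvi}): the coefficients $1$, $4$, $4$ are fixed by the isomonodromic normalization, while the exponent~$2$ of the Fisher--Hartwig weight combined with the matrix dimension $N$ fixes the last coefficient at $(N+1)^2$.

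To pin down the boundary behaviour (\ref{pvi-bc}) at $t\to\infty$, I would expand $\Phi_N(\varphi;\zeta)=1-\zeta\int_0^\varphi\rho_N(\theta)\,d\theta+O(\zeta^2)$, where $\rho_N$ is the one-point density of ${\rm TCUE}_N$. A Christoffel--Darboux computation, using $\rho_N(\theta) = R_2^{\,{\rm CUE}_{N+1}}(0,\theta)/\rho^{\,{\rm CUE}_{N+1}}(0)$ with the ${\rm CUE}_{N+1}$ sine-type kernel, yields $\rho_N(\theta) = N(N+1)(N+2)\theta^2/(24\pi) + O(\theta^3)$ near the pinned point; substituting into the inverse of (\ref{phin}) with $t\sim 2/\varphi$ reproduces precisely the $\zeta/t^2$ coefficient in (\ref{pvi-bc}). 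The main obstacle I anticipate is the bookkeeping of parameters in $\sigma$-PVI across the various normalizations used in the literature: once the Fisher--Hartwig exponents of the symbol are translated into the Jimbo--Miwa parameters, the stated coefficients follow by specialization, but tracking signs and the interplay with the M\"obius variable $t=\cot(\varphi/2)$ requires care.
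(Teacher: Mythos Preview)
Your proposal is correct and follows essentially the same route as the paper: invoke stationarity (Corollary~\ref{corr-theta-k}) to transfer the second master formula to the circle with $\Delta=2\pi/(N+1)$, absorb the $-z^N$ subtraction into the passage from $\tilde S_N$ to $\dbtilde S_N$, then pass to the variable $t=\cot(\varphi/2)$ and cite Forrester--Witte for the $\sigma$-Painlev\'e~VI structure of the resulting generating function, with the boundary condition fixed by the small-$\varphi$ expansion of $\Phi_N$ via $R_{1,N}''(0)=N(N+1)(N+2)/6$. The only cosmetic difference is that the paper implements the M\"obius step as the Cayley change of variables $e^{i\theta_j}=(i\lambda_j-1)/(i\lambda_j+1)$ directly inside the multidimensional integral (rather than phrasing it through the Toeplitz determinant), but this leads to the same Forrester--Witte identification.
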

For the proof of Theorem \ref{Th-3}, the reader is referred to Section~\ref{Th-3-proof}.

\begin{remark}
   Theorem \ref{Th-3} provides an exact RMT solution for the power spectrum in the ${\rm TCUE}_N$. Alternatively, but equivalently, the finite-$N$
    power spectrum can be expressed in terms of a Fredholm determinant (Section~\ref{Fredholm-sec}), Toeplitz determinant (Section~\ref{Toeplitz-sec}) and discrete Painlev\'e~V (${\rm dP_V}$) equations (Appendix~\ref{B-1}). While the Toeplitz representation is beneficial for performing a large-$N$ analysis of the power spectrum, the ${\rm dP_V}$ formulation is particularly useful for efficient numerical evaluation of the power spectrum for relatively large values of $N$.
\hfill $\blacksquare$
\end{remark}
\noindent\newline
{\bf Fourth (main) result.}---The most remarkable feature of the random matrix theory is its ability to predict universal statistical behavior of quantum systems. In this context, a large-$N$ limit of the power spectrum in the ${\rm TCUE}_N$ is expected to furnish a universal, parameter-free law, $S_\infty(\omega) = \lim_{N\rightarrow \infty} S_N(\omega)$, for the power spectrum. Its functional form is given in the Theorem~\ref{Th-4} below.

\begin{theorem}[Universal law]\label{Th-4}
For $0 < \omega < \pi$, the limit $S_\infty(\omega) = \lim_{N\rightarrow \infty} S_N(\omega)$ exists and equals
  \begin{eqnarray} \label{PS-exact} \fl
    S_\infty(\omega) = {\mathcal A}(\tilde{\omega}) \Bigg\{{\rm Im} \int_{0}^{\infty} \frac{d\lambda}{2\pi} \, \lambda^{1-2\tilde{\omega}^2} \, e^{i\tilde{\omega} \lambda}
    \nonumber\\
    \times
    \left[
        \exp \left(
                    - \int_{\lambda}^{\infty} \frac{dt}{t} \left( \sigma_1(t;\tilde\omega) - i \tilde{\omega} t + 2\tilde{\omega}^2\right)
            \right) -1
        \right] + {\mathcal B}(\tilde{\omega})\Bigg\}, \nonumber\\
    {}
\end{eqnarray}
where $\tilde{\omega} = \omega/2\pi$ is a rescaled frequency, and the functions ${\mathcal A}(\tilde{\omega})$ and ${\mathcal B}(\tilde{\omega})$ are defined as
\begin{eqnarray} \label{Aw-def}
    {\mathcal A}(\tilde{\omega}) = \frac{1}{2\pi} \frac{\prod_{j=1}^2 G(j+\tilde{\omega}) G(j-\tilde{\omega})}{\sin(\pi \tilde{\omega})}, \\
    \label{Bw-def}
    {\mathcal B}(\tilde{\omega}) = \frac{1}{2\pi} \sin(\pi \tilde{\omega}^2)\, \tilde{\omega}^{2\tilde{\omega}^2-2}\, \Gamma(2-2\tilde{\omega}^2).
\end{eqnarray}
Here, $G$ is the Barnes' $G$-function, $\Gamma$ is the Gamma function, whilst $\sigma_1(t;\tilde\omega)= \sigma_1(t)$ is the Painlev\'e V transcendent satisfying Eq.~(\ref{PV-family}) with $\nu=1$ and fulfilling the boundary conditions
\begin{eqnarray}\label{bc-s1-infty}
\sigma_1(t)=i \tilde{\omega} t-2\tilde{\omega}^2 - \frac{i t \gamma(t)}{1+\gamma(t)}+\mathcal{O}(t^{-1+2\tilde{\omega}}), && t\rightarrow\infty, \\
\label{bc-s1-zero}
\sigma_1(t) =\mathcal{O}(t \ln t), && t\rightarrow 0,
\end{eqnarray}
with $\gamma(t)$ being defined by Eq.~(\ref{eq:gamma}).
\end{theorem}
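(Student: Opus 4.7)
The plan is to start from the exact finite-$N$ representation in Theorem~\ref{Th-3}, namely Eq.~(\ref{ps-tcue-1}), and carry out a controlled asymptotic analysis of each ingredient as $N\to\infty$ with $\omega\in(0,\pi)$ fixed. The rescaled frequency $\tilde\omega=\omega/2\pi$ emerges naturally as the only surviving parameter because bulk eigen-angles of ${\rm TCUE}_N$ live on the scale $1/N$.

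The first step is a bulk rescaling $\varphi=\lambda/N$. Under this change of variables the lower endpoint $\cot(\varphi/2)$ appearing in the exponent of (\ref{phin}) behaves like $2N/\lambda$, i.e.\ it probes the large-$t$ regime of the $\sigma$-Painlev\'e~VI equation (\ref{pvi}). A standard confluence argument shows that, after substituting $t=2N/\lambda$ and sending $N\to\infty$, Eq.~(\ref{pvi}) degenerates to the Painlev\'e~V equation (\ref{PV-family}) with $\nu=1$; the factor $(N+1)^2$ multiplying the leading $(\tilde\sigma_N^\prime+1)^2$ term in (\ref{pvi}) is what forces this value of $\nu$. The boundary condition (\ref{pvi-bc}) with $\zeta=1-e^{i\omega}$ then translates into the large-$\lambda$ asymptotic (\ref{bc-s1-infty}) for $\sigma_1(\lambda;\tilde\omega)$, the $i\tilde\omega$ coming from the small-$\omega/N$ expansion of $1-e^{i\omega}$. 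Consequently $\Phi_N(\lambda/N;1-e^{i\omega})$ converges to the exponential factor built from $\sigma_1$ that appears inside the bracket of (\ref{PS-exact}).

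The second step is to dispose of the divergent prefactors. The differential operator $z\partial_z-N-(1-z^{-N})/(1-z)$ applied to the integral in (\ref{ps-tcue-1}) produces terms that grow with $N$, and these must cancel against $\dbtilde{S}_N(\omega)$ from (\ref{ps-tcue-3}). The cleanest way to organize this cancellation is via the Toeplitz-determinant reformulation of $\Phi_N$ alluded to in the Remark after Theorem~\ref{Th-3}: applying Fisher–Hartwig asymptotics to that Toeplitz determinant, with a single jump-type singularity of strength $\tilde\omega$, produces the Barnes $G$-function prefactor $\mathcal{A}(\tilde\omega)$ in (\ref{Aw-def}). The additive constant $\mathcal{B}(\tilde\omega)$ in (\ref{Bw-def}) arises from the subtraction of $-1$ inside the square brackets of (\ref{PS-exact}), which is forced by the need to kill the oscillatory leading part $i\tilde\omega\lambda-2\tilde\omega^2$ of $\sigma_1$; integrating the subtracted piece explicitly against the weight $\lambda^{1-2\tilde\omega^2}e^{i\tilde\omega\lambda}$ yields $\Gamma(2-2\tilde\omega^2)$, reproducing (\ref{Bw-def}). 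Small-$\lambda$ integrability is guaranteed by (\ref{bc-s1-zero}): the exponential factor is bounded near the origin and the integrand behaves like $\lambda^{1-2\tilde\omega^2}$, integrable for $\tilde\omega<1/2$, i.e.\ throughout $0<\omega<\pi$.

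The main obstacle is the uniform control needed to interchange $\lim_{N\to\infty}$ with the integral $\int_0^{2\pi}d\varphi$. One must verify that the region $\varphi=\mathcal{O}(1)$ outside the bulk window contributes negligibly, that $\Phi_N(\lambda/N;1-e^{i\omega})$ converges to its limit uniformly on compact sets in $\lambda$, and that a dominated bound holds both near $\lambda=0$, where the $\sigma$-PVI transcendent crosses between its two asymptotic regimes, and near $\lambda=\infty$, where the oscillatory factor $e^{i\tilde\omega\lambda}$ provides only conditional convergence. Establishing these estimates, together with the Fisher–Hartwig input producing the Barnes $G$-function prefactor, is where the bulk of the analytic work resides; the Painlev\'e~VI $\to$ Painlev\'e~V degeneration itself is then essentially an ODE bookkeeping exercise, and the non-crossing of $\omega=0$ and $\omega=\pi$ explains why the universal law is stated on the open interval.
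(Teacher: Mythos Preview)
Your outline captures the right architecture---scale $\varphi=\lambda/N$, find a Painlev\'e~V in the limit, and let Fisher--Hartwig produce the Barnes $G$-functions---but it diverges from the paper's route and leaves two genuine gaps.

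First, the paper does \emph{not} obtain $\sigma_1$ by a direct $\sigma$-PVI\,$\to$\,$\sigma$-PV confluence of Eq.~(\ref{pvi}). Instead it abandons the Painlev\'e~VI representation entirely in favour of the Toeplitz form (Proposition~\ref{toep-prop}) and imports the Claeys--Krasovsky uniform asymptotics for Toeplitz determinants with two \emph{merging} Fisher--Hartwig singularities (one root-type, one root+jump; not a single jump as you wrote). Those asymptotics already come packaged with the Painlev\'e~V transcendent in the ``edge'' regime [Eq.~(\ref{Edge-2})] and the Barnes $G$-function in the ``bulk'' regime [Eq.~(\ref{Th-111})], together with the global connection formula Eq.~(\ref{global}) linking the two. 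This is precisely the uniform control you flag as the main obstacle; your confluence argument would give only pointwise convergence of $\tilde\sigma_N$ and no handle on uniformity in $\lambda$, so the interchange of limit and integral remains unproven in your scheme.

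Second, and more seriously, your ``must cancel against $\dbtilde{S}_N(\omega)$'' hides the actual mechanism. The operator $z\partial_z-N-\cdots$ acting on $\int_0^{2\pi}\varphi\,\Phi_N\,d\varphi$ produces, besides the wanted $\mathfrak{L}_1$ term, contributions proportional to the strongly oscillating factor $z^N=e^{i\omega N}$ coming from the \emph{right} edge $\varphi\sim 2\pi$. These do not vanish as $N\to\infty$; they are $\mathcal{O}(1)$. The paper kills them by (i) using the symmetry $\Phi_N(2\pi-\varphi;\zeta)=(1-\zeta)^N\overline{\Phi_N(\varphi;\zeta)}$ [Eq.~(\ref{phin-sym})] to relate right-edge to left-edge, and (ii) computing the auxiliary integral $I_{N,0}(\zeta)=N\int_0^{2\pi}\Phi_N\,d\varphi/2\pi$ \emph{exactly} for finite $N$ (Lemma~\ref{Lemma-IN0}), which fixes the right-edge amplitude to be $\mathfrak{R}_0(\zeta)=1/\bar\zeta$. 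Only with this exact identity do the $z^N$ pieces from the integral cancel the $z^N$ pieces of $\dbtilde{S}_N(\omega)$, see Eq.~(\ref{Snw}). Without it the pointwise limit $S_\infty(\omega)$ would not even exist. Your proposal neither mentions the right-edge contribution nor the exact lemma that disposes of it.
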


\begin{remark}
As a by-product of this Theorem, we have formulated a conjecture for a double integral identity involving a
fifth Painlev\'e transcendent. A mathematically-oriented reader is referred to Conjecture~\ref{conj}.
\hfill $\blacksquare$
\end{remark}

\begin{theorem}[Small-$\omega$ expansion]\label{Th-5}
  In the notation of Theorem \ref{Th-4}, the following expansion holds as $\omega\rightarrow 0$:
  \begin{eqnarray} \label{S-res-0}
    S_\infty (\omega) = \frac{1}{4\pi^2 \tilde\omega} + \frac{1}{2\pi^2} \tilde\omega \ln \tilde\omega +\frac{\tilde\omega}{12}
    + {\mathcal O}(\tilde\omega^2).
\end{eqnarray}
\end{theorem}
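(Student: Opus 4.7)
The starting point is the closed-form expression Eq.~(\ref{PS-exact}) for $S_\infty(\omega)$ furnished by Theorem~\ref{Th-4}. The plan is to expand each of the three ingredients -- the prefactor $\mathcal{A}(\tilde\omega)$, the constant $\mathcal{B}(\tilde\omega)$, and the double integral containing $\sigma_1(t;\tilde\omega)$ -- as $\tilde\omega\rightarrow 0$ to the order $\tilde\omega^2$ inside the braces, and then assemble the three contributions. The leading term $1/(4\pi^2\tilde\omega)$ is expected to emerge from the product $\mathcal{A}(\tilde\omega)\,\mathcal{B}(\tilde\omega)$, in agreement with the Dyson--Wilkinson prediction Eq.~(\ref{BM}) at $\beta=2$; the logarithmic piece $\tilde\omega\ln\tilde\omega/(2\pi^2)$ and the linear piece $\tilde\omega/12$ will come from the subleading expansion of $\mathcal{B}$ and from the integral.

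First I would expand $\mathcal{A}(\tilde\omega)$ using the standard Taylor series for $\ln G(1+z)$ and $\ln G(2+z)$ near $z=0$; the product $\prod_{j=1}^2 G(j+\tilde\omega)G(j-\tilde\omega)$ is even in $\tilde\omega$ with value $1$ at $\tilde\omega=0$, so together with $\sin(\pi\tilde\omega)=\pi\tilde\omega+\mathcal{O}(\tilde\omega^3)$ one obtains
\begin{equation}
    \mathcal{A}(\tilde\omega) = \frac{1}{2\pi^2\tilde\omega}\bigl(1 + c_1\tilde\omega^2 + \mathcal{O}(\tilde\omega^4)\bigr)
\end{equation}
for an explicit constant $c_1$ read off from the Barnes $G$ expansion. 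Next, using $\tilde\omega^{2\tilde\omega^2-2} = \tilde\omega^{-2}(1 + 2\tilde\omega^2\ln\tilde\omega+\mathcal{O}(\tilde\omega^4\ln^2\tilde\omega))$ together with $\sin(\pi\tilde\omega^2)=\pi\tilde\omega^2+\mathcal{O}(\tilde\omega^6)$ and $\Gamma(2-2\tilde\omega^2)=1-2(1-\gamma_E)\tilde\omega^2+\mathcal{O}(\tilde\omega^4)$, one gets
\begin{equation}
    \mathcal{B}(\tilde\omega) = \tfrac{1}{2} + \tilde\omega^2\ln\tilde\omega + \bigl(\text{explicit }\tilde\omega^2\text{ coefficient}\bigr)+\mathcal{O}(\tilde\omega^4\ln^2\tilde\omega).
\end{equation}
Multiplication by $\mathcal{A}(\tilde\omega)$ already reproduces the first two stated terms, $1/(4\pi^2\tilde\omega) + \tilde\omega\ln\tilde\omega/(2\pi^2)$, and pins down which constant remains to be extracted from the integral.

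The remaining task is to show that the integral in Eq.~(\ref{PS-exact}) contributes a finite, $\tilde\omega$-independent constant $I_0$ to the bracket at order $\tilde\omega^2$, so that $\mathcal{A}(\tilde\omega)\cdot\tilde\omega^2 I_0$ produces a pure $\tilde\omega$ term, whose coefficient together with the $\mathcal{A}\mathcal{B}$ contribution of order $\tilde\omega$ must sum to $1/12$. I would do this in two steps. First, set $\tilde\omega=0$ inside $\sigma_1(t;\tilde\omega)$; the boundary conditions Eqs.~(\ref{bc-s1-infty})--(\ref{bc-s1-zero}) reduce to those of the Jimbo--Miwa--Mori--Sato $\sigma$-Painlev\'e V with $\nu=1$ relevant to the CUE level spacing, so the inner exponent in Eq.~(\ref{PS-exact}) at $\tilde\omega=0$ becomes (a derivative of) the sine-kernel gap probability $E_2(0;\lambda/\pi)$. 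Writing out $\mathrm{Im}\,(e^{i\tilde\omega\lambda}\cdot F(\lambda))$ with $F$ regular at $\tilde\omega=0$, only the first Taylor coefficient in $\tilde\omega$ of the integrand survives as $\tilde\omega\rightarrow 0$, giving a single Dirichlet-type integral $I_0=\int_0^\infty (d\lambda/2\pi)\,\lambda^2\bigl[\text{(sine-kernel object)}-1\bigr]$, which I would evaluate using known sum rules for the sine-kernel cluster functions (e.g.\ $\int_0^\infty s(1-Y_2(s))ds = 1/12$ type identities available in \cite{M-2004,PF-book}).

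The hardest part, and the one that would occupy the bulk of the proof, is the uniform control of the integrand of Eq.~(\ref{PS-exact}) as $\tilde\omega\rightarrow 0$. Two regions are delicate: large $\lambda$, where the phase $e^{i\tilde\omega\lambda}$ begins to oscillate on the scale $\lambda\sim 1/\tilde\omega$ and where $\sigma_1(t;\tilde\omega)$ deviates from its $\tilde\omega=0$ profile, and small $\lambda$, where the $\lambda^{1-2\tilde\omega^2}$ weight and the $t\ln t$ behaviour in Eq.~(\ref{bc-s1-zero}) interact to generate the $\ln\tilde\omega$ contribution that must consistently combine with the one already identified from $\mathcal{B}$. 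I would handle this by splitting the $\lambda$-integral at an intermediate scale $\lambda\sim 1$, using the large-$t$ expansion Eq.~(\ref{bc-s1-infty}) (and its higher-order refinements) on $\lambda\gtrsim 1$ to show that the contribution beyond the leading $\tilde\omega^0$-integral is $\mathcal{O}(\tilde\omega^2)$ after taking the imaginary part, and using the small-$t$ expansion Eq.~(\ref{bc-s1-zero}) to certify integrability and to isolate the logarithmic subleading corrections. Combining these analytic estimates with the prefactor expansions then yields Eq.~(\ref{S-res-0}).
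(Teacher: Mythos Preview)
Your expansions of $\mathcal{A}(\tilde\omega)$ and $\mathcal{B}(\tilde\omega)$ are correct and match the paper. The gap is in your treatment of the integral $\Lambda(\tilde\omega)$.

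Your key step --- ``set $\tilde\omega=0$ inside $\sigma_1(t;\tilde\omega)$'' and identify the result with a JMMS sine-kernel object --- is wrong. The parameter $\tilde\omega$ enters $\sigma_1$ through $\zeta=1-e^{2\pi i\tilde\omega}$, and at $\tilde\omega=0$ one has $\zeta=0$, so the generating function $\Phi_N(\varphi;0)=\sum_\ell E_N(\ell;\varphi)=1$ becomes trivial. Correspondingly, $\sigma_1(t;0)\equiv 0$ (this also solves Eq.~(\ref{PV-family}) at $\nu=1$ with the stated boundary conditions, since $\gamma(t)\to 0$ as $\tilde\omega\to 0$). The sine-kernel gap probability lives at $\zeta=1$ and involves $\sigma_0$ ($\nu=0$), not $\nu=1$; the cluster-function sum rules you invoke are not relevant here. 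With $\sigma_1(t;0)=0$ the square bracket in Eq.~(\ref{PS-exact}) vanishes identically at $\tilde\omega=0$, so your proposed $I_0$ would be zero and the $\tilde\omega/12$ term would not come out.

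What the paper does instead is expand $\sigma_1(t;\tilde\omega)=\tilde\omega f_1(t)+\tilde\omega^2 f_2(t)+\tilde\omega^3 f_3(t)+\cdots$ and determine the $f_k$ by substituting into the Chazy form of Painlev\'e~V, which yields a hierarchy of linear third-order ODEs solvable in closed form (e.g.\ $f_1(t)=i(t^2+2\cos t-2)/t$). This produces an expansion of the bracket as $-\tilde\omega\,\mathcal{G}_1(\lambda)-\tilde\omega^2\,\mathcal{G}_2(\lambda)-\cdots$, and then $\Lambda=\sum_k\Lambda_k$. A second subtlety you miss: the resulting $\lambda$-integrals are only \emph{conditionally} convergent (e.g.\ $\lambda\,\mathcal{G}_1(\lambda)\sim -2i\cos\lambda$ at infinity), so one cannot Taylor-expand $e^{i\tilde\omega\lambda}$ and integrate termwise; the paper splits each $\Lambda_k$ at a cutoff $T$ and handles the tail via explicit oscillatory integrals, obtaining $\hat\Lambda_1(\tilde\omega)=\hat\Lambda_2(\tilde\omega)=\tilde\omega^2$ and $\hat\Lambda_3=\mathcal{O}(\tilde\omega^3)$, hence $\hat\Lambda=2\tilde\omega^2$. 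Only this value, combined with the $\mathcal{A}$ and $\mathcal{B}$ expansions you already have, produces the coefficient $\tilde\omega/12$.
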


\begin{figure}
\includegraphics[width=\textwidth]{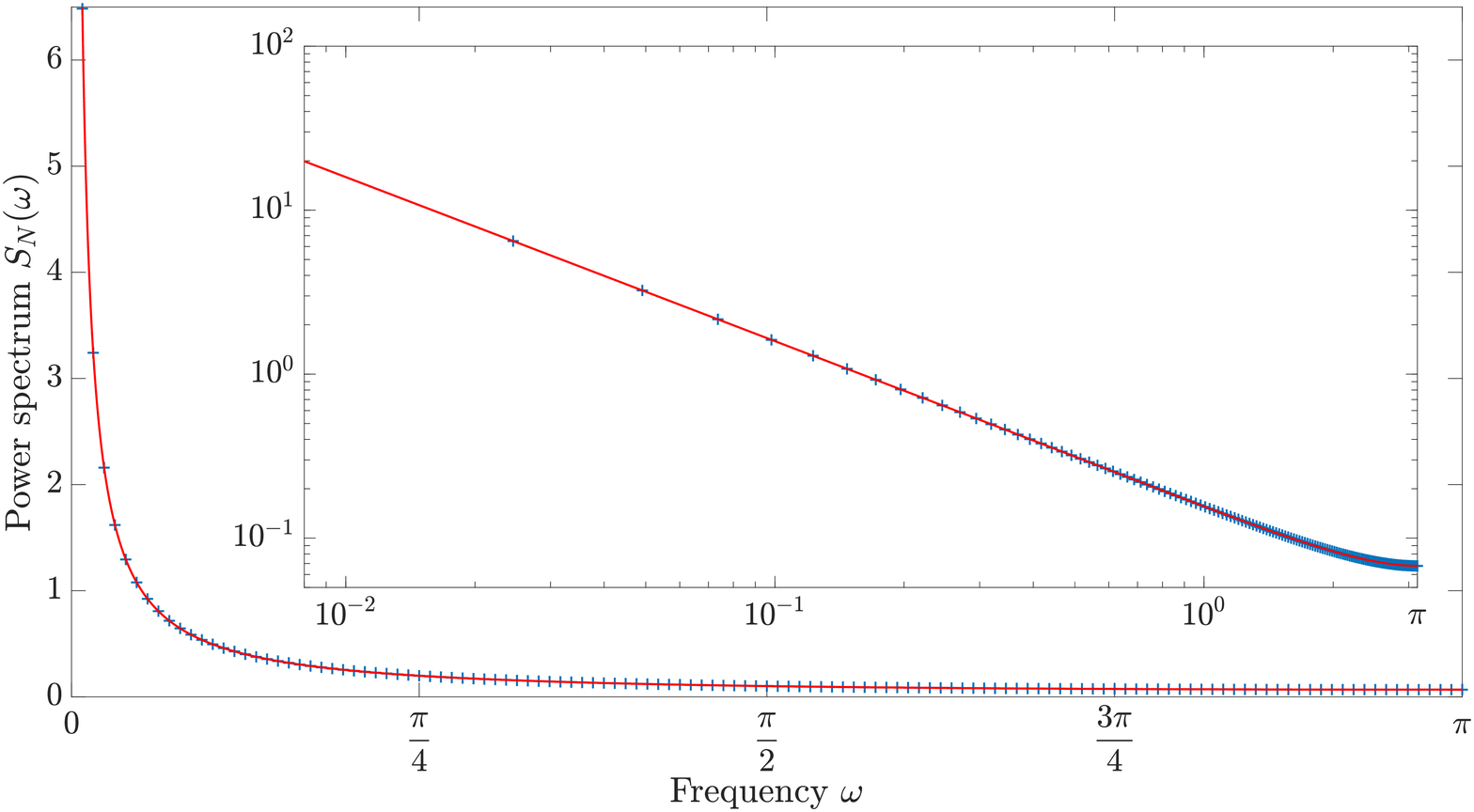}
\caption{A graph for the power spectrum as a function of frequency. Red line corresponds to the power spectrum calculated through the ${\rm dP_V}$ representation (Appendix~\ref{B-1}) of the exact Painlev\'e VI solution for $N=10^4$, see Theorem \ref{Th-3}. Blue crosses correspond to the power spectrum calculated for
sequences of $256$ unfolded ${\rm CUE}$ eigen-angles averaged over $10^7$ realizations. Inset: a log-log plot for the same graphs.
\label{Figure_ps_overall}}
\end{figure}

\begin{figure}
\includegraphics[width=\textwidth]{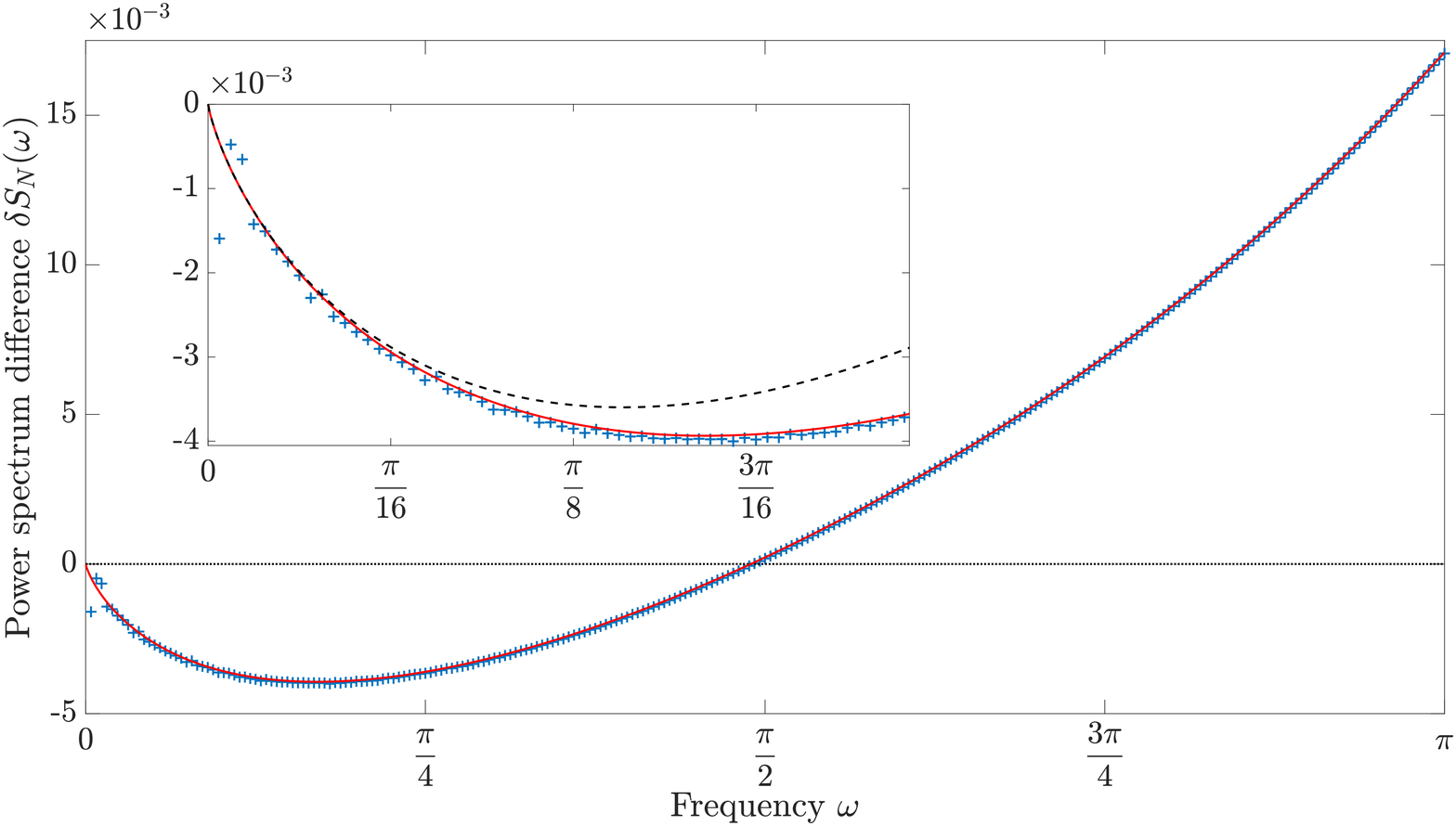}
\caption{Difference between the power spectrum and its singular part $1/2\pi\omega$ as described by Eq.~(\ref{BM}) at $\beta=2$ (see also the first term in Eq.~(\ref{S-res-0})). The singular part of the power spectrum corresponds to $\delta S_\infty(\omega)=0$ as represented by a gray dotted line. Red solid line: analytical prediction computed as explained in Fig.~\ref{Figure_ps_overall}. Blue crosses: simulation for $4 \times 10^8$ sequences of $512$ unfolded CUE eigenvalues. Inset: magnified portion of the same graph for $0\le \omega \le \pi/4$; additional black dashed line displays the difference $\delta S_\infty(\omega)$ calculated using the small-$\omega$ expansion Eq.~(\ref{S-res-0}).
\label{Figure_ps_diff}}
\end{figure}

For the proof of Theorems \ref{Th-4} and \ref{Th-5}, the reader is referred to Section~\ref{T-section}.

\subsection{Discussion}

In Figs.~\ref{Figure_ps_overall} and \ref{Figure_ps_diff}, the parameter-free prediction Eq.~(\ref{PS-exact}) for the power spectrum is confronted with the results of numerical simulations for the large-$N$ circular unitary ensemble ${\rm CUE}_N$. Two remarks are in order. (i) First, the limiting curve for $S_\infty(\omega)$ was approximated by the exact Painlev\'e VI solution computed for sufficiently large $N$ through its dPV representation worked out in detail in Appendix~\ref{B-1}. We have verified, by performing numerics for various values of $N$, that the convergence of ${\rm dP_V}$ representation of $S_N(\omega)$ to $S_\infty(\omega)$ is very fast, so that the $N=10^4$ curve provides an excellent approximation to the universal law for $S_\infty(\omega)$. A good match between the $N=10^4$ curve and the one plotted for a small-$\omega$ expansion Eq.~(\ref{S-res-0}) of $S_\infty(\omega)$ (see inset in Fig.~\ref{Figure_ps_diff}) lends an independent support to validity of our numerical procedure. (ii) Second, even though the theoretical results used for comparison refer to the ${\rm TCUE}_N$ -- rather than the ${\rm CUE}_N$ -- ensemble (which differ from each other by the weight function and the way the two are intrinsically unfolded \footnote{The spectra in ${\rm CUE}_N$ and ${\rm TCUE}_N$ ensembles are {\it intrinsically} unfolded for any $N\in{\mathbb N}$, albeit each in its own way. Indeed, in the ${\rm CUE}_N$ the {\it mean density} is a constant \cite{M-2004,PF-book}, while in the ${\rm TCUE}_N$ the {\it mean level spacing} is a constant, see Corollary \ref{corr-theta-k}. In the limit $N\rightarrow \infty$, the two types of unfolding are expected to become equivalent.}), the agreement between the ${\rm TCUE}_N$ theory and the ${\rm CUE}_N$ numerics is nearly perfect, which can naturally be attributed to the universality phenomenon emerging as $N\rightarrow \infty$.

The universal formula for $S_\infty(\omega)$, stated in Theorem \ref{Th-4}, is the {\it central result} of the paper. We expect it to hold {\it universally} for a wide class of random matrix models belonging to the $\beta=2$ Dyson's symmetry class, as the matrix dimension $N \rightarrow \infty$. Expressed in terms of a fifth Painlev\'e transcendent, the universal law Eq.~(\ref{PS-exact}) can be viewed as a power spectrum analog of the Gaudin-Mehta formula Eq.~(\ref{LSD-PV}) for the level spacing distribution.

Apart from establishing an explicit form of the universal random-matrix-theory law for $S_\infty(\omega)$, our theory reveals two important general aspects of the power spectrum which hold irrespective of a particular model of eigenlevel sequences: (i) similarly to the level spacing distribution, the power spectrum is determined by spectral correlations of {\it all orders}; (ii) in distinction to the level spacing distribution, which can solely be expressed in terms of the gap formation probability, the power spectrum is contributed by the {\it entire set} of probabilities that a spectral interval of a given length contains exactly $\ell$ eigenvalues with $\ell \ge 0$. As such, it provides a complementary statistical description of spectral fluctuations in stochastic spectra of various origins.

Considered through the prism of Bohigas-Giannoni-Schmit conjecture, the universal law Eq.~(\ref{PS-exact}) should hold for a variety of quantum systems with completely chaotic classical dynamics and broken time-reversal symmetry at not too low frequencies $T_*/T_H \lesssim \omega \le \pi$, when ergodicity and global symmetries -- rather than system specific features -- are responsible for shaping system's dynamics.

Potential applicability of our results to the non-trivial zeros of the Riemann zeta function deserves special mention. Indeed, according to the Montgomery-Odlyzko law (see, e.g., Ref.~\cite{KS-1999}), the zeros of the Riemann zeta function located high enough along the critical line are expected to follow statistical properties of the eigenvalues of large $U(N)$ matrices. This suggests that the universal law Eq.~(\ref{PS-exact}) could be detected "experimentally". Extensive, high-precision data accumulated by A.~M.~Odlyzko for billions of Riemann zeros \cite{O-2001} provide a unique opportunity for a meticulous test of the new universal law.

\section{Power spectrum for eigenlevel sequences with stationary spacings} \label{station}

In this Section, we provide proofs of two master formulae given by Theorem \ref{PS-stationary-main} and Theorem \ref{Th-2}.

\subsection{Stationary spectra} \label{stat-spec}

In view of Definition \ref{def-stationary}, we first establish a necessary and sufficient condition for eigenlevel sequences to possess stationarity of level spacings.

\begin{lemma} \label{Lemma-stat} For $N \in {\mathbb N}$, let $\{0\le \varepsilon_1 \le \cdots \le \varepsilon_N\}$ be an ordered sequence of unfolded eigenlevels such that $\langle
\varepsilon_1\rangle =\Delta$. An associated sequence of spacings between consecutive eigenlevels is stationary if and only if
\begin{eqnarray}\label{L-1}
    \langle (\varepsilon_\ell - \varepsilon_m)^q\rangle = \langle \varepsilon_{\ell-m}^q \rangle
\end{eqnarray}
for $\ell>m$ and both $q=1$ and $q=2$.
\end{lemma}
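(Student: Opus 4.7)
\textbf{Proof plan for Lemma~\ref{Lemma-stat}.}

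The natural strategy is to bridge the statement about increments $\varepsilon_\ell - \varepsilon_m$ with the definition of stationarity (Definition~\ref{def-stationary}), using the identity $\varepsilon_\ell - \varepsilon_m = \sum_{j=m+1}^\ell s_j$ valid for $\ell > m \geq 0$ (recall $\varepsilon_0 = 0$). I would treat the two implications separately, and handle the two moments $q=1,2$ separately within each direction.

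\emph{($\Rightarrow$) Stationarity implies the moment identities.} The $q=1$ case is immediate: linearity of expectation combined with $\langle s_j\rangle = \Delta$ yields $\langle \varepsilon_\ell - \varepsilon_m\rangle = (\ell-m)\Delta = \langle \varepsilon_{\ell-m}\rangle$. For $q=2$, I expand
\begin{equation*}
\langle (\varepsilon_\ell - \varepsilon_m)^2\rangle = \sum_{j,k=m+1}^\ell \langle s_j s_k\rangle
= \sum_{j,k=m+1}^\ell \bigl(I_{|j-k|} - \Delta^2 + \Delta^2\bigr) = \sum_{j,k=m+1}^\ell I_{|j-k|},
\end{equation*}
where in the last step I combine the Toeplitz covariance identity~(\ref{toep}) with the constant mean~(\ref{skd}) to eliminate $-\Delta^2$ inside the sum. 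Shifting summation indices $j \mapsto j-m$, $k\mapsto k-m$ reduces this double sum to $\sum_{j,k=1}^{\ell-m} I_{|j-k|} = \langle \varepsilon_{\ell-m}^2\rangle$, closing the forward direction.

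\emph{($\Leftarrow$) The moment identities imply stationarity.} From the $q=1$ case applied at $m = \ell - 1$, together with $\varepsilon_0 = 0$, one obtains $\langle s_\ell \rangle = \langle \varepsilon_1\rangle = \Delta$, confirming condition~(i) of Definition~\ref{def-stationary}. For the Toeplitz condition~(ii), I would pass from second moments to variances: since both sides of the $q=1$ and $q=2$ identities agree, so do $\mathrm{var}(\varepsilon_\ell - \varepsilon_m)$ and $\mathrm{var}(\varepsilon_{\ell-m})$. Writing $V(a,b) := \mathrm{var}(\varepsilon_b - \varepsilon_a) = \sum_{j,k=a+1}^b \mathrm{cov}(s_j,s_k)$ for $b > a \geq 0$, the hypothesis reads $V(a,b) = F(b-a)$ for some function $F$. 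The main algebraic step is then to invert this relation: a direct calculation shows that the second-order mixed difference
\begin{equation*}
\tfrac{1}{2}\bigl[V(a,b) - V(a,b-1) - V(a+1,b) + V(a+1,b-1)\bigr] = \mathrm{cov}(s_{a+1}, s_b)
\end{equation*}
for $a+1 < b$, using symmetry of the covariance. Substituting $V(a,b) = F(b-a)$ shows the right-hand side equals $\tfrac{1}{2}[F(b-a) - 2F(b-a-1) + F(b-a-2)]$, which depends only on $b-a$, hence only on $|b-(a+1)|$. The diagonal case $j=k$ gives $\mathrm{var}(s_j) = V(j-1,j) = F(1)$, constant in $j$. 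Combined with $\langle s_\ell\rangle = \Delta$, this proves $\langle s_\ell s_m\rangle = \mathrm{cov}(s_\ell,s_m) + \Delta^2$ depends only on $|\ell-m|$, giving the Toeplitz structure~(\ref{toep}) with $I_n := \mathrm{cov}$-at-lag-$n$ $+ \Delta^2$.

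The main obstacle is the inversion step in the reverse direction: one must recover individual covariances $\mathrm{cov}(s_j,s_k)$ from the aggregate variances $V(a,b)$. My plan is to carry this out via the second-order finite difference identity above, which isolates a single covariance entry from the quadruple evaluation of $V$; the argument is then purely combinatorial and the Toeplitz shift invariance follows automatically from the fact that $V$ is a function of the gap $b-a$ alone.
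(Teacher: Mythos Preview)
Your proposal is correct and follows essentially the same approach as the paper. The forward direction is identical (expand the square as a double sum over spacings, use the Toeplitz form $\langle s_j s_k\rangle = I_{|j-k|}$, shift indices); for the reverse direction the paper substitutes $s_\ell = \varepsilon_\ell - \varepsilon_{\ell-1}$ directly into $\langle s_\ell s_m\rangle$ and uses polarization together with Eq.~(\ref{L-1}) to obtain $I_{|\ell-m|} = \tfrac{1}{2}\langle \varepsilon_{|\ell-m|+1}^2\rangle + \tfrac{1}{2}\langle \varepsilon_{|\ell-m|-1}^2\rangle - \langle \varepsilon_{|\ell-m|}^2\rangle$, which is exactly your second-order finite-difference identity rewritten in terms of raw second moments rather than variances.
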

\noindent
\begin{proof}
The equivalence of Eq.~(\ref{skd}) to Eq.~(\ref{L-1}) at $q=1$ is self-evident. To prove the equivalence of Eq.~(\ref{toep}) to Eq.~(\ref{L-1}) at $q=2$, we proceed in two steps.

First, let the covariance matrix of level spacings be of the form Eq.~(\ref{toep}). Substituting Eq.~(\ref{ELSJ}) into the l.h.s.~of Eq.~(\ref{L-1}) taken at $q=2$, and making use of Eq.~(\ref{toep}) twice,
\begin{eqnarray}
    \langle (\varepsilon_\ell - \varepsilon_m)^2\rangle &=& \sum_{i=m+1}^\ell \sum_{j=m+1}^\ell  \langle  s_i s_j\rangle =
    \sum_{i=m+1}^\ell \sum_{j=m+1}^\ell  I_{|i-j|} \nonumber\\
        &=& \sum_{i^\prime=1}^{\ell-m} \sum_{j^\prime=1}^{\ell-m}  I_{|i^\prime-j^\prime|} =
         \sum_{i^\prime=1}^{\ell-m} \sum_{j^\prime=1}^{\ell-m}  \langle  s_{i^\prime} s_{j^\prime}\rangle =
         \langle \varepsilon_{\ell-m}^2\rangle, \nonumber
\end{eqnarray}
we derive the r.h.s.~of Eq.~(\ref{L-1}) with $q=2$.

Second, let the ordered eigenvalues satisfy Eq.~(\ref{L-1}) at $q=2$. Substituting $s_{\ell(m)} = \varepsilon_{\ell(m)} - \varepsilon_{\ell(m)-1}$ into the definition of covariance matrix ${\rm cov}(s_\ell,s_m)$ of level spacings and making use of Eq.~(\ref{L-1}), we observe that Eq.~(\ref{toep}) indeed holds with $I_{|\ell-m|}$ of the form
\begin{eqnarray}
    I_{|\ell-m|} = \frac{1}{2}\langle \varepsilon_{|\ell-m|+1}^2\rangle + \frac{1}{2}\langle \varepsilon_{|\ell-m|-1}^2\rangle - \langle \varepsilon_{|\ell-m|}^2\rangle.
\end{eqnarray}
\end{proof}

\subsection{Proof of Theorem \ref{PS-stationary-main}} \label{Th23-proof}

It follows from Eq.~(\ref{L-1}) of Lemma \ref{Lemma-stat} written in the form
\begin{eqnarray} \label{delta-2}
    \langle \delta\varepsilon_\ell  \delta\varepsilon_m\rangle = \frac{1}{2} \left(
        \langle \delta\varepsilon_\ell^2 \rangle + \langle \delta\varepsilon_m^2 \rangle - \langle \delta\varepsilon_{|\ell-m|}^2 \rangle
        \right),
\end{eqnarray}
where $\delta\varepsilon_\ell = \varepsilon_\ell -\ell \Delta$. Substituting Eq.~(\ref{delta-2}) into the definition Eq.~(\ref{ps-def}) and reducing the number of summations therein, we derive Eq.~(\ref{smd-sum}).
\hfill $\square$

\begin{remark}
  For discrete frequencies $\omega_k = 2\pi k/N$ the power spectrum representation Eq.~(\ref{smd-sum}) simplifies to
    \begin{eqnarray}\label{smd-sum-k}
    S_N(\omega_k) = \frac{1}{N \Delta^2} {\rm Re} \left(  z_k \frac{\partial}{\partial z_k} - N \right)
        \sum_{\ell=1}^N {\rm var}[\varepsilon_\ell]\, z_k^\ell.
    \end{eqnarray}
Here, $z_k = e^{i\omega_k}$ and the derivative with respect to $z_k$ should be taken as if $z_k$ were a continuous variable.
\hfill $\blacksquare$
\end{remark}

\subsection{Proof of Theorem \ref{Th-2}} \label{Th-2-proof}
To prove the Theorem \ref{Th-2}, we need the following Lemma:
\begin{lemma} \label{Lemma-probs}
 For $N \in {\mathbb N}$, let $\{ \varepsilon_1 \le \cdots \le \varepsilon_N\}$ be an ordered sequence of eigenlevels supported on the half axis $(0,\infty)$, and let $E_N(\ell;\varepsilon)$ be the probability to find exactly $\ell$ eigenvalues below the energy $\varepsilon$, given by Eq.~(\ref{EML}), with $\ell=0,1,\dots,N$. The following relation holds:
\begin{eqnarray}\label{Lemma-probs-1}
    \frac{d}{d\varepsilon} E_N(\ell;\varepsilon) = p_\ell(\varepsilon) - p_{\ell+1}(\varepsilon).
\end{eqnarray}
Here, $p_\ell(\varepsilon)$ is the probability density of $\ell$-th ordered eigenlevel where $p_0(\varepsilon) = p_{N+1}(\varepsilon) =0$ for $\varepsilon>0$. Equivalently,
\begin{eqnarray}\label{Lemma-probs-2}
    p_\ell (\varepsilon) = - \sum_{j=0}^{\ell-1} \frac{d}{d\varepsilon} E_N(j;\varepsilon), \quad \ell=1,\dots,N.
\end{eqnarray}
\end{lemma}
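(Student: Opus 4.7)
The plan is to prove Eq.~(\ref{Lemma-probs-1}) by directly differentiating the explicit formula Eq.~(\ref{EML}) for $E_N(\ell;\varepsilon)$ via Leibniz's rule, and then to obtain Eq.~(\ref{Lemma-probs-2}) by a telescoping sum.

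First, I would apply Leibniz's rule to the $\ell$ integrals with upper limit $\varepsilon$ and to the $N-\ell$ integrals with lower limit $\varepsilon$ in Eq.~(\ref{EML}). This produces $\ell$ positive boundary terms (one variable pinned at $\varepsilon$ by the upper limit) and $N-\ell$ negative boundary terms (one variable pinned at $\varepsilon$ by the lower limit). Using the assumed permutation symmetry of the JPDF $P_N(\epsilon_1,\dots,\epsilon_N)$, the $\ell$ terms of the first group all equal a common value, and the $N-\ell$ terms of the second group likewise coincide.

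Next, I would identify these common values with $p_\ell(\varepsilon)$ and $p_{\ell+1}(\varepsilon)$. Recall that the density of the $\ell$-th ordered eigenvalue equals
\begin{equation}
    p_\ell(\varepsilon) = \frac{N!}{(\ell-1)!(N-\ell)!}\left(\prod_{j=1}^{\ell-1}\int_0^\varepsilon d\epsilon_j\right)\left(\prod_{j=\ell+1}^{N}\int_\varepsilon^\infty d\epsilon_j\right) P_N(\epsilon_1,\dots,\epsilon_{\ell-1},\varepsilon,\epsilon_{\ell+1},\dots,\epsilon_N).
\end{equation}
Combining the combinatorial prefactors, the first group contributes
\begin{equation}
    \binom{N}{\ell}\cdot \ell\cdot \frac{(\ell-1)!(N-\ell)!}{N!}\, p_\ell(\varepsilon) = p_\ell(\varepsilon),
\end{equation}
while the second group contributes $-p_{\ell+1}(\varepsilon)$ by an analogous counting argument. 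This yields Eq.~(\ref{Lemma-probs-1}), with the boundary conventions $p_0(\varepsilon)=p_{N+1}(\varepsilon)=0$ coming out naturally from the facts that $E_N(0;\varepsilon)$ has no $(0,\varepsilon)$ integrals and $E_N(N;\varepsilon)$ has no $(\varepsilon,\infty)$ integrals.

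Finally, to obtain Eq.~(\ref{Lemma-probs-2}), I would telescope Eq.~(\ref{Lemma-probs-1}): summing the identity $p_{j+1}(\varepsilon) = p_j(\varepsilon) - \frac{d}{d\varepsilon} E_N(j;\varepsilon)$ from $j=0$ to $j=\ell-1$, and using $p_0(\varepsilon)=0$, directly gives $p_\ell(\varepsilon) = -\sum_{j=0}^{\ell-1}\frac{d}{d\varepsilon} E_N(j;\varepsilon)$. There is no real obstacle here — the argument is a bookkeeping exercise on combinatorial prefactors together with symmetry of $P_N$; the only mild care needed is to correctly track the signs arising from Leibniz's rule and to verify that the edge cases $\ell=0$ and $\ell=N$ are consistent with the stated conventions $p_0 = p_{N+1} = 0$.
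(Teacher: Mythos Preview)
Your proposal is correct and follows essentially the same approach as the paper: differentiate Eq.~(\ref{EML}) via Leibniz's rule, identify the resulting boundary terms with $p_\ell(\varepsilon)$ and $p_{\ell+1}(\varepsilon)$ using the explicit formula Eq.~(\ref{p-L}), and then telescope to obtain Eq.~(\ref{Lemma-probs-2}). The paper states this more tersely, but the underlying argument is the same bookkeeping exercise you describe.
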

\noindent
\begin{proof}
Differentiating Eq.~(\ref{EML}) and having in mind that the probability density of $\ell$-th ordered eigenvalue equals
\begin{eqnarray}\fl \qquad \label{p-L}
    p_\ell(\varepsilon) = \frac{N!}{(\ell-1)! (N-\ell)!}\left(\prod_{j=1}^{\ell-1} \int_{0}^{\varepsilon} d\epsilon_j\right) \left(\prod_{j=\ell+1}^N \int_{\varepsilon}^\infty d\epsilon_j\right)\nonumber\\
        \qquad \qquad \qquad \qquad \times
    \, P_N(\epsilon_1,\dots,\epsilon_{\ell-1},\varepsilon, \epsilon_{\ell+1},\dots,\epsilon_N),
\end{eqnarray}
we derive Eqs.~(\ref{Lemma-probs-1}) and (\ref{Lemma-probs-2}).
\end{proof}
\noindent\newline
{\bf Proof of Theorem \ref{Th-2}.}---Equipped with Lemma \ref{Lemma-probs}, we are ready to prove Theorem \ref{Th-2}. First, we observe that Eqs.~(\ref{ps-gf}) and Eq.~(\ref{Lemma-probs-2}) induce the relation
  \begin{eqnarray}
    \sum_{\ell=1}^N  z^\ell p_\ell(\varepsilon) = -\frac{z}{1-z} \frac{d}{d\varepsilon} \left[
        \Phi_N(\varepsilon;1-z) - z^N
    \right].
\end{eqnarray}
Second, we split the variance Eq.~(\ref{smd-sum}) into ${\rm var}[\varepsilon_\ell]= \langle \varepsilon_\ell^2 \rangle - \ell^2 \Delta^2$. The later term
produces the contribution ${\tilde S}_N(\omega)$ in Eq.~(\ref{ps-2}) while the former brings
\begin{eqnarray} \label{aux-01}
    \sum_{\ell=1}^N \langle \varepsilon_\ell^2 \rangle z^\ell &=& - \frac{z}{1-z} \int_{0}^{\infty} d\epsilon\, \epsilon^2 \,
    \frac{d}{d\varepsilon} \left[
        \Phi_N(\varepsilon;1-z) - z^N
    \right] \nonumber\\
    &=& \frac{2z}{1-z} \int_{0}^{\infty} d\epsilon\, \epsilon \,
    \left[
        \Phi_N(\varepsilon;1-z) - z^N
    \right].
\end{eqnarray}
Integration by parts performed in the last line is justified provided an average number of eigenlevels ${\mathcal N}_N(\varepsilon)$ in the tail region $(\varepsilon,\infty)$ exhibits sufficiently fast decay ${\mathcal N}_N (\varepsilon) \sim \varepsilon^{-(2+\delta)}$ with $\delta>0$ as $\varepsilon \rightarrow \infty$ \footnote{Indeed, Eqs. (\ref{EML}) and (\ref{ps-gf}) imply an integral representation
$$
    \Phi_N(\varepsilon;1-z) = \prod_{\ell=1}^{N} \left(
        z \int_{0}^{\infty} d\epsilon_\ell + (1-z) \int_{\varepsilon}^{\infty} d\epsilon_\ell
    \right) \, P_N (\epsilon_1,\dots,\epsilon_N).
$$
Letting $\varepsilon\rightarrow\infty$, we generate a large-$\varepsilon$ expansion of the form
$$
    \Phi_N(\varepsilon;1-z) = z^N + \sum_{\ell=1}^N z^{N-\ell} (1-z)^\ell \left( \prod_{j=1}^{\ell} \int_{\varepsilon}^{\infty} d\epsilon_j\right) R_{\ell,N} (\epsilon_1,\dots,\epsilon_\ell),
$$
where
$$
R_{\ell,N} (\epsilon_1,\dots,\epsilon_\ell) = \frac{N!}{(N-\ell)!} \left( \prod_{j=\ell+1}^N \int_{0}^{\infty} d\epsilon_j \right)
P_N(\epsilon_1,\dots,\epsilon_N)
$$
is the $\ell$-point spectral correlation function. To the first order, the expansion brings
$\Phi_N(\varepsilon;1-z) = z^N + z^{N-1}(1-z) {\mathcal N}_N (\varepsilon) + \dots$, where ${\mathcal N}_N (\varepsilon)$ is the mean spectral density $R_{1,N}(\epsilon)$ integrated over the interval $(\varepsilon,\infty)$. Hence, the required decay of ${\mathcal N}_N (\varepsilon)$ at infinity readily follows.}. Substituting Eq.~(\ref{aux-01}) into Eq.~(\ref{smd-sum}), we derive the first term in Eq.~(\ref{ps-2}). This ends the proof of Theorem \ref{Th-2}.
\hfill $\square$

\begin{remark}
  For discrete frequencies $\omega_k = 2\pi k/N$ the power spectrum representation Eq.~(\ref{ps-2}) simplifies to
    \begin{eqnarray}\label{ps-gf-simple}\fl \quad
    S_N(\omega_k) = \frac{2}{N \Delta^2} {\rm Re} \left(  z_k \frac{\partial}{\partial z_k} - N\right)
        \frac{z_k}{1-z_k} \int_0^\infty d\epsilon \,\epsilon \left[
            \Phi_N(\epsilon;1-z_k) - 1
        \right] - \frac{N}{|1-z_k|^2}.\nonumber\\
        {}
  \end{eqnarray}
Here, $z_k = e^{i\omega_k}$ and the derivative with respect to $z_k$ should be taken as if $z_k$ were a continuous variable.
\hfill $\blacksquare$
\end{remark}

\section{Power spectrum in the tuned circular unitary ensemble} \label{PS-RMT-exact}

In this Section, a general framework developed in Section~\ref{station} and summed up in Theorem \ref{Th-2} will be utilized to determine the power spectrum in the tuned circular ensemble of random matrices, ${\rm TCUE}_N$, for any $N\in {\mathbb N}$. For the definition of ${\rm TCUE}_N$, the reader is referred to Eqs.~(\ref{T-CUE}) and (\ref{TCUE-norm}).

\subsection{Correlations between ordered eigen-angles in ${\rm TCUE}_N$}

The main objective of this subsection is to establish stationarity of spacings between ordered ${\rm TCUE}_N$ eigen-angles. To this end, we prove Lemma \ref{Lemma-circular} and Lemma \ref{Lemma-stat-TCUE}. The sought stationarity will then be established in Corollary \ref{corr-theta-k}.

\begin{lemma}[Circular symmetry] \label{Lemma-circular} For $q=0, 1, \dots$ and $\ell=1,2,\dots, N$ it holds that
\begin{eqnarray}\label{LL-1}
    \langle \theta_\ell^q\rangle = \langle (2\pi - \theta_{N-\ell +1})^q \rangle.
\end{eqnarray}
\end{lemma}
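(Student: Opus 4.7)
The plan is to exploit a single reflection symmetry, $\theta \mapsto 2\pi - \theta$, of the unordered joint density and then keep careful track of what happens to the order statistics.

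First, I would verify that the JPDF Eq.~(\ref{T-CUE}) is invariant under the simultaneous substitution $\phi_j \mapsto 2\pi - \phi_j$ for all $j=1,\dots,N$. This reduces to two elementary identities for the factors that appear: for the Vandermonde-like product one has
\begin{equation}
\bigl| e^{i(2\pi-\phi_i)} - e^{i(2\pi-\phi_j)}\bigr|^2 = \bigl| e^{-i\phi_i} - e^{-i\phi_j}\bigr|^2 = \bigl| e^{i\phi_i} - e^{i\phi_j}\bigr|^2,
\end{equation}
and for the edge weight $\bigl|1 - e^{i(2\pi-\phi_j)}\bigr|^2 = \bigl|1 - e^{-i\phi_j}\bigr|^2 = \bigl|1 - e^{i\phi_j}\bigr|^2$, both by complex conjugation. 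Hence $P_N(2\pi-\phi_1,\dots,2\pi-\phi_N) = P_N(\phi_1,\dots,\phi_N)$.

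Second, I would track how this reflection acts on order statistics. Given an unordered sample $(\phi_1,\dots,\phi_N) \in (0,2\pi)^N$, let $\theta_\ell(\phi_1,\dots,\phi_N)$ denote the $\ell$-th smallest of the $\phi_j$'s. Since $x \mapsto 2\pi - x$ is order-reversing on $(0,2\pi)$, the $\ell$-th smallest of $(2\pi-\phi_1,\dots,2\pi-\phi_N)$ is $2\pi - \theta_{N-\ell+1}(\phi_1,\dots,\phi_N)$. Writing $\langle \theta_\ell^q\rangle$ as an integral of $\theta_\ell(\phi_1,\dots,\phi_N)^q$ against $P_N$ over $(0,2\pi)^N$ (with the usual $1/N!$ to account for unordering) and performing the change of variables $\phi_j \mapsto 2\pi - \phi_j$ — a diffeomorphism with unit Jacobian that preserves the domain — the invariance established in step one leaves the measure unchanged while the integrand becomes $(2\pi - \theta_{N-\ell+1})^q$. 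Equating the two expressions yields Eq.~(\ref{LL-1}).

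I anticipate no substantive obstacle: the only subtlety is the order-reversing bookkeeping when reindexing the ordered eigen-angles, and the measure-theoretic step is an unambiguous affine change of variables on a hypercube. An immediate corollary, useful for establishing stationarity later, will be obtained by setting $q=1$: $\langle \theta_\ell\rangle + \langle \theta_{N-\ell+1}\rangle = 2\pi$, forcing the means to be equispaced.
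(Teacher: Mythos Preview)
Your proposal is correct and is essentially the same reflection-symmetry argument the paper uses: both exploit the invariance of the ${\rm TCUE}_N$ density under $\theta \mapsto 2\pi-\theta$ together with the fact that this map reverses the ordering of eigen-angles. The only cosmetic difference is that the paper phrases the key step as the identity $p_\ell(\varphi)=p_{N-\ell+1}(2\pi-\varphi)$ between the densities of individual order statistics (derived from the explicit integral representation Eq.~(\ref{pk-tcue})), whereas you work directly with the unordered JPDF and track how the $\ell$-th order statistic transforms.
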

\begin{proof}
    The proof is based on the circular-symmetry identity
    \begin{eqnarray}
        p_\ell(\varphi) = p_{N-\ell+1}(2\pi - \varphi)
    \end{eqnarray}
    between the probability density functions of $\ell$-th and $(N-\ell+1)$-th ordered eigenangles in the ${\rm TCUE}_N$. This relation can formally be derived from the
    representation
    \begin{eqnarray} \fl \label{pk-tcue}
    p_\ell(\varphi) = \frac{1}{(N+1)!} \frac{N!}{(\ell-1)! (N-\ell)!} \left| 1-  e^{i\varphi}\right|^2 \nonumber\\
        \times
      \left(\prod_{j=1}^{\ell-1} \int_{0}^{\varphi} \frac{d\theta_j}{2\pi}\right) \left(\prod_{j=\ell}^{N-1} \int_{\varphi}^{2\pi} \frac{d\theta_j}{2\pi}\right)
      \nonumber \\
      \times
       \prod_{1 \le i < j \le N-1}^{}
    \left| e^{i\theta_i} - e^{i\theta_j} \right|^2
    \prod_{j=1}^{N-1} \left| e^{i\varphi} - e^{i\theta_j}\right|^2 \left| 1 - e^{i\theta_j}\right|^2.
    \end{eqnarray}
    Indeed, Eq.~(\ref{pk-tcue}) yields
    \begin{eqnarray}  \label{pk-tcue-mirror}
    p_{N-\ell+1}(2\pi - \varphi) = \frac{1}{(N+1)!} \frac{N!}{(\ell-1)! (N-\ell)!} \left| 1-  e^{i(2\pi -\varphi)}\right|^2 \nonumber\\
        \times
      \left(\prod_{j=1}^{N-\ell} \int_{0}^{2\pi - \varphi} \frac{d\theta_j}{2\pi}\right) \left(\prod_{j=N-\ell+1}^{N-1} \int_{2\pi-\varphi}^{2\pi} \frac{d\theta_j}{2\pi}\right)
      \nonumber \\
      \times
       \prod_{1 \le i < j \le N-1}^{}
    \left| e^{i\theta_i} - e^{i\theta_j} \right|^2
    \prod_{j=1}^{N-1} \left| e^{i(2\pi - \varphi)} - e^{i\theta_j}\right|^2 \left| 1 - e^{i\theta_j}\right|^2.
    \end{eqnarray}
    The change of variables $\theta_j = 2\pi - \theta_j^\prime$ reduces the r.h.s.~of Eq.~(\ref{pk-tcue-mirror}) to Eq.~(\ref{pk-tcue}). Consequently,
    \begin{eqnarray}
        \langle \theta_\ell^q\rangle &=& \int_0^{2\pi} \frac{d\varphi}{2\pi} \, \varphi^q p_\ell(\varphi) =
        \int_0^{2\pi} \frac{d\varphi}{2\pi} \, \varphi^q p_{N-\ell+1}(2\pi-\varphi) \nonumber\\
        &=& \int_0^{2\pi} \frac{d\varphi^\prime}{2\pi} \, (2\pi - \varphi^\prime)^q p_{N-\ell+1}(\varphi^\prime)
        = \langle (2\pi - \theta_{N-\ell+1})^q\rangle.
    \end{eqnarray}
\end{proof}

\begin{lemma}[Translational invariance in the index space] \label{Lemma-stat-TCUE}  For $q=0, 1, \dots$ and $1 \le m < \ell \le N$ it holds that
\begin{eqnarray}\label{LL-2}
    \langle (\theta_\ell - \theta_m) ^q\rangle = \langle \theta_{\ell-m}^q \rangle.
\end{eqnarray}
\end{lemma}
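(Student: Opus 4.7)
I would deduce Lemma~\ref{Lemma-stat-TCUE} from the rotational invariance of the CUE$_{N+1}$ ensemble, of which the TCUE$_N$ is the conditioning obtained by pinning one eigen-angle at the origin. The first step is to augment the ordered sequence by setting $\theta_0 = 0$ and to rewrite Eq.~(\ref{T-CUE}) in the manifestly invariant form
\[
P_N(\theta_1,\dots,\theta_N) = \frac{1}{(N+1)!}\prod_{0\le i<j\le N}\left|e^{i\theta_i}-e^{i\theta_j}\right|^2\bigg|_{\theta_0=0},
\]
so that the joint density of the unordered $(N+1)$-tuple $\{\theta_0,\theta_1,\dots,\theta_N\}$, with $\theta_0$ distinguished at $0$, is nothing but the CUE$_{N+1}$ density conditioned on one angle sitting at the origin.

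\textbf{Key move.} Fix $m\in\{1,\dots,N\}$ and apply the simultaneous cyclic shift $\theta_j\mapsto \theta_j-\theta_m\pmod{2\pi}$ to all $N+1$ angles. The Vandermonde product $\prod_{i<j}|e^{i\theta_i}-e^{i\theta_j}|^2$ is invariant under this shift, and the shift, followed by re-labelling so that the new distinguished point (formerly $\theta_m$) sits at $0$, is a measure-preserving bijection on the set of unordered configurations in $[0,2\pi)$ carrying one marked point at the origin. Using $0=\theta_0<\theta_1<\dots<\theta_N<2\pi$, a direct check shows that the increasing rearrangement of the shifted unmarked points is
\[
\theta_j' = \begin{cases}\theta_{m+j}-\theta_m, & 1\le j\le N-m,\\ 2\pi+\theta_{j-N+m-1}-\theta_m, & N-m< j\le N,\end{cases}
\]
with the convention $\theta_0=0$, and that $(\theta_1',\dots,\theta_N')$ has the same TCUE$_N$ joint law as $(\theta_1,\dots,\theta_N)$. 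For $\ell>m$ the identity $\theta_\ell-\theta_m=\theta'_{\ell-m}$ then yields
\[
\langle (\theta_\ell-\theta_m)^q\rangle = \langle (\theta'_{\ell-m})^q\rangle = \langle \theta_{\ell-m}^q\rangle,
\]
which is precisely Eq.~(\ref{LL-2}).

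\textbf{Main obstacle.} The delicate point is justifying rigorously that the cyclic re-ordering producing $(\theta_1',\dots,\theta_N')$ is a genuine measure-preserving rearrangement of the joint density: one must handle the boundary issues when shifted angles cross $0$ or $2\pi$, and verify that after re-labelling exactly one point sits at $0$ while the remaining $N$ points lie in $(0,2\pi)$. An alternative, more computational route that sidesteps this global argument is to derive Eq.~(\ref{LL-2}) at the level of the two-angle joint density by a direct change of variables in Eq.~(\ref{pk-tcue}), splitting the contribution according to whether the index $j-N+m-1$ vanishes (which gives $\theta_j'=2\pi-\theta_m$ and requires invoking Lemma~\ref{Lemma-circular}) or is strictly positive.
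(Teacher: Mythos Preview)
Your proposal is correct and takes essentially the same approach as the paper: augment the configuration by $\theta_0=0$, exploit the invariance of the resulting ${\rm CUE}_{N+1}$ Vandermonde under rigid motions of the circle, write down the explicit re-indexing of the ordered angles, and verify directly that the map preserves both $P_N^{\rm (ord)}$ and the ordering constraint $\mathds{1}_{0\le\theta'_1\le\cdots\le\theta'_N\le 2\pi}$. The only cosmetic difference is that the paper uses the reflection $\theta_r\mapsto\theta_\ell-\theta_r\pmod{2\pi}$ (placing the new origin at the old $\theta_\ell$) rather than your rotation $\theta_r\mapsto\theta_r-\theta_m\pmod{2\pi}$ (placing it at the old $\theta_m$); both choices give $\theta'_{\ell-m}=\theta_\ell-\theta_m$, and the ``main obstacle'' you flag is handled in the paper precisely by the direct check you outline---no recourse to Lemma~\ref{Lemma-circular} is needed.
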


\begin{proof}
    It is advantageous to start with the JPDF of {\it ordered} eigenangles in the ${\rm TCUE}_N$,
    \begin{eqnarray}\fl \label{tcue-ord}
    \qquad
       P_N^{\rm{(ord)}}(\theta_1,\dots,\theta_N) =  N! \, P_N(\theta_1,\dots,\theta_N) \, \mathds{1}_{0 \le \theta_1 \le \cdots \le \theta_N \le 2\pi} \nonumber\\
       = \frac{1}{N+1} \prod_{1 \le i < j \le N}^{}
    \left| e^{i\theta_i} - e^{i\theta_j} \right|^2
    \prod_{j=1}^{N} \left| 1 - e^{i\theta_j}\right|^2  \, \mathds{1}_{0 \le \theta_1 \le \cdots \le \theta_N \le 2\pi},
    \end{eqnarray}
where we have used the notation
$$
\mathds{1}_{0\le\theta_1\le\dots\le\theta_N\le 2\pi}=\prod_{1\le i<j\le N}\Theta(\theta_j-\theta_i)
$$
with $\Theta$ being the Heaviside step function. Given Eq.~(\ref{tcue-ord}), the $q$-th moment of the difference $\theta_\ell-\theta_m$ equals
\begin{eqnarray}
    \langle (\theta_\ell -\theta_m)^q \rangle
     &=& \int_0^{2\pi} \frac{d\theta_1}{2\pi} \cdots \int_0^{2\pi} \frac{d\theta_m}{2\pi} \cdots
    \int_0^{2\pi} \frac{d\theta_\ell}{2\pi}
    \cdots \int_0^{2\pi} \frac{d\theta_N}{2\pi}  \nonumber\\
    &\times& \, (\theta_\ell - \theta_m)^q\,
    P_N^{\rm{(ord)}}(\theta_1,\dots,\theta_N).
\end{eqnarray}
Changing the integration variables $(\theta_1,\dots,\theta_N) \rightarrow (\theta_1^\prime,\dots,\theta_N^\prime)$ according to the map
\begin{eqnarray}\label{theta-map}
    \left\{
      \begin{array}{ll}
        \theta_{\ell-r}^\prime = \theta_\ell-\theta_r, & \hbox{$r=1,\dots,\ell-1$;} \\
        \theta_r^\prime = \theta_r, & \hbox{$r=\ell$;} \\
        \theta_{N+1+\ell-r}^\prime = 2\pi +\theta_\ell-\theta_r, & \hbox{$r=\ell+1,\dots,N$,}
      \end{array}
    \right.
\end{eqnarray}
and observing that both the probability density function $P_N^{\rm{(ord)}}$ and the integration domain stay invariant under the map Eq.~(\ref{theta-map}),
\begin{eqnarray}
    P_N^{\rm{(ord)}}(\theta_1^\prime,\dots,\theta_N^\prime) &=& P_N^{\rm{(ord)}}(\theta_1,\dots,\theta_N),\\
    \mathds{1}_{0 \le \theta_1 \le \cdots \le \theta_N \le 2\pi} &\rightarrow& \mathds{1}_{0 \le \theta_1^\prime \le \cdots \le \theta_N^\prime \le 2\pi},
\end{eqnarray}
we conclude that
\begin{eqnarray} \fl
    \langle (\theta_\ell -\theta_m)^q \rangle
     = \int_0^{2\pi} \frac{d\theta_1^\prime}{2\pi} \cdots  \int_0^{2\pi} \frac{d\theta_N^\prime}{2\pi} \, (\theta_{\ell-m}^\prime)^q\,
    P_N^{\rm{(ord)}}(\theta_1^\prime,\dots,\theta_N^\prime)
    = \langle \theta_{\ell-m}^q \rangle.
\end{eqnarray}
\end{proof}

\begin{corollary}\label{corr-theta-k}
    A sequence of spacings between consecutive eigenangles in ${\rm TCUE}_N$ is stationary such that the mean position of the $\ell$-th ordered
eigen-angle equals
    \begin{eqnarray}
        \langle \theta_\ell \rangle = \ell \Delta,
    \end{eqnarray}
where $\ell=1,\dots,N$ and
\begin{eqnarray}
    \Delta = \frac{2\pi}{N+1},
\end{eqnarray}
is the mean spacing.
\end{corollary}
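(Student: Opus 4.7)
The plan is to combine Lemma \ref{Lemma-stat-TCUE}, Lemma \ref{Lemma-circular}, and Lemma \ref{Lemma-stat} in a short synthesis. The translational invariance in the index space, Lemma \ref{Lemma-stat-TCUE}, asserts that $\langle(\theta_\ell-\theta_m)^q\rangle=\langle\theta_{\ell-m}^q\rangle$ for every nonnegative integer $q$ and all $1\le m<\ell\le N$. This is precisely the hypothesis \eqref{L-1} of Lemma \ref{Lemma-stat} with $\varepsilon_\ell$ replaced by $\theta_\ell$, specialized to $q=1$ and $q=2$. Hence Lemma \ref{Lemma-stat} will immediately yield stationarity of the sequence of spacings $\{s_\ell=\theta_\ell-\theta_{\ell-1}\}$ (with $\theta_0=0$), once the common mean $\langle s_\ell\rangle=\Delta$ has been identified.

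To identify this mean, first use Lemma \ref{Lemma-stat-TCUE} at $q=1$ with $m=\ell-1$ to obtain $\langle\theta_\ell\rangle-\langle\theta_{\ell-1}\rangle=\langle\theta_1\rangle$, so by telescoping
\begin{equation*}
\langle\theta_\ell\rangle=\ell\,\langle\theta_1\rangle,\qquad \ell=1,\dots,N.
\end{equation*}
Thus the whole sequence of means is pinned down by $\langle\theta_1\rangle$, and the spacings already have a common mean equal to $\langle\theta_1\rangle$.

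To fix $\langle\theta_1\rangle$, I would invoke Lemma \ref{Lemma-circular} at $q=1$, which gives $\langle\theta_\ell\rangle=2\pi-\langle\theta_{N-\ell+1}\rangle$, i.e.\ $\langle\theta_\ell\rangle+\langle\theta_{N-\ell+1}\rangle=2\pi$. Setting $\ell=N$ and combining with the linear relation $\langle\theta_N\rangle=N\langle\theta_1\rangle$ established above yields $(N+1)\langle\theta_1\rangle=2\pi$, hence $\langle\theta_1\rangle=2\pi/(N+1)$. Denoting this quantity $\Delta$, one has $\langle\theta_\ell\rangle=\ell\Delta$ and $\langle s_\ell\rangle=\Delta$ for all $\ell$, completing the proof.

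No step here is a real obstacle, since the heavy lifting was already done in Lemmas \ref{Lemma-circular} and \ref{Lemma-stat-TCUE}; the only thing to watch is the mild point that Lemma \ref{Lemma-stat} is stated on the half line while the eigen-angles live on $[0,2\pi]$, which is harmless because its proof only uses the algebraic identity \eqref{L-1}, not the underlying support. Thus the corollary reduces to the two-line computation $(N+1)\langle\theta_1\rangle=2\pi$ together with the appeal to Lemma \ref{Lemma-stat}.
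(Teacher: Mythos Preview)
Your proof is correct and follows essentially the same approach as the paper: both combine Lemma~\ref{Lemma-stat-TCUE} at $q=1$ (to get constant spacing $\langle\theta_\ell\rangle-\langle\theta_{\ell-1}\rangle=\langle\theta_1\rangle$) with Lemma~\ref{Lemma-circular} at $q=1$ (to pin down $\langle\theta_1\rangle=2\pi/(N+1)$), and then invoke Lemma~\ref{Lemma-stat} together with Lemma~\ref{Lemma-stat-TCUE} for stationarity. Your version is simply more explicit about the telescoping and the $(N+1)\langle\theta_1\rangle=2\pi$ computation, and your remark about the support of the eigen-angles versus the half-line hypothesis of Lemma~\ref{Lemma-stat} is a fair clarification.
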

\begin{proof}
  Indeed, combining Lemma \ref{Lemma-circular} taken at $q=1$ and Lemma \ref{Lemma-stat-TCUE} taken at $q=1$ and $m=\ell-1$, one concludes that the mean spacing
$$
    \Delta = \langle \theta_\ell - \theta_{\ell-1}\rangle = \frac{2\pi}{N+1}
$$
is constant everywhere in the eigenspectrum. Now we apply Lemma \ref{Lemma-stat} and Lemma \ref{Lemma-stat-TCUE} to complete the proof. \footnote{Notice that due to a formal convention $p_0(\varphi) =0$ stated below Eq.~(\ref{Lemma-probs-1}), one has to set $\langle \theta_0\rangle = 0$ if required.
}
\end{proof}

\subsection{Proof of Theorem \ref{Th-3}} \label{Th-3-proof}

Stationarity of level spacings in the ${\rm TCUE}_N$ established in Corollary \ref{corr-theta-k} allows us to use a `compactified' version of Theorem \ref{Th-2} in order to claim the representation stated by Eqs.~(\ref{ps-tcue-1}) and (\ref{ps-tcue-3}), where
\begin{eqnarray} \label{ps-tcue-2}
    \Phi_N(\varphi;\zeta) = \sum_{\ell=0}^N (1-\zeta)^\ell E_N(\ell;\varphi)
\end{eqnarray}
is the generating function of the probabilities
\begin{eqnarray}  \fl \label{EN-TCUE}
    E_N(\ell;\varphi) =
    \frac{N!}{\ell! (N-\ell)!}\left(\prod_{j=1}^\ell \int_{0}^{\varphi} \frac{d\theta_j}{2\pi}\right) \left(\prod_{j=\ell+1}^N \int_{\varphi}^{2\pi} \frac{d\theta_j}{2\pi}\right)
    \, P_N(\theta_1,\dots,\theta_N)
\end{eqnarray}
to find exactly $\ell$ eigen-angles in the interval $(0,\varphi)$ of the ${\rm TCUE}_N$ spectrum. The JPDF $P_N(\theta_1,\dots,\theta_N)$ is defined in Eq.~(\ref{T-CUE}).

Substituting Eqs.~(\ref{EN-TCUE}) and (\ref{T-CUE}) into Eq.~(\ref{ps-tcue-2}), one derives a multidimensional-integral representation of the generating function $\Phi_N(\varphi;\zeta)$ in the form
\begin{eqnarray} \fl
    \Phi_N(\varphi;\zeta) = \frac{1}{(N+1)!}
    \prod_{j=1}^N \left( \int_0^{2\pi} - \zeta \int_0^\varphi \right) \frac{d\theta_j}{2\pi}
    \prod_{1 \le i < j \le N}^{}
    \left| e^{i\theta_i} - e^{i\theta_j} \right|^2
    \prod_{j=1}^{N} \left| 1 - e^{i\theta_j}\right|^2, \nonumber\\
    \label{phintheta} {}
\end{eqnarray}
satisfying the {\it symmetry relation}
\begin{eqnarray}  \label{phin-sym}
   \Phi_N(2\pi-\varphi;\zeta) &=& (1-\zeta)^N \Phi_N\left(\varphi;\frac{\zeta}{\zeta-1}\right) \nonumber\\
   &=& (1-\zeta)^N \Phi_N(\varphi; \bar{\zeta})
        = (1-\zeta)^N \overline{\Phi_N(\varphi; \zeta)}.
\end{eqnarray}

Multidimensional integrals of the CUE-type akin to Eq.~(\ref{phintheta}) have been studied in much detail in Ref.~\cite{FW-2004} whose authors
employed the $\tau$-function theory \cite{O-1987} of Painlev\'e equations. To proceed with evaluation of the generating function of our interest, we introduce a new set of integration variables
\begin{eqnarray}\label{ch-var}
    e^{i\theta_j} = \frac{i\lambda_j-1}{i\lambda_j+1}
\end{eqnarray}
to write down the generating function Eq.~(\ref{phintheta}) in the form
\begin{eqnarray} \fl
    \Phi_N(\varphi;\zeta) = \frac{2^{N(N+1)}}{\pi^N (N+1)!}
    \prod_{j=1}^N \left( \int_{-\infty}^{+\infty} - \zeta \int_{\cot(\varphi/2)}^{+\infty} \right) \frac{d\lambda_j}{(1+\lambda_j^2)^{N+1}}  \prod_{1 \le i < j \le N}^{}
    \left| \lambda_i - \lambda_j \right|^2. \nonumber\\
    {}
\end{eqnarray}
Its Painlev\'e VI representation can be read off from Ref.~\cite{FW-2004} to establish Eqs.~(\ref{phin}), (\ref{pvi}) and also (\ref{pvi-bc}). For a detailed
derivation of the boundary condition Eq.~(\ref{pvi-bc}), the reader is referred to Appendix~\ref{A-1}.
\hfill $\square$

\begin{remark}
     For a set of discrete frequencies
$$
\omega_k^\prime = \frac{2\pi k}{N+1}
$$
the free term in Eq.~(\ref{ps-tcue-1}) nullifies, $\dbtilde{S}_N(\omega_k^\prime)=0$, bringing a somewhat tidier formula
\begin{eqnarray} \fl \label{ps-tcue-4}
  S_N(\omega_k^\prime) =  \frac{(N+1)^2}{\pi N} {\rm Re} \left(  z_k^\prime \frac{\partial}{\partial z_k^\prime} - N - 1 \right)
        \frac{z_k^\prime}{1-z_k^\prime} \int_0^{2\pi} \frac{d\varphi}{2\pi} \,\varphi \, \Phi_N(\varphi;1-z_k^\prime),
\end{eqnarray}
where $z_k^\prime = e^{i \omega_k^\prime}$. This is essentially Eq.~(17) previously announced in our paper Ref.~\cite{ROK-2017}.
\hfill $\blacksquare$
\end{remark}

\subsection{Power spectrum in ${\rm TCUE}_N$ as a Fredholm determinant} \label{Fredholm-sec}

To derive a Fredholm determinant representation of the ${\rm TCUE}_N$ power spectrum, a determinantal structure  \cite{M-2004,PF-book} of spectral correlation functions in the ${\rm TCUE}_N$ should be established. This is summarized in Lemma \ref{corr-f-tcue} below.

\begin{lemma} \label{corr-f-tcue} For $\ell =1,\dots,N$, the $\ell$-point correlation function \cite{M-2004,PF-book}
\begin{eqnarray} \label{RLN-TCUE}
    R_{\ell,N}(\theta_1,\dots,\theta_\ell) = \frac{N!}{(N-\ell)!} \left(\prod_{j=\ell+1}^{N} \int_{0}^{2\pi} \frac{d\theta_j}{2\pi}
    \right) \, P_N(\theta_1,\dots,\theta_N)
\end{eqnarray}
in the ${\rm TCUE}_N$ ensemble, defined by Eqs.~(\ref{T-CUE}) and (\ref{TCUE-norm}), admits the determinantal representation
\begin{eqnarray}\label{rk-tcue-kappa}
    R_{\ell,N} (\theta_1,\dots,\theta_\ell) = {\det}_{1\le i,j \le \ell} \left[
        {\kappa}_{N} (\theta_i, \theta_j)
    \right],
\end{eqnarray}
where the ${\rm TCUE}_N$ scalar kernel
\begin{eqnarray}\label{kn-cd-new}
    {\kappa}_{N}(\theta,\theta^\prime) = {\mathcal S}_{N+1}(\theta-\theta^\prime) - \frac{1}{N+1} {\mathcal S}_{N+1}(\theta) {\mathcal S}_{N+1}(\theta^\prime)
\end{eqnarray}
is expressed in terms of the sine-kernel
\begin{eqnarray}
    {\mathcal S}_{N+1}(\theta) = \frac{\sin[(N+1)\theta/2]}{\sin(\theta/2)}
\end{eqnarray}
of the ${\rm CUE}_{N+1}$ ensemble.
\end{lemma}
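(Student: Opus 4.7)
The strategy is to identify the ${\rm TCUE}_N$ as the ${\rm CUE}_{N+1}$ conditioned on having one eigen-angle fixed at the origin, and then to extract the kernel of Eq.~(\ref{kn-cd-new}) as a rank-one correction to the ${\rm CUE}_{N+1}$ sine kernel via a Schur-complement computation.

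First, I would verify the conditioning identification. Splitting off the $\theta_0$-dependent factors in the $(N+1)$-variable Vandermonde product,
\begin{eqnarray}
\prod_{0\le i<j\le N}\left| e^{i\theta_i}-e^{i\theta_j}\right|^2 =
\prod_{1\le i<j\le N}\left| e^{i\theta_i}-e^{i\theta_j}\right|^2
\prod_{j=1}^N \left| e^{i\theta_0}-e^{i\theta_j}\right|^2, \nonumber
\end{eqnarray}
and evaluating at $\theta_0=0$, one sees that $P_N(\theta_1,\ldots,\theta_N)$ of Eq.~(\ref{T-CUE}) coincides with the ${\rm CUE}_{N+1}$ joint density $P_{N+1}^{{\rm CUE}}(0,\theta_1,\ldots,\theta_N)$, the normalization constants $1/(N+1)!$ matching. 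Since the one-point density of ${\rm CUE}_{N+1}$ is the constant $N+1$, the marginal of $\theta_0$ with respect to $d\theta_0/(2\pi)$ equals unity, and this identification is literally the conditional density of the remaining eigen-angles given $\theta_0=0$.

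Second, I would compare Eq.~(\ref{RLN-TCUE}) with the analogous marginalization formula for the $(\ell+1)$-point correlation function of ${\rm CUE}_{N+1}$. The identification $P_N(\theta_1,\ldots,\theta_N) = P_{N+1}^{{\rm CUE}}(0,\theta_1,\ldots,\theta_N)$ together with the prefactor ratio $N!/(N+1)! = 1/(N+1)$ immediately yields the conditional-intensity relation
\begin{eqnarray}
R_{\ell,N}(\theta_1,\ldots,\theta_\ell) = \frac{1}{N+1}\,R^{{\rm CUE}_{N+1}}_{\ell+1}(0,\theta_1,\ldots,\theta_\ell). \nonumber
\end{eqnarray}
Third, I would invoke the classical sine-kernel determinantal formula $R^{{\rm CUE}_{N+1}}_{\ell+1}(\theta_0,\ldots,\theta_\ell) = {\det}_{0\le i,j\le \ell}[{\mathcal S}_{N+1}(\theta_i-\theta_j)]$, set $\theta_0=0$, and perform a one-step Schur-complement reduction along the first row and column. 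Using ${\mathcal S}_{N+1}(0) = N+1$ and the evenness ${\mathcal S}_{N+1}(-\theta)={\mathcal S}_{N+1}(\theta)$, the $(\ell+1)\times(\ell+1)$ determinant factors as $(N+1)$ times the $\ell\times\ell$ determinant whose $(i,j)$ entry is ${\mathcal S}_{N+1}(\theta_i-\theta_j) - (N+1)^{-1}{\mathcal S}_{N+1}(\theta_i){\mathcal S}_{N+1}(\theta_j)$. Combining this with the conditional-intensity relation produces Eqs.~(\ref{rk-tcue-kappa}) and (\ref{kn-cd-new}).

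The only real subtlety is bookkeeping with the $1/(N+1)!$ normalization together with the measure $d\theta/(2\pi)$ throughout the marginalizations; beyond this the argument is a textbook manipulation of determinantal point processes and requires no new orthogonal-polynomial calculation, because the Christoffel-type transform $d\mu \mapsto |1-e^{i\theta}|^2\,d\mu$ is implemented here by the geometric device of freezing an extra eigen-angle at the origin.
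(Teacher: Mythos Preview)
Your argument is correct, but it follows a different route than the paper's own proof of this lemma. The paper establishes Eq.~(\ref{kn-cd-new}) directly from the orthogonal-polynomials-on-the-unit-circle machinery: it identifies the ${\rm TCUE}_N$ weight $W(z)=\tfrac12|1-z|^2$, writes the kernel as a Christoffel--Darboux sum over the associated Szeg\H{o}--Askey polynomials, and then simplifies the resulting expression by hand to obtain the rank-one-corrected sine kernel. Your approach instead exploits the conditioning interpretation (${\rm TCUE}_N$ as ${\rm CUE}_{N+1}$ with one eigen-angle frozen at zero) to write $R_{\ell,N}$ as $(N+1)^{-1}$ times an $(\ell+1)\times(\ell+1)$ ${\rm CUE}_{N+1}$ sine-kernel determinant, and then collapses that determinant to an $\ell\times\ell$ one by a single Schur-complement step. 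The paper in fact records your intermediate identity as a separate result (Lemma~\ref{correlation-f}), calling it ``self-evident,'' but does not carry out the Schur reduction to connect it back to Eq.~(\ref{kn-cd-new}); you have supplied that link. Your route is shorter and requires no explicit polynomial computation, since the Christoffel-type modification $d\mu\mapsto|1-z|^2\,d\mu$ is absorbed into the geometric device of the frozen eigen-angle; the paper's route, by contrast, shows concretely how the kernel arises from the OPUC formalism, which is the natural framework if one wanted to treat more general weights.
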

\begin{proof}
While the determinantal form [Eq.~(\ref{rk-tcue-kappa})] of spectral correlation functions is a universal manifestation of the $\beta=2$ symmetry of the circular ensemble \cite{M-2004,PF-book}, a precise form of the two-point scalar kernel $\kappa_N(\theta,\theta^\prime)$ depends on peculiarities of the ${\rm TCUE}_N$ probability measure encoded in the weight function ($z = e^{i\theta}$)
\begin{eqnarray} \label{wzm}
W(z)  = \frac{1}{2} |1-z|^2 = 1-\cos\theta
\end{eqnarray}
characterizing the ${\rm TCUE}_N$ measure in Eq.~(\ref{T-CUE}). For aesthetic reasons, it is convenient to compute a scalar kernel $\kappa_N(\theta,\theta^\prime)$
in terms of polynomials $\{\psi_j(z)\}$ orthonormal on the unit circle $|z|=1$
\begin{eqnarray}
    \frac{1}{2 i\pi} \oint_{|z|=1} \frac{dz}{z} \, W(z) \, \psi_\ell(z) \overline{\psi_m (z)} = \delta_{\ell m}
\end{eqnarray}
with respect to the weight function $W(z)$. In such a case, a scalar kernel is given by either of the two representations ($w=e^{i\theta^\prime}$):
\begin{eqnarray} \label{kernel-sum}
    \kappa_N(\theta,\theta^\prime) &=& \sqrt{W(z)W(w)}  \, \sum_{\ell=0}^{N-1} \psi_\ell(z) \, \overline{\psi_\ell(w)} \\
        \label{kernel-darboux}
        &=&  \sqrt{W(z)W(w)} \, \frac{\overline{\psi_N(w)}\, \psi_N(z) - \overline{\psi^*_N(w)} \psi^*_N(z)}{\bar{w} z -1}.
\end{eqnarray}
Equation~(\ref{kernel-darboux}), containing reciprocal polynomials
\begin{eqnarray} \label{rec-pol}
    \psi^*_\ell(z) = z^\ell \, \overline{\psi_\ell(1/\bar{z})},
\end{eqnarray}
follows from Eq.~(\ref{kernel-sum}) by virtue of the Christoffel-Darboux identity \cite{I-2005}.

Since for the ${\rm TCUE}_N$ weight function Eq.~(\ref{wzm}), the orthonormal polynomials are known as Szeg\"o-Askey polynomials (see \S18 in Ref.~\cite{NIST}),
\begin{eqnarray} \label{SAP-01}
    \psi_\ell(z) = \sqrt{\frac{2}{(\ell+1)(\ell+2)}} \;\, {}_2 F_1 \left( - \ell,2; -\ell; z \right),
\end{eqnarray}
the reciprocal Szeg\"o-Askey polynomials are readily available, too:
\begin{eqnarray} \label{SAP-02}
    \psi^*_\ell(z) =\sqrt{\frac{2 (\ell+1)}{\ell+2}}  \;\, {}_2 F_1 \left( - \ell,1; -\ell-1; z \right).
\end{eqnarray}
Hence, Eqs.~(\ref{kernel-darboux}), (\ref{SAP-01}) and (\ref{SAP-02}) furnish an explicit expression for the ${\rm TCUE}_N$ scalar kernel $\kappa_N(\theta,\theta^\prime)$.

This being said, we would like to represent the ${\rm TCUE}_N$  scalar kernel in a more suggestive form. To do so, we notice that Szeg\"o-Askey polynomials Eq.~(\ref{SAP-01}) admit yet another representation
\begin{eqnarray} \label{SAP-03}
    \psi_\ell(z) =
    \sqrt{\frac{2}{(\ell+1)(\ell+2)}}\; \sum_{j=1}^{\ell+1} j z^{j-1}.
\end{eqnarray}
Substituting it further into Eq.~(\ref{kernel-sum}), one obtains:
\begin{eqnarray} \fl \label{2pk-01}
    \kappa_N(\theta,\theta^\prime) &=& \frac{2 i}{N+1}  e^{-i (\theta-\theta^\prime)/2}
    \frac{\sin[\theta/2] \sin[\theta^\prime/2]}
    {\sin[(\theta-\theta^\prime)/2]} \sum_{j=0}^N \sum_{k=0}^N (N-j-k) z^j \bar{w}^k \\
    \fl
    &=& \frac{2 i}{N+1}  e^{-i (\theta-\theta^\prime)/2}
    \frac{\sin[\theta/2] \sin[\theta^\prime/2]}
    {\sin[(\theta-\theta^\prime)/2]} \sum_{j=0}^N \sum_{k=0}^N \left(N- z\frac{\partial}{\partial z}- {\bar w}\frac{\partial}{\partial \bar{w}}\right) z^j \bar{w}^k.
\end{eqnarray}
Owing to the representation of the ${\rm CUE}_N$ sine-kernel
\begin{eqnarray}
    {\mathcal S}_{N}(\theta) = \frac{\sin(N\theta/2)}{\sin(\theta/2)} = e^{-i (N-1) \theta/2} \sum_{j=0}^{N-1} z^j,
\end{eqnarray}
the above can further be reduced to
\begin{eqnarray}
    \kappa_N(\theta,\theta^\prime) &=& \frac{2}{N+1}  e^{i (N-1)(\theta-\theta^\prime)/2}
    \frac{\sin[\theta/2] \sin[\theta^\prime/2]}
    {\sin[(\theta-\theta^\prime)/2]} \nonumber\\
        &\times& \left(
            \frac{\partial}{\partial \theta^\prime} - \frac{\partial}{\partial \theta}
        \right)\, {\mathcal S}_{N+1}(\theta) {\mathcal S}_{N+1}(\theta^\prime).
\end{eqnarray}
Calculating derivatives therein, we derive
\begin{eqnarray}\label{kn-cd-new-100} \fl
    {\kappa}_{N}(\theta,\theta^\prime) = e^{i (N-1)(\theta-\theta^\prime)/2}
    \left({\mathcal S}_{N+1}(\theta-\theta^\prime) - \frac{1}{N+1} {\mathcal S}_{N+1}(\theta) {\mathcal S}_{N+1}(\theta^\prime)\right).
\end{eqnarray}
Spotting that the phase factor in Eq.~(\ref{kn-cd-new-100}) does not contribute to the determinant in Eq.~(\ref{rk-tcue-kappa}) completes the proof.
\end{proof}

\begin{remark}
An alternative determinantal representation of spectral correlation functions in the ${\rm TCUE}_N$ can be established if one
views the JPDF of the ${\rm TCUE}_N$ as the one of the traditional ${\rm CUE}_{N+1}$ ensemble, whose
lowest eigenangle is conditioned to stay at zero, as spelt out below.
\hfill $\blacksquare$
\end{remark}

\begin{lemma} \label{correlation-f} For $\ell =1,\dots,N$, the $\ell$-point correlation function, Eq.~(\ref{RLN-TCUE}), in the ${\rm TCUE}_N$ ensemble admits the determinantal representation
\begin{eqnarray}\label{rk-tcue}
    R_{\ell,N} (\theta_1,\dots,\theta_\ell) = \frac{1}{N+1} {\det}_{1\le i,j \le \ell+1} \left[
        {\mathcal S}_{N+1} (\theta_i -\theta_j)
    \right] \Big|_{\theta_{\ell+1}=0},
\end{eqnarray}
where ${\mathcal S}_{N+1}(\theta)$ is the ${\rm CUE}_{N+1}$ sine-kernel:
\begin{eqnarray}
    {\mathcal S}_{N+1}(\theta) = \frac{\sin[(N+1)\theta/2]}{\sin(\theta/2)}.
\end{eqnarray}
\end{lemma}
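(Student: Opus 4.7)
The plan is to realize the $\mathrm{TCUE}_N$ as the $\mathrm{CUE}_{N+1}$ conditioned so that one of its eigenangles sits at the origin, and then to pull back the standard sine-kernel determinantal formula for $\mathrm{CUE}_{N+1}$ correlation functions to the $\mathrm{TCUE}_N$ side.

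First, I would record the pointwise identity
\begin{equation*}
P_N^{\rm TCUE}(\theta_1,\dots,\theta_N) = P_{N+1}^{\rm CUE}(\theta_1,\dots,\theta_N,0),
\end{equation*}
which follows by directly inspecting Eq.~(\ref{T-CUE}) and the standard $\mathrm{CUE}_{N+1}$ joint density: the Vandermonde factor for $\mathrm{CUE}_{N+1}$ evaluated with one argument set to $0$ produces the extra weight $\prod_{j=1}^{N}|1-e^{i\theta_j}|^2$, and the prefactor $1/(N+1)!$ matches exactly, in agreement with the $\mathrm{TCUE}_N$ normalization Eq.~(\ref{TCUE-norm}).

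Next, I would translate this identity into a relation between correlation functions. Starting from the definition (\ref{RLN-TCUE}) and performing the integrations in the $\mathrm{CUE}_{N+1}$ correlator $R_{\ell+1,N+1}^{\rm CUE}(\theta_1,\dots,\theta_\ell,0)$ obtained by freezing one argument to zero and using the symmetry of the integrand under permutations, a straightforward combinatorial bookkeeping of the prefactors $N!/(N-\ell)!$ versus $(N+1)!/(N-\ell)!$ yields
\begin{equation*}
R_{\ell,N}^{\rm TCUE}(\theta_1,\dots,\theta_\ell) = \frac{1}{N+1}\, R_{\ell+1,N+1}^{\rm CUE}(\theta_1,\dots,\theta_\ell,0).
\end{equation*}

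Finally, I would invoke the classical sine-kernel determinantal formula for $\mathrm{CUE}_{N+1}$,
\begin{equation*}
R_{m,N+1}^{\rm CUE}(\theta_1,\dots,\theta_m) = \det_{1\le i,j\le m}\!\left[\mathcal{S}_{N+1}(\theta_i-\theta_j)\right],
\end{equation*}
(with the convention $\mathcal{S}_{N+1}(0)=N+1$, which is exactly what is needed for the diagonal entry $\theta_{\ell+1}=0$ of the bordered determinant in Eq.~(\ref{rk-tcue})). Applying this with $m=\ell+1$ and setting $\theta_{\ell+1}=0$ gives precisely the announced identity Eq.~(\ref{rk-tcue}).

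The only non-routine step is the conditioning reduction in the first paragraph; everything else is bookkeeping of combinatorial prefactors and an appeal to a standard result. As a consistency check, one may verify that expanding the bordered $(\ell+1)\times(\ell+1)$ determinant in Eq.~(\ref{rk-tcue}) along the last row and column reproduces the scalar-kernel representation of Lemma~\ref{corr-f-tcue}, with $\kappa_N(\theta,\theta')=\mathcal{S}_{N+1}(\theta-\theta')-\mathcal{S}_{N+1}(\theta)\mathcal{S}_{N+1}(\theta')/(N+1)$ appearing as a Schur-complement-type rearrangement; this reconciliation also confirms the overall factor $1/(N+1)$.
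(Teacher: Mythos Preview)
Your proof is correct and follows essentially the same approach as the paper: both realize ${\rm TCUE}_N$ as ${\rm CUE}_{N+1}$ conditioned on one eigenangle at the origin and then invoke the standard sine-kernel determinantal formula for ${\rm CUE}_{N+1}$. The paper compresses this into a single sentence (``self-evident''), whereas you spell out the pointwise identity of joint densities, the combinatorial bookkeeping of prefactors, and the Schur-complement consistency check with Lemma~\ref{corr-f-tcue}; the underlying idea is identical.
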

\begin{proof}
Equation (\ref{rk-tcue}) is self-evident as the determinant therein is the $(\ell+1)$-point correlation function in the ${\rm CUE}_{N+1}$ with one of the eigen-angles conditioned
to stay at zero whilst the denominator is the ${\rm CUE}_{N+1}$  mean density ${\mathcal S}_{N+1}(0)=N+1$.
\end{proof}

\begin{proposition} \label{prop-fred}
  The generating function $\Phi_N(\varphi;\zeta)$ in Eq.~(\ref{ps-tcue-1}) of Theorem \ref{Th-3} admits a Fredholm determinant representation
\begin{eqnarray}\label{Phi-FD}
    \Phi_N(\varphi;\zeta) = {\rm det} \big[
        \mathds{1} - \zeta  \mathbf{\hat{\kappa}}_N^{(0,\varphi)} \big],
\end{eqnarray}
where $\mathbf{\hat{\kappa}}_N^{(0,\varphi)}$ is an integral operator defined by
\begin{eqnarray} \label{Phi-FD-kappa}
    \big[\mathbf{\hat{\kappa}}_N^{(0,\varphi)} f\big] (\theta_1) = \int_{0}^{\varphi} \frac{d\theta_2}{2\pi} \kappa_N(\theta_1,\theta_2) \, f(\theta_2),
\end{eqnarray}
whilst $\kappa_N$ is the ${\rm TCUE}_N$ two-point scalar kernel specified in Lemma \ref{corr-f-tcue}.
\end{proposition}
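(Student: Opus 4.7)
\vspace{2mm}
\noindent
\textbf{Proof proposal.} The plan is to recognize that Eq.~(\ref{Phi-FD}) is an instance of the classical identity relating the probability generating function of the counting function of a determinantal point process to a Fredholm determinant. The input we need is already in hand: (i) the definition Eq.~(\ref{ps-tcue-2}) of $\Phi_N(\varphi;\zeta)$ in terms of the gap probabilities $E_N(\ell;\varphi)$, and (ii) the determinantal representation Eq.~(\ref{rk-tcue-kappa}) of the ${\rm TCUE}_N$ correlation functions in terms of the scalar kernel $\kappa_N(\theta,\theta^\prime)$ proved in Lemma~\ref{corr-f-tcue}.

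First, I would rewrite each $E_N(\ell;\varphi)$ by replacing the ``complementary'' integrals $\int_\varphi^{2\pi}$ in Eq.~(\ref{EN-TCUE}) via $\int_\varphi^{2\pi} = \int_0^{2\pi} - \int_0^\varphi$ and expanding by the multinomial formula. Integrating out the unrestricted variables against the symmetric joint density $P_N$ produces the $n$-point correlation functions $R_{n,N}$ of Eq.~(\ref{RLN-TCUE}) restricted to $(0,\varphi)$. The resulting inclusion--exclusion identity reads
\begin{equation}
E_N(\ell;\varphi)=\sum_{n=\ell}^{N}\frac{(-1)^{n-\ell}}{\ell!(n-\ell)!}\prod_{j=1}^{n}\int_0^{\varphi}\frac{d\theta_j}{2\pi}\,R_{n,N}(\theta_1,\dots,\theta_n).
\end{equation}

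Next I would substitute this into $\Phi_N(\varphi;\zeta)=\sum_{\ell=0}^{N}(1-\zeta)^\ell E_N(\ell;\varphi)$ and interchange the two summations. The inner sum over $\ell$ collapses by the binomial theorem,
\begin{equation}
\sum_{\ell=0}^{n}\binom{n}{\ell}(1-\zeta)^\ell (-1)^{n-\ell}=(-\zeta)^{n},
\end{equation}
leaving the compact expansion
\begin{equation}
\Phi_N(\varphi;\zeta)=\sum_{n=0}^{N}\frac{(-\zeta)^n}{n!}\prod_{j=1}^{n}\int_0^{\varphi}\frac{d\theta_j}{2\pi}\,R_{n,N}(\theta_1,\dots,\theta_n).
\end{equation}
Finally, I would invoke Lemma~\ref{corr-f-tcue} to replace $R_{n,N}(\theta_1,\dots,\theta_n)=\det_{1\le i,j\le n}[\kappa_N(\theta_i,\theta_j)]$. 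The series on the right is then, by definition, the Fredholm determinant expansion of $\mathds{1}-\zeta\mathbf{\hat{\kappa}}_N^{(0,\varphi)}$ with the integral operator of Eq.~(\ref{Phi-FD-kappa}), which establishes Eq.~(\ref{Phi-FD}). Termination of the series at $n=N$ is automatic since $R_{n,N}\equiv 0$ for $n>N$, consistent with the fact that $\mathbf{\hat{\kappa}}_N^{(0,\varphi)}$ has rank at most $N$.

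There is no genuine analytic obstacle here; the entire argument is combinatorial once the determinantal structure of the correlation functions is known. The only point that warrants care is the bookkeeping of the double sum and its rearrangement, as well as verifying that the trace-class/finite-rank nature of $\mathbf{\hat{\kappa}}_N^{(0,\varphi)}$ (which follows from Eq.~(\ref{kernel-sum}) exhibiting it as a finite sum of rank-one operators) legitimizes the identification of the explicit series with the Fredholm determinant. Both are standard and present no real difficulty.
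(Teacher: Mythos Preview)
Your proof is correct and follows essentially the same route as the paper: both arguments reduce $\Phi_N(\varphi;\zeta)$ to the finite Fredholm series $\sum_{n=0}^{N}\frac{(-\zeta)^n}{n!}\int_{(0,\varphi)^n}\det[\kappa_N(\theta_i,\theta_j)]$ and then identify it with $\det[\mathds{1}-\zeta\mathbf{\hat{\kappa}}_N^{(0,\varphi)}]$. The only cosmetic difference is that the paper reaches the intermediate expansion by expanding the product integral $\prod_j(\int_0^{2\pi}-\zeta\int_0^\varphi)$ in Eq.~(\ref{phintheta}) directly, whereas you first write $E_N(\ell;\varphi)$ via inclusion--exclusion and then collapse the double sum; the two manipulations are equivalent rearrangements of the same binomial identity.
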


\begin{proof}
To derive a Fredholm determinant representation of the power spectrum, we turn to Eq.~(\ref{ps-tcue-2}) rewriting it as a sum
\begin{eqnarray}
    \Phi_N(\varphi;\zeta) = \sum_{\ell=0}^{N} {{N}\choose{\ell}} \left( -\zeta \int_0^\varphi \right)^\ell  \left( \int_0^{2\pi} \right)^{N-\ell}
    \prod_{j=1}^N \frac{d\theta_j}{2\pi} \, P_N(\theta_1,\dots,\theta_N).\nonumber
\end{eqnarray}
Performing $(N-\ell)$ integrations, we obtain
\begin{eqnarray}
    \Phi_N(\varphi;\zeta) = \sum_{\ell=0}^{N} \frac{(-\zeta)^\ell}{\ell!} \left( \prod_{j=1}^\ell  \int_0^\varphi \frac{d\theta_j}{2\pi}\right) \, R_{\ell,N}(\theta_1,\dots,\theta_\ell), \nonumber
\end{eqnarray}
where $R_{\ell,N}(\theta_1,\dots,\theta_\ell)$ is the $\ell$-point correlation function in ${\rm TCUE}_N$ given by Eq.~(\ref{RLN-TCUE}). Its determinant representation Eq.~(\ref{rk-tcue-kappa})
yields the expansion
\begin{eqnarray}
    \Phi_N(\varphi;\zeta) = \sum_{\ell=0}^{N} \frac{(-\zeta)^\ell}{\ell!} \left(
    \prod_{j=1}^\ell \int_0^\varphi\frac{d\theta_j}{2\pi} \right)\,
    {\det}_{1\le i,j \le \ell} \left[
        {\kappa}_{N} (\theta_i, \theta_j)
    \right]. \nonumber
\end{eqnarray}
Here, $\kappa_N(\theta,\theta^\prime)$ is the two-point scalar kernel of the ${\rm TCUE}_N$ ensemble, see Lemma \ref{corr-f-tcue}
for its explicit form. Further, consulting, e.g., Appendix in Ref.~\cite{BK-2007}, one identifies
a sought Fredholm determinant representation given by Eqs.~(\ref{Phi-FD}) and (\ref{Phi-FD-kappa}).
\end{proof}

A Fredholm determinant representation of the power spectrum is particularly useful for asymptotic analysis of the power spectrum in the deep `infrared' limit $\omega \ll 1$
when $\zeta = 1-z \ll 1$.

\subsection{Power spectrum in ${\rm TCUE}_N$ as a Toeplitz determinant} \label{Toeplitz-sec}
To analyse the power spectrum in the limit $N \rightarrow \infty$ for $0<\omega<\pi$ being kept fixed, it is beneficial to represent the generating function $\Phi_N(\varphi;\zeta)$ [Eq.~(\ref{phintheta})] entering the exact solution Eq.~(\ref{ps-tcue-1}) with $\zeta = 1 - z$ in the form of a Toeplitz determinant with Fisher-Hartwig singularities.

\begin{proposition}\label{toep-prop}
The generating function $\Phi_N(\varphi;\zeta)$ in Eq.~(\ref{ps-tcue-1}) of Theorem \ref{Th-3} admits a Toeplitz determinant representation
  \begin{eqnarray} \label{GF-toeplitz-1}
    \Phi_N(\varphi; \zeta) = \frac{e^{i\varphi \tilde{\omega} N}}{N+1} \, D_N[f_{\tilde{\omega}}(z;\varphi)],
\end{eqnarray}
where $\tilde{\omega} = \omega/2\pi$, and
\begin{equation}\label{T-det-02}
    D_N[f_{\tilde{\omega}}(z;\varphi)] =  {\rm det}_{0 \le j,\ell \le N-1} \left( \frac{1}{2 i\pi} \oint_{|z|=1} \frac{dz}{z} \,z^{\ell-j}
    f_{\tilde{\omega}}(z;\varphi) \right)
\end{equation}
is the Toeplitz determinant whose Fisher-Hartwig symbol
\begin{eqnarray} \label{FHS}
    f_{\tilde{\omega}}(z;\varphi) = |z-z_1|^2 \left(\frac{z_2}{z_1}\right)^{\tilde{\omega}} g_{z_1,\tilde{\omega}} (z)\, g_{z_2,-\tilde{\omega}}(z)
\end{eqnarray}
possesses power-type singularity at $z=z_1 = e^{i\varphi/2}$ and jump discontinuities
\begin{eqnarray}
    g_{z_{j}, \pm{\tilde{\omega}}} (z) = \left\{
                                         \begin{array}{ll}
                                           e^{\pm i\pi \tilde{\omega}}, & \hbox{$0 \le {\rm arg\,} z < {\rm arg\,} z_{j}$} \\
                                           e^{\mp i\pi \tilde{\omega}}, & \hbox{${\rm arg\,} z_{j} \le {\rm arg\,} z < 2\pi$}
                                         \end{array}
                                       \right.
\end{eqnarray}
at $z = z_{1,2}$ with $z_2 = e^{i(2\pi -\varphi/2)}$.
\end{proposition}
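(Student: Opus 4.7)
The plan is to decompose the argument into two stages: first, an Andreief (alias Heine--Szeg\"{o}) reduction that converts the multidimensional integral in Eq.~(\ref{phintheta}) into the determinant of a scalar Toeplitz matrix with a single-variable symbol; second, two $D_N$-preserving isometries of the circle that identify that symbol with the Fisher--Hartwig symbol $f_{\tilde\omega}(z;\varphi)$ up to an explicit constant which accounts for the prefactor $e^{iN\varphi\tilde\omega}$.

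For Stage 1, I would rewrite the combined one-dimensional measure as $\int_0^{2\pi}(1-\zeta\,\mathbf{1}_{[0,\varphi]}(\theta))(\cdot)\,d\theta/(2\pi)$; since $z=1-\zeta$, the parenthesized factor equals $z$ on $[0,\varphi)$ and $1$ on $[\varphi,2\pi)$, so absorbing the intrinsic weight $|1-e^{i\theta}|^{2}$ yields the single-variable symbol
\[
\tilde{f}(e^{i\theta})=|1-e^{i\theta}|^{2}\bigl(z\,\mathbf{1}_{[0,\varphi)}(\theta)+\mathbf{1}_{[\varphi,2\pi)}(\theta)\bigr).
\]
Writing the Vandermonde factor as $\prod_{i<j}|e^{i\theta_i}-e^{i\theta_j}|^2=|\det(e^{i(k-1)\theta_j})|^2$ and applying Andreief's identity turns the $N$-fold integral into $N!$ times a Toeplitz determinant with symbol $\tilde f$; combining $N!$ with the $1/(N+1)!$ of Eq.~(\ref{phintheta}) leaves $\Phi_N(\varphi;\zeta)=\frac{1}{N+1}D_N[\tilde{f}]$.

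For Stage 2, I would show $D_N[\tilde{f}]=e^{iN\varphi\tilde\omega}D_N[f_{\tilde\omega}(z;\varphi)]$ by applying two symmetries that preserve the Toeplitz determinant. A reflection $\theta\mapsto 2\pi-\theta$ (which sends Fourier coefficients $c_n\mapsto c_{-n}$, transposing the Toeplitz matrix) moves the $z$-sector of $\tilde f$ to $[2\pi-\varphi,2\pi)$. A rotation $\theta\mapsto\theta-\varphi/2$ (multiplying $c_n$ by $e^{-in\varphi/2}$, a diagonal conjugation of the Toeplitz matrix) shifts the zero of $|1-e^{i\theta}|^2$ to $z_1=e^{i\varphi/2}$, producing the factor $|z-z_1|^2$ appearing in $f_{\tilde\omega}$. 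The reshuffled piecewise pattern then reads $z$ on $[0,\varphi/2)\cup[2\pi-\varphi/2,2\pi)$ and $1$ on $[\varphi/2,2\pi-\varphi/2)$. On the Fisher--Hartwig side, evaluating $g_{z_1,\tilde\omega}g_{z_2,-\tilde\omega}$ gives $1$, $e^{-2i\pi\tilde\omega}=z^{-1}$, $1$ on the very same three sectors, and choosing the branch $(z_2/z_1)^{\tilde\omega}=e^{i(2\pi-\varphi)\tilde\omega}=z\,e^{-i\varphi\tilde\omega}$ (the natural choice when $\arg$ is read in $[0,2\pi)$) makes $\tilde f=e^{i\varphi\tilde\omega}f_{\tilde\omega}$ as symbols on the circle. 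This pulls the factor $e^{iN\varphi\tilde\omega}$ out of the $N\times N$ determinant, which combined with Stage 1 gives the claimed identity.

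The main technical obstacle is the sector-by-sector branch bookkeeping for $(z_2/z_1)^{\tilde\omega}$ together with the jump factors $g_{z_1,\tilde\omega}g_{z_2,-\tilde\omega}$: the specific branch selected above is precisely what makes the middle-sector factor $z$ from $(z_2/z_1)^{\tilde\omega}$ cancel the $z^{-1}$ arising from the jump product, leaving a common constant $e^{i\varphi\tilde\omega}$ in all three sectors; any other branch would leave a residual power of $z=e^{i\omega}$ and spoil the claimed prefactor. Once this matching is settled, the remaining manipulations are routine change-of-variable identities for Toeplitz determinants on the unit circle.
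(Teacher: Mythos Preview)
Your proposal is correct and follows essentially the same route as the paper: apply Andr\'eief's identity to Eq.~(\ref{phintheta}) to obtain a Toeplitz determinant with the piecewise symbol $\tilde f$, then identify that symbol with the Fisher--Hartwig form $f_{\tilde\omega}$ up to the constant $e^{i\varphi\tilde\omega}$. The only difference is one of explicitness: the paper dispatches the symbol identification in a single sentence by citing Refs.~\cite{DIK-2014,CK-2015}, whereas your Stage~2 spells out concretely, via the determinant-preserving reflection $\theta\mapsto 2\pi-\theta$ and rotation $\theta\mapsto\theta-\varphi/2$, how the piecewise factor $(z,1)$ on $[0,\varphi)\cup[\varphi,2\pi)$ becomes the Fisher--Hartwig jump product $(z_2/z_1)^{\tilde\omega}g_{z_1,\tilde\omega}g_{z_2,-\tilde\omega}$ and why the branch choice $(z_2/z_1)^{\tilde\omega}=e^{i(2\pi-\varphi)\tilde\omega}$ is forced.
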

\begin{proof}
  Start with the multiple integral representation Eq.~(\ref{phintheta}) and
  make use of
Andr\'eief's formula \cite{A-1883,dB-1955}
\begin{eqnarray}
    \left(\prod_{j=1}^{N} \int_{{\mathcal L}} \frac{d\theta_j}{2\pi}\right)\, w(\theta_j) \, {\rm det}_{1\le j,\ell \le N} [f_{j-1}(\theta_\ell)]
    \, {\rm det}_{1\le j,\ell \le N} [g_{j-1}(\theta_\ell)] \nonumber\\
    \qquad = N! \, {\rm det}_{1\le j,\ell \le N} \left(
    \int_{{\mathcal L}} \frac{d\theta}{2\pi}\, w(\theta) f_{j-1}(\theta) g_{\ell-1}(\theta)
    \right)
\end{eqnarray}
in which the weight function is set to $w(\theta)=(1-\zeta \Theta(\theta) \Theta(\varphi-\theta))|1-e^{i\theta}|^2$, integration domain is chosen to be ${\mathcal L} = (0, 2\pi)$, and $f_{j-1}(\theta)=\overline{g_{j-1}(\theta)} = e^{i(j-1)\theta}$, to derive
\begin{eqnarray} \label{T-det-01}
     \Phi_N(\varphi; \zeta) = \frac{1}{N+1} {\rm det}_{0 \le j,\ell \le N-1} \left[ M_{j-\ell}(\varphi;\zeta) \right],
\end{eqnarray}
where
\begin{eqnarray} \label{Mjk}
   M_{j-\ell}(\varphi;\zeta)  = \left(
            \int_{0}^{2\pi} - \zeta \int_{0}^{\varphi}
        \right)\frac{d\theta}{2\pi} |1- e^{i\theta}|^2 e^{-i(j-\ell)\theta}.
\end{eqnarray}
Introduce a new integration variable $z=e^{i\theta}$ in Eq.~(\ref{Mjk}), adopt the standard terminology and notation of Refs. \cite{DIK-2014,CK-2015} to figure out equivalence of Eqs.~(\ref{T-det-01}) and (\ref{Mjk}) to the statement of the proposition.
\end{proof}

\section{Power spectrum in quantum chaotic systems: Large-$N$ limit} \label{T-section}

In the limit $N \rightarrow \infty$, the exact solution for the ${\rm TCUE}_N$ power spectrum should converge to a universal law. To determine it, we shall perform an asymptotic analysis of the exact solution Eqs.~(\ref{ps-tcue-1}) and (\ref{ps-tcue-3}), stated in Theorem \ref{Th-3}, with the generating function $\Phi_N(\varphi; \zeta)$ being represented as a Toeplitz determinant specified in Proposition \ref{toep-prop}.

\subsection{Uniform asymptotics of the Toeplitz determinant}

To perform the integral in Eq.~(\ref{ps-tcue-1}) in the limit $N \rightarrow \infty$, {\it uniform} asymptotics of the Toeplitz determinant Eq.~(\ref{T-det-02}) are required in the subtle regime of two merging singularities. In our case, one singularity is of a root type while the other one is of both root and jump types. Relevant uniform asymptotics were recently studied in great detail by Claeys and Krasovsky \cite{CK-2015} who used the Riemann-Hilbert technique.

Two different, albeit partially overlapping, asymptotic regimes in $\varphi$ can be identified.
\newline\newline\noindent
{\it Asymptotics at the `left edge'.}---Defining the left edge as the domain $0\le \varphi <\varphi_0$, where $\varphi_0$ is sufficiently small \footnote{In fact, here $\varphi_0 =2\pi -\epsilon$ with $\epsilon >0$.}, the following asymptotic expansion holds {\it uniformly} as $N \rightarrow \infty$ (see Theorems 1.5 and 1.8 in Ref.~\cite{CK-2015})
\begin{eqnarray} \label{Edge-1}
    \ln D_N[f_{\tilde{\omega}}(z;\varphi)] &=& \ln N - i(N-1) \tilde{\omega} \varphi - 2\tilde{\omega}^2 \ln \left(
        \frac{\sin(\varphi/2)}{\varphi/2}\right) \nonumber\\
         &+& \int_{0}^{- i N \varphi} \frac{ds}{s}\, \sigma(s) + {\mathcal O}(N^{-1+ 2 \tilde{\omega}}),
\end{eqnarray}
so that
\begin{eqnarray} \label{Edge-2} \fl
    \Phi_N(\varphi;\zeta) = e^{i \tilde{\omega} \varphi} \left(
        \frac{\sin(\varphi/2)}{\varphi/2}
    \right)^{-2\tilde{\omega}^2} \exp\left(
        \int_{0}^{- i N \varphi} \frac{ds}{s}\, \sigma(s)
    \right) \left( 1 + {\mathcal O}(N^{-1+ 2 \tilde{\omega}}) \right).
\end{eqnarray}
Here $\tilde\omega = \omega/2\pi$ is a rescaled frequency so that $z=1-\zeta=e^{2 i\pi\tilde\omega}$. The function $\sigma(s)$ is the fifth Painlev\'e transcendent defined as the solution to the nonlinear equation
\begin{equation} \label{PV-eq}
 s^2 (\sigma^{\prime\prime})^2 = \left(\sigma - s \sigma^\prime + 2 (\sigma^\prime)^2 \right)^2 - 4 (\sigma^\prime)^2\left(
                (\sigma^\prime)^2 - 1
        \right)
\end{equation}
subject to the boundary conditions \cite{TC-private} \footnote{Notice that, in distinction to Ref.~\cite{CK-2015}, we kept two reminder terms in Eq.~(\ref{bc-inf}) -- oscillatory and non-oscillatory, even though the latter term is subleading. The reason for this is that the function $\sigma(s)$ will subsequently appear in the integral Eq.~(\ref{global-T}) which will make the non-oscillatory reminder term dominant.}
\begin{eqnarray}\label{bc-inf} \fl
    \sigma(s) = -{\tilde \omega} s - 2{\tilde \omega}^2 + \frac{s\gamma(s)}{1+\gamma(s)} + {\mathcal O}\left( e^{-i|s|}
        |s|^{-1+2{\tilde \omega}}
    \right)+ {\mathcal O}\left(
        |s|^{-1}
    \right) \quad {\rm as} \quad  s\rightarrow - i\infty
\end{eqnarray}
and
\begin{eqnarray}\label{bc-zero}
    \sigma(s)=  {\mathcal O}\left(
        |s|\,\ln |s|
    \right)\quad {\rm as} \quad s\rightarrow - i0_+.
\end{eqnarray}
The function $\gamma(s)$ in Eq.~(\ref{bc-inf}) equals
\begin{eqnarray}\label{eq:gamma}
    \gamma(s) = \frac{1}{4} \left|
        \frac{s}{2}
    \right|^{2(-1+2\tilde{\omega})} e^{-i |s|} e^{i\pi} \frac{\Gamma(2-\tilde{\omega}) \Gamma(1-\tilde{\omega})}{\Gamma(1+\tilde{\omega}) \Gamma(\tilde{\omega})}.
\end{eqnarray}
The above holds for $0 \le \tilde{\omega} < 1/2$.

\begin{remark}
Following Ref.~\cite{CK-2015}, we notice that in Eqs.~(\ref{Edge-1}) and (\ref{Edge-2}) the path of integration in the complex $s$-plane should be chosen to avoid a finite number of poles $\{ s_j \}$ of $\sigma(s)$ corresponding to zeros $\{ \varphi_j = i s_j/N \}$ in the asymptotics of the Toeplitz determinant $D_N[f_{\tilde{\omega}}(z;\varphi)]$. For the specific Fisher-Hartwig symbol Eq.~(\ref{FHS}) we expect $\{ s_j \}$ to be the empty set; numerical analysis of $D_N[f_{\tilde{\omega}}(z;\varphi)]$ suggests that its zeros stay away from the real line.
\hfill $\blacksquare$
\end{remark}
\noindent\newline\noindent
{\it Asymptotics in the `bulk'.}---Defining the `bulk' as the domain $\Omega(N)/N \le \varphi <\varphi_0$, where $\varphi_0$ is sufficiently small, and $\Omega(x)$ is a
positive smooth function such that $\Omega(N) \rightarrow \infty$ whilst $\Omega(N)/N \rightarrow 0$
as $N \rightarrow \infty$, the following asymptotic expansion holds {\it uniformly} (see Theorem 1.11 in Ref.~\cite{CK-2015}):
\begin{equation}\label{eq:DIK}\fl
    D_N[f_{\tilde{\omega}}(z;\varphi)] = N^{1-2\tilde{\omega}^2} G_{\tilde{\omega}}\, e^{i\tilde{\omega} \varphi} e^{-i\tilde{\omega}\pi} \left|2\sin \left( \frac{\varphi}{2} \right) \right|^{-2\tilde{\omega}^2}  \left( 1 + {\mathcal O}\left(\Omega(N)^{-1+2\tilde{\omega}}\right)  \right)
\end{equation}
so that
\begin{eqnarray}
\label{Th-111} \fl
\Phi_N(\varphi; \zeta)&=& N^{-2\tilde{\omega}^2} G_{\tilde{\omega}}\, e^{i \tilde{\omega} \varphi (N+1)} e^{-i\tilde{\omega}\pi}
        \left|
            2 \sin \left( \frac{\varphi}{2} \right)
        \right|^{-2\tilde{\omega}^2}\,\left( 1 + {\mathcal O}\left(\Omega(N)^{-1+2\tilde{\omega}}\right)  \right).
\end{eqnarray}
Here, $G_{\tilde{\omega}}$ is a known function of $\tilde{\omega}$
\begin{eqnarray}
    G_{\tilde{\omega}} = G(2+\tilde{\omega}) G(2-\tilde{\omega}) G(1+\tilde{\omega}) G(1-\tilde{\omega})
\end{eqnarray}
with $G(\cdots)$ being the Barnes' $G$-function. The above holds for $0 \le \tilde{\omega} < 1/2$. The leading term in Eqs.~(\ref{eq:DIK}) and (\ref{Th-111}) is due to Ehrhardt \cite{E-2001}.
\begin{remark}
  Since both asymptotic expansions [Eq.~(\ref{Edge-1}) and (\ref{eq:DIK})] hold uniformly in the domain
    $\Omega(N)/N \le \varphi <\varphi_0$, the following integral identity for $\sigma(s)$ should hold:
\begin{eqnarray} \label{global}
    \lim_{T\rightarrow +\infty} \left(
        \int_{0}^{-i T} \frac{ds}{s} \, \sigma(s) - i\tilde{\omega} T +2\tilde{\omega}^2 \ln T
    \right) = - i \pi \tilde{\omega} + \ln G_{\tilde{\omega}},
\end{eqnarray}
see Eq.~(1.26) in Ref.~\cite{CK-2015}. Had this global condition been derived independently, it would have provided an alternative route to producing the `bulk'
asymptotics out of those known in the edge region. Notice that as $T\rightarrow \infty$, the boundary condition Eq.~(\ref{bc-inf})
implies a stronger statement:
\begin{eqnarray} \label{global-T}
    \int_{0}^{-i T} \frac{ds}{s} \, \sigma(s) - i\tilde{\omega} T +2\tilde{\omega}^2 \ln T
     = - i \pi \tilde{\omega} + \ln G_{\tilde{\omega}} + {\mathcal O}(T^{-1}).
\end{eqnarray}
\hfill $\blacksquare$
\end{remark}

\subsection{Asymptotic analysis of the main integral}
In doing the large-$N$ asymptotic analysis of our exact solution for the power spectrum [Eqs. (\ref{ps-tcue-1}) and (\ref{phin})], we shall encounter
a set of integrals
\begin{eqnarray} \label{i-n-k}
    I_{N,k}(\zeta) = N \int_{0}^{2\pi} \frac{d\varphi}{2\pi} \, \varphi^k \Phi_N(\varphi;\zeta),
\end{eqnarray}
where $k$ is a non-negative integer and $\Phi_N(\varphi;\zeta)$ is given by Eq.~(\ref{phintheta}). We shall specifically be interested in $k=0$ and $1$.

\begin{lemma} \label{Lemma-IN0}
    In the notation of Eq.~(\ref{i-n-k}), we have:
    \begin{eqnarray} \label{In0-exact}
    I_{N,0} (\zeta) = \frac{N}{N+1} \frac{1-(1-\zeta)^{N+1}}{\zeta}.
\end{eqnarray}
Equation (\ref{In0-exact}) is exact for any $\zeta \in \mathbb{C}$.
\end{lemma}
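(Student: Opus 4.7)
\medskip
\noindent
\textbf{Proof plan for Lemma \ref{Lemma-IN0}.}
The plan is to reduce the integral to a geometric series by computing one building block, $J_\ell := \int_0^{2\pi} (d\varphi/2\pi)\, E_N(\ell;\varphi)$, and showing that it is \emph{independent} of $\ell$. Once this is done, the defining expansion (\ref{ps-tcue-2}) gives
\begin{equation*}
    I_{N,0}(\zeta) = N\sum_{\ell=0}^{N} (1-\zeta)^\ell J_\ell,
\end{equation*}
and a single geometric summation will yield the advertised formula.

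To compute $J_\ell$, I would first transcribe Lemma \ref{Lemma-probs} to the circular setting: differentiating (\ref{EN-TCUE}) as in the proof of Lemma \ref{Lemma-probs} gives the TCUE analogue
\begin{equation*}
    \frac{d}{d\varphi}E_N(\ell;\varphi) \;=\; \frac{1}{2\pi}\bigl(p_\ell(\varphi)-p_{\ell+1}(\varphi)\bigr),
\end{equation*}
where $p_\ell$ is the density (w.r.t.\ $d\varphi/2\pi$) of the $\ell$-th ordered eigen-angle and $p_0\equiv p_{N+1}\equiv 0$; the prefactor $1/2\pi$ is the only bookkeeping subtlety and comes from the measure $d\theta_j/2\pi$ used in (\ref{EN-TCUE}). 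Next I would integrate by parts against $d\varphi/2\pi$, using the boundary evaluations $E_N(\ell;0)=\delta_{\ell,0}$ and $E_N(\ell;2\pi)=\delta_{\ell,N}$ that follow directly from (\ref{EN-TCUE}) and the normalization (\ref{TCUE-norm}), to obtain
\begin{equation*}
    J_\ell \;=\; \delta_{\ell,N} \;-\; \frac{1}{2\pi}\bigl(\langle\theta_\ell\rangle-\langle\theta_{\ell+1}\rangle\bigr),
\end{equation*}
with the convention $\langle\theta_0\rangle=\langle\theta_{N+1}\rangle=0$ inherited from $p_0=p_{N+1}=0$.

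The key input is now stationarity: Corollary \ref{corr-theta-k} asserts $\langle\theta_\ell\rangle = 2\pi\ell/(N+1)$ for $\ell=1,\dots,N$. Substituting into the previous display collapses both cases $\ell<N$ and $\ell=N$ to the single identity
\begin{equation*}
    J_\ell \;=\; \frac{1}{N+1},\qquad \ell=0,1,\dots,N.
\end{equation*}
(As a sanity check, summing over $\ell$ uses $\sum_\ell E_N(\ell;\varphi)\equiv 1$, giving $\sum_\ell J_\ell=1$, consistent with $(N+1)\cdot\tfrac{1}{N+1}=1$.) Plugging this into the expression above for $I_{N,0}(\zeta)$ and summing the geometric series yields the claimed formula for every $\zeta\in\mathbb{C}$, with $\zeta=0$ understood as the obvious limit.

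There is no real obstacle in this argument: the one point that requires some care is the factor of $2\pi$ when adapting Lemma \ref{Lemma-probs} from the half-line to the circle, since here $p_\ell$ is normalized against $d\varphi/2\pi$ rather than $d\varphi$; everything else is a direct application of stationarity (Corollary \ref{corr-theta-k}) and a geometric sum.
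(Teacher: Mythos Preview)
Your proposal is correct and follows essentially the same approach as the paper: both expand $\Phi_N$ via (\ref{ps-tcue-2}), compute $\int_0^{2\pi}(d\varphi/2\pi)\,E_N(\ell;\varphi)$ by integration by parts combined with the circular analogue of Lemma~\ref{Lemma-probs}, invoke Corollary~\ref{corr-theta-k} to reduce it to $1/(N+1)$, and finish with a geometric sum. Your care with the $1/2\pi$ normalization and the sanity check $\sum_\ell J_\ell=1$ are nice touches, but there is no substantive difference from the paper's argument.
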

\begin{proof}
To compute the integral Eq.~(\ref{i-n-k}) at $k=0$, we invoke the expansion Eq.~(\ref{ps-tcue-2}) of $\Phi_N(\varphi;\zeta)$ in terms
of probabilities $E_N(\ell;\varphi)$ of observing exactly $\ell$ eigenangles of ${\rm TCUE}_N$ in the interval $(0,\varphi)$,
\begin{eqnarray} \label{In0-expansion}
    I_{N,0}(\zeta) = N \sum_{\ell=0}^n (1-\zeta)^\ell \int_{0}^{2\pi} \frac{d\varphi}{2\pi}\,  E_N(\ell;\varphi).
\end{eqnarray}
The integral above can readily be calculated by performing integration by parts:
\begin{eqnarray} \fl
    \int_{0}^{2\pi} \frac{d\varphi}{2\pi}\,  E_N(\ell;\varphi) = \delta_{\ell,N} - \int_{0}^{2\pi}\frac{d\varphi}{2\pi}\, \varphi \frac{d}{d\varphi} E_N(\ell;\varphi)
    \nonumber\\
    = \delta_{\ell,N} + \frac{1}{2\pi}\int_{0}^{2\pi}\frac{d\varphi}{2\pi}\, \varphi \left( p_{\ell+1}(\varphi) - p_{\ell}(\varphi) \right).
\end{eqnarray}
In the second line, we have used the relation Eq.~(\ref{Lemma-probs-1}) which, in the context of ${\rm TCUE}_N$, acquires the multiplicative factor $1/2\pi$ in its r.h.s.; there,
$p_\ell(\varphi)$ is the probability density of the $\ell$-th ordered eigenangle. Further, identifying (see Corollary \ref{corr-theta-k})
\begin{eqnarray}
    \int_{0}^{2\pi}\frac{d\varphi}{2\pi}\, \varphi \, p_{\ell}(\varphi) = \langle \theta_\ell \rangle = \left\{
                                                                                                    \begin{array}{ll}
                                                                                                      \ell \Delta, & \hbox{$\ell=1,\dots,N$;} \\
                                                                                                      0, & \hbox{$\ell=0, N+1$.}
                                                                                                    \end{array}
                                                                                                  \right.
\end{eqnarray}
where $\Delta = 2\pi/(N+1)$ is the mean spacing, we conclude that
\begin{eqnarray} \label{IN0-res}
    \int_{0}^{2\pi} \frac{d\varphi}{2\pi}\,  E_N(\ell;\varphi) = \delta_{\ell,N} + \frac{\langle \theta_{\ell+1}\rangle - \langle \theta_{\ell}\rangle}{2\pi} = \frac{1}{N+1}
\end{eqnarray}
for all $\ell=0,\dots,N$. Substitution of Eq.~(\ref{IN0-res}) into Eq.~(\ref{In0-expansion}) ends the proof.
\end{proof}

\begin{remark}
    The fact that $I_{N,0} (\zeta)$ could be expressed in terms of elementary functions can be traced back to stationarity of level spacings in the ${\rm TCUE}_N$. For one, in the ${\rm CUE}_N$, an analogue of $I_{N,0} (\zeta)$ would have to be expressed in terms of the six Painlev\'e function.
\hfill $\blacksquare$
\end{remark}
\noindent\newline
{\it The integral $I_{N,k}$.}---Unfortunately, {\it exact} calculation of the same ilk is not readily available for $I_{N,k}$ with $k=1$. For this reason we would like to gain an insight from Eq.~(\ref{In0-exact}) as $N \rightarrow \infty$, which, eventually, is the limit we are mostly concerned with. To this end, we extract the leading order behavior of $I_{N,0}(\zeta)$ on the unit circle $|z|=|1-\zeta|=1$,
\begin{eqnarray} \label{i-n-0-asymp}
I_{N,0}(\zeta) = \frac{1}{\zeta} + (1-\zeta)^N \frac{1}{\bar{\zeta}} + {\mathcal O}(N^{-1}),
\end{eqnarray}
and observe that it contains terms of two types. (i) Those bearing a strongly oscillating prefactor $(1-\zeta)^N = z^N = e^{2 i \pi \tilde{\omega} N}$,
$$
(1-\zeta)^N \frac{1}{\bar{\zeta}}
$$
are contributed by a vicinity of $\varphi=2\pi$ in the integral Eq.~(\ref{i-n-k}) with $k=0$. (ii) On the contrary, such a prefactor is missing in the term coming from a vicinity of $\varphi=0$,
$$
\frac{1}{\zeta}.
$$
The contribution from the bulk of the integration domain appears to be negligible due to strong oscillations $e^{i \tilde{\omega}\varphi N}$ of the integrand therein, see Eq.~(\ref{Th-111}).

Equipped with these observations, we shall now proceed with an alternative, large-$N$, analysis of $I_{N,k}(\zeta)$ for $k=0$ and $k=1$, where terms of the same structure (with and without strongly oscillating prefactor) will appear. Aimed at the analysis of the power spectrum [Eq.~(\ref{ps-tcue-1})], whose representation contains a very particular $z$-operator, we shall only be interested in the leading order contributions to both terms. Notably, even though for $k=1$ a non-oscillating term is subleading as compared to an oscillating term, we shall argue that its contribution should still be kept.

To proceed with the large-$N$ analysis of $I_{N,k}$, we first rewrite the integral Eq.~(\ref{i-n-k}) as a sum of two
\begin{eqnarray} \label{I-n-k-1-2}
    I_{N,k}(\zeta) = I_{N,k}^{(1)}(\zeta) + I_{N,k}^{(2)}(\zeta)
\end{eqnarray}
such that
\begin{eqnarray} \label{i-n-k-1}
    I_{N,k}^{(1)}(\zeta) = N \int_{0}^{2\pi} \frac{d\varphi}{2\pi} \, \varphi^k \left( \Phi_N(\varphi;\zeta) - \Phi_N^{{\rm E}}(\varphi;\zeta) \right)
\end{eqnarray}
and
\begin{eqnarray} \label{i-n-k-2}
    I_{N,k}^{(2)}(\zeta) = N \int_{0}^{2\pi} \frac{d\varphi}{2\pi} \, \varphi^k \, \Phi_N^{{\rm E}}(\varphi;\zeta).
\end{eqnarray}
Here, $\Phi_N^{{\rm E}}(\varphi;\zeta)$ is an arbitrary integrable function; it will be specified later on.

Prompted by the `edge' and `bulk' asymptotic expansions of $\Phi_N(\varphi;\zeta)$ [Eqs.~(\ref{Edge-2}) and (\ref{Th-111})], we split the integral in Eq.~(\ref{i-n-k-1})
into three pieces
\begin{eqnarray} \label{I-1}
    I_{N,k}^{(1)}(\zeta) = L^{(1)}_{N,k}(\zeta) + C^{(1)}_{N,k}(\zeta)+ R^{(1)}_{N,k}(\zeta),
\end{eqnarray}
where
\begin{eqnarray}\label{L-i-1}
    L^{(1)}_{N,k}(\zeta) &=& N \int_{0}^{\Omega(N)/N} \frac{d\varphi}{2\pi} \, \varphi^k \left( \Phi_N(\varphi;\zeta) - \Phi_N^{{\rm E}}(\varphi;\zeta) \right),\\
    \label{C-i-1}
    C^{(1)}_{N,k}(\zeta) &=& N \int_{\Omega(N)/N}^{2\pi - \Omega(N)/N} \frac{d\varphi}{2\pi} \, \varphi^k \left( \Phi_N(\varphi;\zeta) - \Phi_N^{{\rm E}}(\varphi;\zeta) \right),\\
    \label{R-i-1}
    R^{(1)}_{N,k}(\zeta) &=& N \int_{2\pi - \Omega(N)/N}^{2\pi} \frac{d\varphi}{2\pi} \, \varphi^k \left( \Phi_N(\varphi;\zeta) - \Phi_N^{{\rm E}}(\varphi;\zeta) \right),
\end{eqnarray}
correspondingly.

To facilitate the asymptotic analysis, we would ideally like to choose $\Phi_N^{{\rm E}}(\varphi;\zeta)$ in such a way that the contribution of the `bulk' integral $C^{(1)}_{N,k}(\zeta)$ into $I_{N,k}^{(1)}(\zeta)$ becomes negligible. For the time being, let us {\it assume} that such a function is given by the leading term in Eq.~(\ref{Th-111}),
\begin{eqnarray} \label{ehrhardt}
    \Phi_N^{{\rm E}}(\varphi;\zeta) =  N^{-2\tilde{\omega}^2} G_{\tilde{\omega}}\, e^{i \tilde{\omega} \varphi (N+1)} e^{-i\tilde{\omega}\pi}
        \left|
            2 \sin \left( \frac{\varphi}{2} \right)
        \right|^{-2\tilde{\omega}^2}.
\end{eqnarray}
Then, $I_{N,k}^{(1)}(\zeta)$ will be dominated by the contributions coming from the `left-edge' [$L^{(1)}_{N,k}(\zeta)$] and the `right-edge' [$R^{(1)}_{N,k}(\zeta)$] parts of the integration domain. In fact, the contributions of the left and the right edges are related to each other; an exact relation between the two will be worked out and made explicit later on.
\noindent\newline\newline
{\it The integral $I_{N,k}^{(1)}(\zeta)$.}---Restricting ourselves to $k=0$ and $1$, we first consider the left-edge part $L_{N,k}^{(1)}(\zeta)$. Substituting Eqs.~(\ref{Edge-2}) and (\ref{ehrhardt}) into Eq.~(\ref{L-i-1}), we find, as $N\rightarrow \infty$:
\begin{eqnarray}\label{eq:L1-asym} \fl
    L^{(1)}_{N,k}(\zeta) =
                    N \int_0^{\Omega(N)/N} \frac{d\varphi}{2\pi} \varphi^k e^{i\tilde{\omega}\varphi}
                     \Bigg[ \left(\frac{\sin(\varphi/2)}{\varphi/2}\right)^{-2\tilde{\omega}^2}
            \exp\left(\int_0^{-i N \varphi} \frac{ds}{s} \sigma(s) \right) \nonumber\\
            \fl \qquad\qquad
            \times \left(1+\mathcal{O}(N^{-1+2\tilde{\omega}})\right)
            - N^{-2\tilde{\omega}^2} \left(2\sin(\varphi/2)\right)^{-2\tilde{\omega}^2} e^{i\tilde{\omega}\varphi N}
            e^{-i\tilde{\omega}\pi} G_{\tilde{\omega}}  \Bigg].
\end{eqnarray}
To get rid of $N$ in the integral over the Painlev\'e V transcendent, we make the substitution $\lambda=N \varphi$ to rewrite $ L^{(1)}_{N,k}(\zeta)$ in the
form
\begin{eqnarray} \fl
     L^{(1)}_{N,k}(\zeta) = \int_0^{\Omega(N)}\frac{d\lambda}{2\pi}\frac{\lambda^k}{N^k} e^{i\tilde{\omega}\lambda/N}\Bigg[ \left(\frac{\sin(\lambda/(2N))}{\lambda/(2N)}\right)^{-2\tilde{\omega}^2}
    \exp\left(\int_0^{-i \lambda} \frac{ds}{s} \sigma(s) \right) \nonumber\\
    \fl \qquad\qquad
        \times \left(1+\mathcal{O}(N^{-1+2\tilde{\omega}})\right)
     -N^{-2\tilde{\omega}^2} \left(2\sin(\lambda/(2N))\right)^{-2\tilde{\omega}^2} e^{i\tilde{\omega}\lambda}e^{-i\tilde{\omega}\pi}G_{\tilde{\omega}}  \Bigg].
\end{eqnarray}
Noting that $\lambda/N=\mathcal{O}(\Omega(N)/N)$ tends to zero as $N\rightarrow \infty$, we can further approximate $L^{(1)}_{N,k}(\zeta)$ as
\begin{eqnarray} \label{eq:L1Omega} \fl
    L^{(1)}_{N,k}(\zeta) =\frac{1}{N^k} \int_0^{\Omega(N)}\frac{d\lambda}{2\pi} \lambda^{k-2\tilde{\omega}^2} e^{i\tilde{\omega}\lambda}
    \Bigg[ \exp\left(\int_0^{-i \lambda} \frac{ds}{s} \sigma(s) -i\tilde{\omega}\lambda +2\tilde{\omega}^2 \ln\lambda \right)  \nonumber\\
    - e^{-i\tilde{\omega}\pi}G_{\tilde{\omega}}  \Bigg]
 +\mathcal{O}(\Omega(N)^{k+1} N^{-k-1+2\tilde{\omega}})+\mathcal{O}(\Omega(N)^{k+2} N^{-k-1}).
\end{eqnarray}
Next, one may use Eq.~(\ref{global-T}) to argue that replacing $\Omega(N)$ with infinity in Eq.~(\ref{eq:L1Omega}) produces an error term of the order
${\mathcal O}(\Omega(N)^{k-1-2\tilde{\omega}^2} N^{-k})$:
\begin{eqnarray} \label{eq:L1Omega-2} \fl
    L^{(1)}_{N,k}(\zeta) =\frac{1}{N^k} \int_0^{\infty}\frac{d\lambda}{2\pi} \lambda^{k-2\tilde{\omega}^2} e^{i\tilde{\omega}\lambda}
    \Bigg[ \exp\left(\int_0^{-i \lambda} \frac{ds}{s} \sigma(s) -i\tilde{\omega}\lambda +2\tilde{\omega}^2 \ln\lambda \right)  \nonumber\\
    - e^{-i\tilde{\omega}\pi}G_{\tilde{\omega}}  \Bigg]
 +\mathcal{O}(\Omega(N)^{k+1} N^{-k-1+2\tilde{\omega}}) \nonumber\\
     +\mathcal{O}(\Omega(N)^{k+2} N^{-k-1}) + {\mathcal O}(\Omega(N)^{k-1-2\tilde{\omega}^2} N^{-k}).
\end{eqnarray}
Further, choosing $\Omega(N)$ to be a slowly growing function, $\Omega(N) = \ln N$, one readily verifies that the third error term in Eq.~(\ref{eq:L1Omega-2})
is a dominant one out of the three as $0< \tilde{\omega} < 1/2$. Yet, it is smaller as compared to the
integral in Eq.~(\ref{eq:L1Omega-2}) by a factor $\Omega(N)^{k-1-2\tilde{\omega}^2}$ that tends to zero as $N\rightarrow \infty$. Thus, in the leading order,
we derive:
\begin{eqnarray} \label{eq:L1Omega-lead-a}
    L^{(1)}_{N,k}(\zeta) = \frac{1}{N^k} \mathfrak{L}_{k}^{(1)} (\zeta) + o(N^{-k}),
\end{eqnarray}
where
\begin{eqnarray} \fl \label{eq:L1Omega-lead-b}
    \mathfrak{L}_{k}^{(1)}(\zeta) = \int_0^{\infty}\frac{d\lambda}{2\pi} \lambda^{k-2\tilde{\omega}^2} e^{i\tilde{\omega}\lambda}
    \Bigg[ \exp\left(\int_0^{-i \lambda} \frac{ds}{s} \sigma(s) -i\tilde{\omega}\lambda +2\tilde{\omega}^2 \ln\lambda \right)
    - e^{-i\tilde{\omega}\pi}G_{\tilde{\omega}}  \Bigg], \nonumber\\
    {}
\end{eqnarray}
with $k=0$ and $1$.

Now, let us turn to the `right-edge' integral $R_{N,k}^{(1)}(\zeta)$. Due to the symmetry relation Eq.~(\ref{phin-sym}) shared by $\Phi_N^{{\rm E}}(\varphi;\zeta)$ too, we realize that the contributions of the left and the right edges are related to each other:
\begin{eqnarray} \fl \label{rhs}
   \overline{ R^{(1)}_{N,k}(\zeta)} = N (1-\bar{\zeta})^N \int_{0}^{\Omega(N)/N} \frac{d\varphi}{2\pi} \, (2\pi- \varphi)^k \left( \Phi_N(\varphi;\zeta) - \Phi_N^{{\rm E}}(\varphi;\zeta)\right).
\end{eqnarray}
Considering the integral in the r.h.s.~of Eq.~(\ref{rhs}) along the lines of the previous analysis, we conclude that the following formula holds as $N\rightarrow \infty$:
\begin{eqnarray} \label{eq:R1Omega-lead-a}
       R^{(1)}_{N,k}(\zeta) = (1-\zeta)^N \mathfrak{R}_k^{(1)}(\zeta) + o(1),
\end{eqnarray}
where
\begin{eqnarray} \fl \label{eq:R1Omega-lead-b}
    \overline{\mathfrak{R}_k^{(1)}(\zeta)} = (2\pi)^k \int_0^{\infty}\frac{d\lambda}{2\pi} \lambda^{-2\tilde{\omega}^2} e^{i\tilde{\omega}\lambda}
    \Bigg[ \exp\left(\int_0^{-i \lambda} \frac{ds}{s} \sigma(s) -i\tilde{\omega}\lambda +2\tilde{\omega}^2 \ln\lambda \right) \nonumber\\
        - e^{-i\tilde{\omega}\pi}G_{\tilde{\omega}}  \Bigg],
\end{eqnarray}
with $k=0$ and $1$.

Combining Eqs.~(\ref{eq:L1Omega-lead-a}),~(\ref{eq:L1Omega-lead-b}),~(\ref{eq:R1Omega-lead-a}) and (\ref{eq:R1Omega-lead-b}), we end up with the asymptotic result [Eq.~(\ref{I-1})]
\begin{eqnarray} \label{as-exp}
    I_{N,k}^{(1)}(\zeta) \mapsto \frac{1}{N^k} \mathfrak{L}_{k}^{(1)} (\zeta) + (1-\zeta)^N \mathfrak{R}_k^{(1)}(\zeta).
\end{eqnarray}
The notation $\mapsto$ was used here to stress that the r.h.s.~contains each leading order contribution of both terms, the oscillating and the non-oscillating, as discussed in the paragraph prior to Eq.~(\ref{I-n-k-1-2}).
\noindent\newline\newline
{\it The integral $I_{N,k}^{(2)}(\zeta)$.}---As soon as the function $\Phi_N^{\rm{E}}(\varphi;\zeta)$ contains a strongly oscillating factor $e^{i \tilde{\omega} \varphi N}$, the integral $I_{N,k}^{(2)}(\zeta)$ in Eq.~(\ref{i-n-k-2}) can be calculated by the stationary phase method \cite{T-2014}. Since there are no stationary points within the interval $(0,2\pi)$, the integral is dominated by contributions $L_{N,k}^{(2)}(\zeta)$ and $R_{N,k}^{(2)}(\zeta)$, coming from the vicinities of $\varphi=0$ and $\varphi=2\pi$, respectively.

\begin{lemma} \label{L-INK2}
    Let $I_{N,k}^{(2)}(\zeta)$ be defined by Eqs.~(\ref{i-n-k-2}) and (\ref{ehrhardt}), where $k$ is a fixed non-negative integer. Then, as $N\rightarrow \infty$, it can be represented in the following form:
    \begin{eqnarray}\label{i-n-k-2-sp}
    I_{N,k}^{(2)}(\zeta) = L_{N,k}^{(2)}(\zeta) + R_{N,k}^{(2)}(\zeta)
\end{eqnarray}
where
\begin{eqnarray}\label{eq:L2R2all}
    L^{(2)}_{N,k}(\zeta) &=& \frac{1}{N^k} \mathfrak{L}_k^{(2)} + o(N^{-k}), \\
    \label{eq:L2R2all-2}
    R^{(2)}_{N,k}(\zeta) &=& (1-\zeta)^N \mathfrak{R}_k^{(2)} + o(1),
\end{eqnarray}
and
\begin{eqnarray} \label{Lk2}
    \mathfrak{L}_k^{(2)}(\zeta) &=& \frac{G_{\tilde{\omega}}}{2\pi} \, e^{i\pi(k+1-2{\tilde{\omega}}-2{\tilde{\omega}}^2)/2} {\tilde{\omega}}^{-k-1+2{\tilde{\omega}}^2}
    \Gamma(k+1-2{\tilde{\omega}}^2),\\
    \label{Rk2}
    \mathfrak{R}_k^{(2)}(\zeta) &=&  \frac{G_{\tilde{\omega}}}{2\pi}\, e^{i\pi(-1+2{\tilde{\omega}}+2{\tilde{\omega}}^2)/2} {\tilde{\omega}}^{-1+2{\tilde{\omega}}^2} (2\pi)^{k} \Gamma(1-2{\tilde{\omega}}^2).
\end{eqnarray}
\end{lemma}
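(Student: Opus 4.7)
The plan is to carry out a non-stationary phase analysis of $I_{N,k}^{(2)}(\zeta)$. With $\Phi_N^{\rm E}$ as in Eq.~(\ref{ehrhardt}), the oscillating factor $e^{i\tilde{\omega}\varphi(N+1)}$ has no critical points in $(0,2\pi)$, so the dominant contributions come from neighborhoods of the endpoints $\varphi=0$ and $\varphi=2\pi$, where the amplitude $|2\sin(\varphi/2)|^{-2\tilde{\omega}^2}$ exhibits integrable algebraic singularities. Accordingly, I would split the interval at any interior point (say $\pi$) and treat the two halves as $L_{N,k}^{(2)}$ and $R_{N,k}^{(2)}$; the contribution of a fixed neighborhood of the interior point is negligible by repeated integration by parts, since $\Phi_N^{\rm E}$ is smooth and bounded there.

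For the left piece I would make the substitution $u=\tilde{\omega}\varphi N$, approximate $|2\sin(\varphi/2)|^{-2\tilde{\omega}^2}$ by its leading behavior $\varphi^{-2\tilde{\omega}^2}=(u/\tilde{\omega}N)^{-2\tilde{\omega}^2}$, and use $e^{i\tilde{\omega}\varphi}\to 1$ near $\varphi=0$. Up to the resulting Taylor errors (which produce subleading powers of $1/N$), this reduces $L_{N,k}^{(2)}$ to the rescaled integral
\begin{equation*}
L_{N,k}^{(2)}(\zeta)\;\simeq\;\frac{G_{\tilde{\omega}}\,e^{-i\tilde{\omega}\pi}}{N^k}\,\tilde{\omega}^{-k-1+2\tilde{\omega}^2}\int_{0}^{\infty}\frac{du}{2\pi}\,u^{k-2\tilde{\omega}^2}\,e^{iu},
\end{equation*}
the upper limit having been pushed from $\tilde{\omega}\Omega(N)$ to $+\infty$ at subleading cost via a single integration by parts on the tail. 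The Mellin-type integral evaluates, by rotation of contour to the positive imaginary axis, to $\Gamma(k+1-2\tilde{\omega}^2)\,e^{i\pi(k+1-2\tilde{\omega}^2)/2}$, and combining this with the prefactor phase $e^{-i\tilde{\omega}\pi}$ reproduces exactly $\mathfrak{L}_k^{(2)}$ as stated in Eq.~(\ref{Lk2}).

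For the right piece I would substitute $\psi=2\pi-\varphi$. The symmetry $|2\sin((2\pi-\psi)/2)|=|2\sin(\psi/2)|$ preserves the singular amplitude, while $e^{i\tilde{\omega}\varphi(N+1)}$ factors as $e^{2\pi i\tilde{\omega}(N+1)}\,e^{-i\tilde{\omega}\psi(N+1)}=(1-\zeta)^N\,e^{2\pi i\tilde{\omega}}\,e^{-i\tilde{\omega}\psi(N+1)}$, and $\varphi^k$ localises to $(2\pi)^k$ near $\psi=0$. The ensuing Mellin integral is the complex conjugate of the left-edge one, yielding $\Gamma(1-2\tilde{\omega}^2)\,e^{-i\pi(1-2\tilde{\omega}^2)/2}$. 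Accumulating the phase factors $e^{2\pi i\tilde{\omega}}$, $e^{-i\tilde{\omega}\pi}$, and $e^{-i\pi(1-2\tilde{\omega}^2)/2}$ assembles to $e^{i\pi(-1+2\tilde{\omega}+2\tilde{\omega}^2)/2}$, reproducing $\mathfrak{R}_k^{(2)}$ in Eq.~(\ref{Rk2}) with the prefactor $(1-\zeta)^N$ announced in Eq.~(\ref{eq:L2R2all-2}).

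The principal obstacle is the error bookkeeping. One must verify that (i) the Taylor-remainder corrections to $|2\sin(\varphi/2)|^{-2\tilde{\omega}^2}$ and to $e^{i\tilde{\omega}\varphi}$ near the endpoints, (ii) the truncation of the rescaled integration domain, and (iii) the smooth-bulk contribution removed by integration by parts, all fall strictly below the stated leading orders $N^{-k}$ in $L_{N,k}^{(2)}$ and $1$ in $R_{N,k}^{(2)}$ uniformly for $\tilde{\omega}\in(0,1/2)$. These estimates reduce to power-counting on the Mellin tail combined with standard non-stationary phase bounds; a careful choice of cutoff scale $\Omega(N)$ (slowly divergent, as in the analysis of $I_{N,k}^{(1)}$) ensures that every error term is indeed absorbed into the announced $o(\cdot)$ remainders.
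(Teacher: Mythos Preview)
Your proposal is correct and follows precisely the route the paper intends: the paper's own proof consists of the single sentence ``Apply the stationary phase method \cite{T-2014} to calculate the integral Eq.~(\ref{i-n-k-2}),'' and your write-up supplies the details of that computation---endpoint localisation via rescaling, evaluation of the resulting Mellin integrals by contour rotation, and the phase bookkeeping---which indeed reproduce Eqs.~(\ref{Lk2}) and (\ref{Rk2}).
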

\begin{proof}
    Apply the stationary phase method \cite{T-2014} to calculate the integral Eq.~(\ref{i-n-k-2}).
\end{proof}

The Lemma \ref{L-INK2} brings the following asymptotic result
\begin{eqnarray}
    I_{N,k}^{(2)}(\zeta) \mapsto \frac{1}{N^k} \mathfrak{L}_{k}^{(2)} (\zeta) + (1-\zeta)^N \mathfrak{R}_k^{(2)}(\zeta),
\end{eqnarray}
compare with Eq.~(\ref{as-exp}).
\noindent\newline\newline
{\it The integral $I_{N,k}(\zeta)$.}---The calculation above implies that the main integral of our interest admits an asymptotic representation
\begin{eqnarray} \label{Ink-final}
        I_{N,k}(\zeta) \mapsto \frac{1}{N^k} \mathfrak{L}_k(\zeta) + (1-\zeta)^N \mathfrak{R}_k(\zeta)
\end{eqnarray}
with $k=0,1$ and
\begin{eqnarray} \label{Lkf}
        \mathfrak{L}_k(\zeta) &=& \mathfrak{L}_k^{(1)}(\zeta)+ \mathfrak{L}_k^{(2)}(\zeta), \\
        \label{Rkf}
        \mathfrak{R}_k(\zeta) &=& \mathfrak{R}_k^{(1)}(\zeta)+ \mathfrak{R}_k^{(2)}(\zeta).
\end{eqnarray}
We notice that both $\mathfrak{L}_k(\zeta)= \mathcal{O}(1)$ and $\mathfrak{R}_k(\zeta)= \mathcal{O}(1)$ and the factor $(1-\zeta)^N = z^N = e^{2 i \pi \tilde{\omega} N}$ in Eq.~(\ref{Ink-final}) is a strongly oscillating function of $\tilde{\omega}$ as $N \rightarrow \infty$, in concert with the discussion in the paragraph prior to Eq.~(\ref{I-n-k-1-2}).
\begin{remark}
  Our derivation of the main result of this Section, Eq.~(\ref{Ink-final}), was based on the assumption that the choice of $\Phi_N^{\rm E}(\varphi;\zeta)$ in the form Eq.~(\ref{ehrhardt}) makes the contribution of the `bulk' integral $C_{N,k}^{(1)}(\zeta)$ [Eq.~(\ref{C-i-1})] into $I_{N,k}^{(1)}(\zeta)$ negligible. If this is {\it not} the case, one should replace $\Phi_N^{\rm E}$ with some $\tilde{\Phi}_N^{\rm E}$ by adding to $\Phi_N^{\rm E}$ the higher-order corrections (up to ${\mathcal O(N^{-2})}$) that can be obtained from the full asymptotic expansion of $\Phi_N(\varphi;\zeta)$, see Remark 1.4 of Ref.~\cite{DIK-2014}. Inclusion of these higher-order corrections will reduce the contribution of $C_{N,k}^{(1)}(\zeta)$ to a negligible level as guaranteed by the rough upper-bound estimate
\begin{eqnarray} \label{upper-b}
    |C_{N,k}^{(1)}(\zeta)| &=& N \left|
           \int_{\Omega(N)/N}^{2\pi - \Omega(N)/N} \frac{d\varphi}{2\pi} \, \varphi^k \left( \Phi_N(\varphi;\zeta) - \Phi_N^{{\rm E}}(\varphi;\zeta) \right)
    \right| \nonumber\\
     &\le&
     N \int_{0}^{2\pi} \frac{d\varphi}{2\pi} \, \varphi^k \left| \Phi_N(\varphi;\zeta) - \tilde{\Phi}_N^{{\rm E}}(\varphi;\zeta) \right|
    = {\mathcal O}(N^{-1}).
\end{eqnarray}
On the other hand, the proposed modification of $\Phi_N(\varphi;\zeta)$ will produce corrections to the functions $L_{N,1}^{(1)}$, $R_{N,1}^{(1)}$, $L_{N,1}^{(2)}$ and $R_{N,1}^{(2)}$, which will clearly be subleading to those calculated in the leading order [see Eqs. (\ref{eq:L1Omega-lead-a}), (\ref{eq:R1Omega-lead-a}), (\ref{eq:L2R2all}),
(\ref{eq:L2R2all-2})]. For this reason, they will not affect the large-$N$ analysis of the power spectrum where only ${\mathcal O}(1)$ terms are kept.
\hfill $\blacksquare$
\end{remark}

\subsection{Proof of Theorem \ref{Th-4}}

Now we are in position to evaluate the power spectrum as $N\rightarrow \infty$. To proceed, we start with the exact, finite-$N$, representation
\begin{eqnarray} \fl \label{Sn-In1}
    S_N(\omega) = \frac{(N+1)^2}{\pi N^2} {\rm Re}\left\{ \left(  z \frac{\partial}{\partial z} - N - \frac{1-z^{-N}}{1-z}\right)
        \frac{z}{1-z} \, I_{N,1}(\zeta) \right\} - \dbtilde{S}_N(\omega)
\end{eqnarray}
following from Eqs.~(\ref{ps-tcue-1}) and (\ref{i-n-k}). Substituting $I_{N,1}$ given by Eq.~(\ref{Ink-final}) into Eq.~(\ref{Sn-In1}) and taking into account the
relation ${\mathfrak{R}}_{1}(\zeta)= 2\pi {\mathfrak{R}}_{0}(\zeta)$, following from Eqs.~(\ref{Rkf}), (\ref{Rk2}) and (\ref{eq:R1Omega-lead-b}),
we derive, as $N\rightarrow \infty$:
\begin{eqnarray} \fl \label{Snw}
    S_N(\omega) =  - \frac{1}{\pi}  \, {\rm Re}\, \Bigg\{ \frac{z}{1-z}\, {\mathfrak{L}}_{1}(\zeta)\Bigg\} +  2  {\rm Re}\, \Bigg\{
      \frac{z}{1-z} \left(  z^{N+1} \frac{d}{dz} {\mathfrak{R}}_{0}(\zeta)+\frac{{\mathfrak{R}}_{0}(\zeta)}{1-z}  \right) \Bigg\} \nonumber\\
        - 2 {\rm Re} \Bigg\{ \frac{(z-1)(1+z^N)}{|1-z|^4}
    \Bigg\} + o(1).
\end{eqnarray}
Here, the third term originates from the large-$N$ expansion of $\dbtilde{S}_N(\omega)$ [Eq.~(\ref{ps-tcue-3})]. Surprisingly, the last two terms in Eq.~(\ref{Snw}) cancel each other. This follows from the identity
\begin{eqnarray}\label{R0}
    \mathfrak{L}_0(\zeta) = \overline{\mathfrak{R}_0(\zeta)} = \frac{1}{\zeta}
\end{eqnarray}
that can be identified by comparing Eq.~(\ref{i-n-0-asymp}) with Eq.~(\ref{Ink-final}) taken at $k=0$. The cancellation implies the $N\rightarrow \infty$
result
\begin{eqnarray} \label{Snw-infty}
    S_\infty(\omega) =  - \frac{1}{\pi}  \, {\rm Re}\, \Bigg\{ \frac{z}{1-z}\, {\mathfrak{L}}_{1}(\zeta)\Bigg\}.
\end{eqnarray}
Substituting Eqs.~(\ref{Lkf}), (\ref{eq:L1Omega-lead-b}) and (\ref{Lk2}) into Eq.~(\ref{Snw-infty}),
we derive
\begin{eqnarray}\label{eq:SnResult} \fl
    S_\infty(\omega) = \frac{1}{\pi}  \, {\rm Re}\, \Bigg\{ \frac{e^{2 i \pi\tilde{\omega}}}{e^{2 i \pi\tilde{\omega}}-1}
    \Bigg( \int_0^{\infty}\frac{d\lambda}{2\pi} \lambda^{1-2\tilde{\omega}^2} e^{i\tilde{\omega}\lambda} \Bigg[ \exp\Big(\int_0^{-i \lambda} \frac{ds}{s} \sigma(s)
    -i\tilde{\omega}\lambda +2\tilde{\omega}^2 \ln\lambda \Big) \nonumber\\
    -e^{-i\tilde{\omega}\pi}G_{\tilde{\omega}}  \Bigg]
     - \frac{G_{\tilde{\omega}}}{2\pi} e^{-i\pi(\tilde{\omega}+\tilde{\omega}^2)} \tilde{\omega}^{-2+2\tilde{\omega}^2} \Gamma(2-2\tilde{\omega}^2)   \Bigg)\Bigg\},
\end{eqnarray}
where $\tilde{\omega} = \omega/2\pi$ is a rescaled frequency, and the function $\sigma(s)$ is the fifth Painlev\'e transcendent defined by Eqs.~(\ref{PV-eq}), (\ref{bc-inf}) and (\ref{bc-zero}). Equation~(\ref{eq:SnResult}) can be simplified down to
\begin{eqnarray}\label{eq:SnResult-2} \fl
    \qquad S_\infty(\omega)=  {\mathcal A}(\tilde{\omega}) \, \Bigg\{ {\rm Im}\,  \Bigg( \int_0^{\infty}\frac{d\lambda}{2\pi} \lambda^{1-2\tilde{\omega}^2}
    e^{i\tilde{\omega}\lambda} \nonumber\\
    \times \left[ \exp\left(\int_{-i \infty}^{-i\lambda} \frac{ds}{s}\left( \sigma(s) + s \tilde{\omega} +2\tilde{\omega}^2 \right)
    \right) -1  \right] \Bigg)
    +{\mathcal B}(\tilde{\omega}) \Bigg\},
\end{eqnarray}
where the functions ${\mathcal A}(\tilde\omega)$ and ${\mathcal B}(\tilde\omega)$ are defined as in Eqs.~(\ref{Aw-def}) and (\ref{Bw-def}). To derive Eq.~(\ref{eq:SnResult-2}) we have used the integral identity Eq.~(\ref{global}) to transform the exponent
\begin{eqnarray}\fl
    \exp \left(\int_0^{-i\lambda} \frac{\sigma(s)}{s} ds - i \tilde{\omega}\lambda+2\tilde{\omega}^2\ln \lambda \right)
    = G_{\tilde{\omega}} e^{-i\pi\tilde{\omega}} \nonumber\\
    \times \lim_{T\rightarrow\infty} \exp \left[ \int_{-i T}^{-i \lambda} \frac{\sigma(s)}{s} ds + i\tilde{\omega}(T-\lambda)
    + 2\tilde{\omega}^2 \ln(\lambda/T) \right]\nonumber\\
   = G_{\tilde{\omega}} e^{-i\pi\tilde{\omega}}  \exp \left[ \int_{-i \infty}^{-i \lambda} \frac{ds}{s} \left(\sigma(s)+\tilde{\omega} s +2\tilde{\omega}^2 \right) \right].
\end{eqnarray}
Finally, we notice that $\sigma(s=-i t)=\sigma_1(t)$ satisfies Eq.~(\ref{PV-family})
with $\nu=1$ supplemented by the boundary conditions Eqs.~(\ref{bc-s1-infty}) and (\ref{bc-s1-zero}). With help of this, we recover the statement of Theorem \ref{Th-4}
from Eq.~(\ref{eq:SnResult-2}).
\hfill $\square$
\begin{remark}\label{convergence}
  Note that the global condition Eq.~(\ref{global-T}) ensures that the expression in the square brackets in Eq.~(\ref{eq:SnResult}) exhibits ${\mathcal O}(\lambda^{-1})$ behavior as $\lambda \rightarrow \infty$. This guarantees that the external $\lambda$-integral in Eq.~(\ref{eq:SnResult}) converges for any $\tilde\omega \in (0, 1/2)$.
\hfill $\blacksquare$
\end{remark}

\begin{remark}
Notice that Eq.~(\ref{R0}) combined with Eqs.~(\ref{Rkf}), (\ref{Rk2}) and (\ref{eq:R1Omega-lead-b}) taken at $k=0$, motivates the following conjecture.
\hfill $\blacksquare$
\end{remark}

\begin{conjecture} \label{conj}
Let $0<\tilde{\omega}<1/2$ and let $\sigma(s)$ be the solution of the fifth Painlev\'e transcendent satisfying Eq.~(\ref{PV-eq}) and the boundary conditions
Eq.~(\ref{bc-inf}) -- (\ref{eq:gamma}). Then the following double integral relation holds
\begin{eqnarray}\label{eq:int_conjecture} \fl
    \int_0^{\infty}\frac{d\lambda}{2\pi} \lambda^{-2\tilde{\omega}^2} e^{i\tilde{\omega}\lambda} \left[ \exp\left(\int_0^{-i \lambda} \frac{ds}{s} \sigma(s)
     -i\tilde{\omega}\lambda +2\tilde{\omega}^2 \ln\lambda \right) -e^{-i\tilde{\omega}\pi}G_{\tilde{\omega}}  \right]
        \nonumber \\
        = \frac{1}{1-e^{2\pi i\tilde{\omega}}}
    -i \frac{G_{\tilde{\omega}}}{2\pi} e^{-i\pi(\tilde{\omega}+\tilde{\omega}^2)} \tilde{\omega}^{-1+2\tilde{\omega}^2} \Gamma(1-2\tilde{\omega}^2).
\end{eqnarray}
\hfill $\blacksquare$
\end{conjecture}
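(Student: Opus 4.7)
The plan is to prove~(\ref{eq:int_conjecture}) by matching two independent expansions of the single quantity $I_{N,0}(\zeta)$ on the unit circle $|1-\zeta|=1$. The first expansion is Lemma~\ref{Lemma-IN0}, which delivers the \emph{closed form}~(\ref{In0-exact}); using the elementary identity $(1-\zeta)/\zeta=-1/\bar\zeta$, valid for $z=e^{i\omega}=1-\zeta$ on $|z|=1$, this rearranges to $I_{N,0}(\zeta)=1/\zeta+(1-\zeta)^N/\bar\zeta+\mathcal{O}(N^{-1})$, cleanly isolating an $N$-independent piece and an explicit coefficient for the strongly oscillating factor $(1-\zeta)^N$. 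The second expansion is built by running the machinery of Section~\ref{T-section} at $k=0$.

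I would repeat the decomposition $I_{N,0}=I_{N,0}^{(1)}+I_{N,0}^{(2)}$ with $\Phi_N^{\rm E}$ given by the Ehrhardt leading term~(\ref{ehrhardt}), split $I_{N,0}^{(1)}$ into left-edge, bulk, and right-edge pieces as in~(\ref{I-1}), and apply the Claeys--Krasovsky uniform edge asymptotics~(\ref{Edge-2}) together with the symmetry~(\ref{phin-sym}); the stationary-phase evaluation of $I_{N,0}^{(2)}$ is provided by Lemma~\ref{L-INK2}. The outcome, parallel to~(\ref{Ink-final}), is
\begin{eqnarray*}
I_{N,0}(\zeta)=\left[\mathfrak{L}_0^{(1)}(\zeta)+\mathfrak{L}_0^{(2)}\right]+(1-\zeta)^N\left[\mathfrak{R}_0^{(1)}(\zeta)+\mathfrak{R}_0^{(2)}\right]+o(1),
\end{eqnarray*}
the four coefficients being given explicitly by~(\ref{eq:L1Omega-lead-b}), (\ref{eq:R1Omega-lead-b}), (\ref{Lk2}), and (\ref{Rk2}) taken at $k=0$.

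Subtracting the two expansions yields $A(\tilde\omega)+e^{2\pi i N\tilde\omega}B(\tilde\omega)=o(1)$ with $A:=\mathfrak{L}_0^{(1)}+\mathfrak{L}_0^{(2)}-1/\zeta$ and $B:=\mathfrak{R}_0^{(1)}+\mathfrak{R}_0^{(2)}-1/\bar\zeta$, both $N$-independent. Choosing two subsequences of $N$ along which $e^{2\pi i N\tilde\omega}$ tends to distinct complex values (possible for any irrational $\tilde\omega$ by Weyl equidistribution) and invoking real-analyticity in $\tilde\omega\in(0,1/2)$ forces $A\equiv B\equiv 0$. Writing out $B\equiv 0$ after complex conjugation and simplifying the phase $e^{i\pi(-1+2\tilde\omega+2\tilde\omega^2)/2}=-ie^{i\pi(\tilde\omega+\tilde\omega^2)}$ hidden in $\mathfrak{R}_0^{(2)}$ reproduces~(\ref{eq:int_conjecture}) verbatim. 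The main obstacle is the rigorous control of the bulk piece $C_{N,0}^{(1)}(\zeta)$ in~(\ref{C-i-1}): at $k=0$ the decorative factor $\varphi^k$ that powered the subleading estimate~(\ref{upper-b}) at $k=1$ is absent, so the naive bound only yields $\mathcal{O}(1)$. To push it below $o(1)$ one must refine $\Phi_N^{\rm E}$ by appending the subleading uniform corrections to $D_N[f_{\tilde\omega}(z;\varphi)]$ from Remark~1.4 of Ref.~\cite{DIK-2014}, and then verify that the extra terms perturb none of $\mathfrak{L}_0^{(i)},\mathfrak{R}_0^{(i)}$ at leading order. This bookkeeping through the full Riemann--Hilbert asymptotic expansion is the delicate step; once executed, the matching argument of the previous paragraph closes the proof.
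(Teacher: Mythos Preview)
Your strategy is precisely the one the paper uses to \emph{motivate} the statement: compare the closed form of Lemma~\ref{Lemma-IN0} with the Painlev\'e-based asymptotic representation Eq.~(\ref{Ink-final}) at $k=0$ and read off the identity~(\ref{R0}). The paper does not prove the statement; it is explicitly recorded as a Conjecture because the authors could not rigorously control the bulk contribution $C^{(1)}_{N,0}$, exactly the obstacle you isolate. Your Weyl-equidistribution device for splitting $A+e^{2\pi iN\tilde\omega}B=o(1)$ into $A=0$ and $B=0$ is a pleasant sharpening the paper does not spell out, but it changes nothing about the core difficulty.

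One small correction: the factor $\varphi^{k}$ is not what ``powered'' the bound~(\ref{upper-b}) at $k=1$. That estimate is driven entirely by the size of $|\Phi_N-\tilde\Phi_N^{\rm E}|$ after upgrading $\Phi_N^{\rm E}$ with the higher-order Deift--Its--Krasovsky corrections; the $\varphi^{k}$ only contributes a bounded multiplicative constant, so the rough $\mathcal{O}(N^{-1})$ bound applies equally at $k=0$. The genuine issue---and the reason the paper stops at ``Conjecture''---is that neither you nor the authors actually carry out the verification that those subleading corrections (i) are uniform across the full arc and (ii) do not shift $\mathfrak{L}_0^{(j)},\mathfrak{R}_0^{(j)}$ at leading order. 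Your final paragraph names this honestly as ``the delicate step'' but does not execute it; until it is, the argument remains at the same heuristic level as the paper's own derivation, and the statement stays a conjecture.
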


\begin{remark}
    To extend the proof of Theorem \ref{Th-4} for $\omega=\pi$, one would have to use the Theorem~1.12 of Ref.~\cite{CK-2015} instead of Theorems 1.5, 1.8 and 1.11 of the same paper. Since numerical calculations indicate that the power spectrum is continuous at $\omega=\pi$, we did not study this case analytically.
\hfill $\blacksquare$
\end{remark}

\subsection{Proof of Theorem \ref{Th-5}}

Below, the universal law $S_\infty(\omega)$ for the power spectrum will be studied in the vicinity of $\omega=0$. In the language of $S_N(\omega)$ this corresponds to performing a small-$\omega$ expansion after taking the limit $N\rightarrow \infty$. Equation (\ref{PS-exact}) will be the starting point of our analysis.
\newline\newline\noindent
{\it Preliminaries.}---Being interested in the small-$\tilde\omega$ behavior of the power spectrum Eq.~(\ref{PS-exact}), we observe that the functions ${\mathcal A}(\tilde{\omega})$ and ${\mathcal B}(\tilde{\omega})$, defined by Eqs.~(\ref{Aw-def}) and (\ref{Bw-def}), admit the expansions
\begin{eqnarray}
    {\mathcal A}(\tilde{\omega}) =\frac{1}{2 \pi^2 \tilde{\omega}} + \left(
        \frac{1}{6} - \frac{1+\gamma}{\pi^2} \right) \tilde{\omega}
     + {\mathcal O}(\tilde\omega^3),
\end{eqnarray}
\begin{eqnarray}
    {\mathcal B}(\tilde{\omega}) =\frac{1}{2} + \tilde{\omega}^2 \ln \tilde{\omega} + (\gamma-1)\, \tilde{\omega}^2  +  {\mathcal O}(\tilde\omega^4 \ln^2\tilde{\omega}),
\end{eqnarray}
so that the power spectrum, as $\tilde{\omega} \rightarrow 0$, can be written as
\begin{eqnarray} \label{S_inf_exp} \fl
     S_\infty(\omega) = \frac{1}{4 \pi^2 \tilde{\omega}} + \left(
        \frac{1}{12} - \frac{1}{\pi^2} \right) \tilde{\omega}
        + \frac{1}{2 \pi^2} \tilde\omega \ln \tilde\omega \nonumber\\
        + \left\{
            \frac{1}{2 \pi^2 \tilde\omega} + \left(
                \frac{1}{6} - \frac{1+\gamma}{\pi^2}
            \right) \tilde\omega + {\mathcal O} (\tilde\omega^3)
        \right\} \hat{\Lambda}(\tilde\omega) + {\mathcal O}(\tilde\omega^3 \ln^2\tilde\omega).
\end{eqnarray}
Here, $\hat{\Lambda}(\tilde\omega)$ denotes a small-$\omega$ expansion of the function
\begin{eqnarray} \label{eq:Lambda}\fl
    \Lambda(\tilde\omega) = {\rm Im} \int_{0}^{\infty} \frac{d\lambda}{2\pi} \, \lambda^{1-2\tilde{\omega}^2} \, e^{i\tilde{\omega} \lambda}
    \left[
        \exp \left(
                    - \int_{\lambda}^{\infty} \frac{dt}{t} \left( \sigma_1(t) - i \tilde{\omega} t + 2\tilde{\omega}^2\right)
            \right) -1
        \right],
\end{eqnarray}
such that
\begin{eqnarray} \label{approx-def}
    \Lambda(\tilde\omega) = \hat{\Lambda}(\tilde\omega) + {\mathcal O}(\tilde\omega^3),
\end{eqnarray}
see Eqs.~(\ref{PS-exact}) and (\ref{S_inf_exp}). Notice that convergence of the external $\lambda$-integral at infinity is ensured by the oscillating exponent $e^{i\tilde\omega \lambda}$.
\newline\newline
\noindent
{\it Small-$\tilde\omega$ ansatz for the fifth Painlev\'e transcendent.}---To proceed, we postulate the following ansatz for a small-$\omega$ expansion of the fifth Painlev\'e function
$\sigma_1(t)$:
\begin{eqnarray} \label{sigma-expan}
    \sigma_1(t) = \tilde{\omega} f_1(t) + \tilde{\omega}^2 f_2(t) + \tilde{\omega}^3 f_3(t)+ \cdots.
\end{eqnarray}
Here, the functions $f_k(t)$ with $k=1,2,\dots$ satisfy the equations
\begin{eqnarray} \label{fk-eqns}
    t^2 f_k^{\prime\prime\prime} + t f_k^{\prime\prime} + (t^2-4) f_k^{\prime} - t f_k(t) = F_k(t),
\end{eqnarray}
where
\begin{eqnarray}\label{FL-1}
    F_1(t) &=& 0, \\
    \label{FL-2}
    F_2(t) &=& 4 f_1(t) f_1^\prime
    - 6 t (f_1^\prime)^2, \\
    \label{FL-3}
    F_3(t) &=& 4 f_1(t) f_2^\prime + 4 f_1^\prime f_2(t) - 12 t f_1^\prime f_2^\prime,
\end{eqnarray}
etc. The above can easily be checked by substituting Eq.~(\ref{sigma-expan}) into Chazy form \cite{C-1911,C-2000}
\begin{eqnarray} \label{PV-chazy}
    t^2 \sigma_\nu^{\prime\prime\prime} + t \sigma_\nu^{\prime\prime} + 6 t (\sigma_\nu^{\prime})^2 - 4
    \sigma_\nu \sigma_\nu^{\prime} + (t^2 - 4\nu^2) \sigma_\nu^{\prime} - t \sigma_\nu = 0
\end{eqnarray}
of the Painlev\'e V equation Eq.~(\ref{PV-family}) taken at $\nu=1$. The boundary conditions are generated by Eqs.~(\ref{bc-s1-infty}) and (\ref{bc-s1-zero}):
\begin{eqnarray} \label{f1-bc}
    f_1(t)\rightarrow 0 \; {\rm as}\; t\rightarrow 0, \qquad f_1(t) = it + o(t) \;{\rm as\;} t\rightarrow +\infty,
\end{eqnarray}
\begin{eqnarray} \label{f2-bc}
    f_2(t)\rightarrow 0 \; {\rm as}\; t\rightarrow 0, \qquad f_2(t)  \rightarrow -2 \;{\rm as\;} t\rightarrow +\infty,
\end{eqnarray}
\begin{eqnarray} \label{f3-bc}
    f_3(t)\rightarrow 0 \; {\rm as}\; t\rightarrow 0, \qquad f_3(t) \rightarrow 0 \;{\rm as\;} t\rightarrow +\infty.
\end{eqnarray}
The third order differential equation (\ref{fk-eqns}) can be solved to bring
\begin{eqnarray}
    f_k(t) &=& \left( t -\frac{2}{t}\right) \left( c_{1,k} + \int_{0}^{t} \frac{dx}{x^3} F_k(x) \right)\nonumber\\
         &+& \frac{e^{it}}{t}
    \left(
      c_{2,k} + \int_{0}^{t} \frac{dx}{2 x^3} e^{-i x} (-x^2+ 2i x+2) \, F_k(x)
    \right) \nonumber\\
    &+& \frac{e^{-it}}{t}
    \left(
      c_{3,k} + \int_{0}^{t} \frac{dx}{2 x^3} e^{i x} (-x^2 - 2i x+2) \, F_k(x)
    \right).
\end{eqnarray}
This representation assumes that the integrals are convergent; integration constants have to be fixed by the boundary conditions Eqs.~(\ref{f1-bc}), (\ref{f2-bc}), (\ref{f3-bc}), etc. In particular, we derive:
\begin{eqnarray} \label{eq:f1}
    f_1(t) = i \frac{t^2 +2\cos t -2}{t},
\end{eqnarray}
\begin{eqnarray} \label{eq:f2}
    f_2(t) &=& -2 - \frac{6}{t^2} + \frac{2\pi}{t} - \pi t + 2 \cos t + 8 \frac{\cos t}{t^2}
    -\frac{2\pi}{t} \cos t \nonumber\\
     &-& 2 \frac{\cos(2 t)}{t^2} + 8 \gamma \frac{\sin t}{t} - 8 {\rm Ci}(t) \frac{\sin t}{t}
    + 8 \ln t \frac{\sin t}{t} \nonumber\\
     &-& 4 \frac{{\rm Si}(t)}{t} + 2 t {\rm Si}(t) + 4 \cos t \frac{{\rm Si}(t)}{t},
\end{eqnarray}
where
\begin{eqnarray} \label{euler}
    \gamma = \lim_{n\rightarrow \infty} \left(
        - \ln n + \sum_{k=1}^n \frac{1}{k}
    \right) \simeq 0.577216
\end{eqnarray}
is the Euler's constant, and
\begin{eqnarray} \label{eq:f3}
    f_3(t) &=& \left( \frac{2}{t}-t \right) \int_{t}^{\infty} \frac{dx}{x^3} F_3(x)
    - 2 \frac{\cos t}{t}  \int_{0}^{\infty} \frac{dx}{x^3} F_3(x)
        \nonumber\\
         &+& i {\rm Im} \left\{ \frac{e^{it}}{t}
        \int_{0}^{t} \frac{dx}{x^3} e^{-i x} (-x^2+ 2i x+2) \, F_3(x)
        \right\}.
\end{eqnarray}
Here, the function $F_3(t)$ is known explicitly from Eqs.~(\ref{FL-3}), (\ref{eq:f1}) and (\ref{eq:f2}). We notice that
$$
f_1(t) \in i\mathbb{R}, \quad f_2(t) \in \mathbb{R}, \quad f_3(t) \in i\mathbb{R},
$$
and
$$
F_2(t) \in \mathbb{R}, \quad  F_3(t) \in i\mathbb{R}.
$$
\newline
\noindent
{\it Representation of $\hat{\Lambda}(\tilde\omega)$ as a partial sum}.---Having determined the functions $f_1(t)$, $f_2(t)$ and $f_3(t)$, we now turn to the small-$\omega$ analysis of $\Lambda(\tilde{\omega})$ [Eq.~(\ref{eq:Lambda})]. Expanding the expression in square brackets in small $\omega$, we obtain:
\begin{eqnarray} \fl \label{bra-1}\
    \exp \left(
                    - \int_{\lambda}^{\infty} \frac{dt}{t} \left( \sigma_1(t) - i \tilde{\omega} t + 2\tilde{\omega}^2\right)
            \right) -1 &=&  -\tilde\omega {\mathcal G}_1(\lambda) \nonumber\\
                &-& \tilde\omega^2 {\mathcal G}_2(\lambda) - \tilde\omega^3 {\mathcal G}_3(\lambda) - \cdots.
\end{eqnarray}
The functions ${\mathcal G}_k(\lambda)$ can be evaluated explicitly in terms of integrals containing $f_k(\lambda)$ defined in Eq.~(\ref{sigma-expan}). For example,
\begin{eqnarray} \label{G1}
    {\mathcal G}_1 (\lambda) &=& {\mathcal F}_1 (\lambda),\\
    \label{G2}
    {\mathcal G}_2 (\lambda) &=&  - \frac{1}{2} {\mathcal F}_1^2(\lambda) + {\mathcal F}_2(\lambda),\\
    \label{G3}
    {\mathcal G}_3 (\lambda) &=&  {\mathcal F}_3 (\lambda) - {\mathcal F}_1(\lambda){\mathcal F}_2(\lambda) + \frac{1}{6} {\mathcal F}_1^3(\lambda).
\end{eqnarray}
Here,
\begin{eqnarray} \label{F1-def}
    {\mathcal F}_1(\lambda) = \int_\lambda^\infty \frac{dt}{t} (f_1(t) - i t) = -2 i \left(
        \frac{1-\cos\lambda}{\lambda} + \frac{\pi}{2} - {\rm Si}(\lambda)
    \right),
\end{eqnarray}
\begin{eqnarray} \label{F2-def}
    {\mathcal F}_2(\lambda) = \int_\lambda^\infty \frac{dt}{t} (f_2(t) +2),
\end{eqnarray}
and
\begin{eqnarray} \label{F3-def}
    {\mathcal F}_3(\lambda) = \int_\lambda^\infty \frac{dt}{t} \, f_3(t).
\end{eqnarray}
Notice that
$$
  {\mathcal F}_1(\lambda) \in i \mathbb{R},\quad {\mathcal F}_2(\lambda) \in \mathbb{R},\quad {\mathcal F}_3(\lambda) \in i\mathbb{R}
$$
and, hence,
$$
  {\mathcal G}_1(\lambda) \in i \mathbb{R},\quad {\mathcal G}_2(\lambda) \in \mathbb{R},\quad {\mathcal G}_3(\lambda) \in i\mathbb{R}.
$$
Substituting Eq.~(\ref{bra-1}) into Eq.~(\ref{eq:Lambda}), we split $\Lambda (\tilde\omega)$ into a partial sum
\begin{eqnarray} \label{L-partial}
    \Lambda (\tilde\omega) = \Lambda_1 (\tilde\omega) + \Lambda_2 (\tilde\omega) + \Lambda_3 (\tilde\omega)+ \cdots,
\end{eqnarray}
where
\begin{eqnarray}\label{Lam-K}
    \Lambda_k (\tilde\omega) = -\tilde{\omega}^k {\rm Im} \int_{0}^{\infty} \frac{d\lambda}{2\pi} \, \lambda^{1-2\tilde{\omega}^2} \, e^{i\tilde{\omega} \lambda} \mathcal{G}_k(\lambda).
\end{eqnarray}
A small-$\tilde\omega$ expansion of $\Lambda_k (\tilde\omega)$ is of our immediate interest.
\noindent\newline\newline
{\it Calculation of $\hat{\Lambda}_1(\tilde\omega)$.}---Equations (\ref{Lam-K}), (\ref{G1}) and (\ref{F1-def}) yield
\begin{eqnarray}
    \Lambda_1 (\tilde\omega) = 2 \tilde{\omega}   \int_{0}^{\infty} \frac{d\lambda}{2\pi} \, \lambda^{1-2\tilde{\omega}^2} \, \cos(\tilde{\omega} \lambda)
    \left(
    \frac{1-\cos\lambda}{\lambda} + \frac{\pi}{2} - {\rm Si}(\lambda)
    \right).
\end{eqnarray}
Performing the integral, we obtain:
\begin{eqnarray} \fl
    \Lambda_1 (\tilde\omega) = \frac{1}{\pi}\tilde{\omega}^3 \Gamma(-2 \tilde{\omega}^2) \sin(\pi \tilde\omega^2) \Bigg\{
        (1-\tilde\omega)^{2\tilde\omega^2-1} + (1+\tilde\omega)^{2\tilde\omega^2-1} - 2 \tilde\omega^{2\tilde\omega^2-1}
    \nonumber\\
        -  \frac{1-2\tilde\omega^2}{1-\tilde\omega^2} \, _3F_2\left(1-\tilde\omega^2,1-\tilde\omega^2,\frac{3}{2}-\tilde\omega^2;\frac{1}{2},2-\tilde\omega^2;\tilde\omega^2\right)
        \Bigg\}.
\end{eqnarray}
Its small-$\tilde\omega$ expansion $\Lambda_1 (\tilde\omega) = \tilde\omega^2 + {\mathcal O}(\tilde\omega^3)$ brings
\begin{eqnarray} \label{L1-hat}
    \hat{\Lambda}_1 (\tilde\omega) = \tilde\omega^2,
\end{eqnarray}
see Eq.~(\ref{approx-def}) for the definition of $\hat{\Lambda}(\omega)$.
\newline\newline\noindent
{\it Estimate of ${\Lambda}_k(\tilde\omega)$.}---To treat $\Lambda_k(\tilde\omega)$ for $k \ge 2$, we split it into two parts
\begin{eqnarray} \label{A-B}
    \Lambda_k(\tilde\omega) =  A_k(\tilde\omega, T) +  B_k(\tilde\omega, T),
\end{eqnarray}
where
\begin{eqnarray} \label{a-k}
    A_k(\tilde\omega, T) &=& - \tilde\omega^k {\rm Im} \int_{0}^{T} \frac{d\lambda}{2\pi} \, \lambda^{1-2\tilde{\omega}^2} \, e^{i\tilde{\omega} \lambda} \mathcal{G}_k(\lambda),\\
    \label{b-k}
    B_k(\tilde\omega, T) &=& - \tilde\omega^k {\rm Im}\int_{T}^{\infty} \frac{d\lambda}{2\pi} \, \lambda^{1-2\tilde{\omega}^2} \, e^{i\tilde{\omega} \lambda} \mathcal{G}_k(\lambda).
\end{eqnarray}
Here, $T$ is an arbitrary positive number to be taken to infinity in the end.

Since a small-$\tilde\omega$ expansion of $A_k(\tilde\omega, T)$ is well justified for any finite $T$, see e.g. Eq.~(\ref{A2-exp}) below, we conclude that
\begin{eqnarray} \label{A3-w3}
     A_k(\tilde\omega, T) = {\mathcal O}(\tilde\omega^k).
\end{eqnarray}
To estimate $B_k(\tilde\omega, T)$, we refer to Remark \ref{convergence} which implies that ${\mathcal G}_k(\lambda) = {\mathcal O}(\lambda^{-1})$ as $\lambda \rightarrow \infty$. Replacing ${\mathcal G}_k(\lambda)$ with $1/\lambda$ in Eq.~(\ref{b-k}), we perform the integration by parts twice in the resulting integral
\begin{eqnarray} \fl \label{one-more-f}
    \int_{T}^{\infty} \frac{d\lambda}{2\pi} \, \frac{e^{i\tilde{\omega} \lambda}}{\lambda^{2\tilde{\omega}^2}} =
    -\frac{e^{i\tilde\omega T}}{2 i\pi\tilde\omega} T^{-2\tilde\omega^2} + \frac{1}{\pi} e^{i\tilde\omega T} T^{-1-2\tilde\omega^2}
       - 2 (1+2\tilde\omega^2)  \int_{T}^{\infty} \frac{d\lambda}{2\pi} \frac{e^{i\tilde\omega \lambda}}{\lambda^{2+2\tilde\omega^2}}
\end{eqnarray}
to conclude that it is of order ${\mathcal O}(\tilde\omega^{-1})$. This entails
\begin{eqnarray} \label{Bk-wk}
     B_k(\tilde\omega, T) = {\mathcal O}(\tilde\omega^{k-1}).
\end{eqnarray}
Since we are interested in calculating $\Lambda(\tilde\omega)$ up to the terms ${\mathcal O}(\tilde\omega^3)$, see Eq.~(\ref{approx-def}), we need to consider
$A_k(\tilde\omega, T)$ and $B_{k+1}(\tilde\omega, T)$ for $k \le 2$ only.
\noindent\newline\newline
{\it Calculation of $\hat{\Lambda}_2(\tilde\omega)$.}---
A small-$\tilde\omega$ expansion of $A_2(\tilde\omega, T)$ brings
\begin{eqnarray} \fl\label{A2-exp}
    A_2(\tilde\omega, T) = - \tilde\omega^2 {\rm Im} \int_{0}^{T} \frac{d\lambda}{2\pi} \, \lambda
        \left(
            1 + i \tilde\omega \lambda - 2 \tilde\omega^2 \ln \lambda  -\frac{1}{2} \tilde \omega^2 \lambda^2
            + {\mathcal O}(\tilde\omega^3)
        \right)
     \mathcal{G}_2(\lambda).
\end{eqnarray}
Since $\mathcal{G}_2(\lambda) \in \mathbb{R}$, we even conclude that
\begin{eqnarray} \label{A2-w3}
         A_2(\tilde\omega, T) &=& {\mathcal O}(\tilde\omega^3).
\end{eqnarray}
For this reason, $A_2(\tilde\omega, T)$ does not contribute to $\hat{\Lambda}_2(\tilde\omega)$.

Evaluation of $B_2(\tilde\omega, T)$, given by
\begin{eqnarray} \label{B2-def}
    B_2(\tilde\omega, T) &=& - \tilde\omega^2 \int_{T}^{\infty} \frac{d\lambda}{2\pi} \, \lambda^{1-2\tilde{\omega}^2} \, \sin(\tilde{\omega} \lambda) \mathcal{G}_2(\lambda),
\end{eqnarray}
is more involved. A simplification comes from the fact that, at some point, we shall let $T$ tend to infinity. For this reason, it suffices to consider a large-$\lambda$ expansion of $\mathcal{G}_2(\lambda)$ in the integrand. Straightforward calculations bring
\begin{eqnarray}\label{F12-asym}
    \mathcal{F}_1(\lambda) = -\frac{2i}{\lambda} - 2i \frac{\sin\lambda}{\lambda^2}  + \mathcal{O}\left(\frac{\cos\lambda}{\lambda^3}\right), \\
    \label{F2-asym}
    \mathcal{F}_2 (\lambda) = -\frac{6}{\lambda^2} + 8 \frac{\cos\lambda \ln \lambda}{\lambda^2} + 2 (4\gamma-1) \frac{\cos\lambda}{\lambda^2}  +
    \mathcal{O}\left( \frac{\ln\lambda}{\lambda^3}\right).
\end{eqnarray}
Equation (\ref{F12-asym}) is furnished by the large-$\lambda$ expansion of Eq.~(\ref{F1-def}). To derive Eq.~(\ref{F2-asym}), we first calculated the integral Eq.~(\ref{F2-def}) replacing an integrand therein with its large-$t$ asymptotics, and then expanded the resulting expression in parameter $\lambda \rightarrow \infty$. By virtue of Eq.~(\ref{G2}), this yields
\begin{eqnarray} \label{G2-exp}
    \mathcal{G}_2(\lambda) = -\frac{4}{\lambda^2} + 8 \frac{\cos\lambda \ln \lambda}{\lambda^2} + 2 (4\gamma-1) \frac{\cos\lambda}{\lambda^2}  +
    \mathcal{O}\left( \frac{\ln\lambda}{\lambda^3}\right).
\end{eqnarray}
The expansion Eq.~(\ref{G2-exp}), being substituted into Eq.~(\ref{B2-def}), generates two families of integrals:
\begin{eqnarray} \label{I-j-def}
    {\mathcal I}_j(\tilde\omega, T) &=& \int_{T}^{\infty} \frac{d\lambda}{2\pi} \, \frac{\sin[(\tilde{\omega} +j) \lambda]}{\lambda^{1+2\tilde{\omega}^2}}
\end{eqnarray}
with $j=0, \pm 1$ and
\begin{eqnarray} \label{K-j-def}
    {\mathcal K}_j(\tilde\omega, T) &=& \int_{T}^{\infty} \frac{d\lambda}{2\pi} \,\ln \lambda \, \frac{\sin[(\tilde{\omega} +j) \lambda]}{\lambda^{1+2\tilde{\omega}^2}}
\end{eqnarray}
with $j=\pm 1$, such that
\begin{eqnarray} \label{B2-sum}
    B_2(\tilde\omega, T) &=& \tilde\omega^2 \Big\{ 4 {\mathcal I}_0(\tilde\omega, T) - (4 \gamma-1)[{\mathcal I}_{-1}(\tilde\omega, T) + {\mathcal I}_1(\tilde\omega, T)] \nonumber\\
        &-& 4
    [{\mathcal K}_{-1}(\tilde\omega, T) + {\mathcal K}_1(\tilde\omega, T)]\Big\}.
\end{eqnarray}
To determine a small-$\tilde\omega$ expansion of $B_2(\tilde\omega, T)$, we shall further concentrate on small-$\tilde\omega$ expansions of its constituents, ${\mathcal I}_{0}(\tilde\omega, T)$, ${\mathcal I}_{\pm 1}(\tilde\omega, T)$ and ${\mathcal K}_{\pm1}(\tilde\omega, T)$.
\noindent\newline\newline
{\it (a)}.---The function ${\mathcal I}_{0}(\tilde\omega, T)$ can be evaluated exactly,
\begin{eqnarray} \fl
     \qquad \qquad {\mathcal I}_0(\tilde\omega, T) = \frac{1}{4\pi} \sin(\pi \tilde\omega^2)\,  \tilde\omega^{2\tilde\omega^2-2}\Gamma(1-2 \tilde\omega^2) \nonumber\\
     - \frac{1}{2\pi} T^{1-2\tilde\omega^2} \frac{\tilde\omega}{1- 2 \tilde\omega^2} \,
     {}_1F_2\left(\frac{1}{2}-\tilde\omega^2;\frac{3}{2}, \frac{3}{2}-\tilde\omega^2; - \frac{T^2}{4}\tilde\omega^2\right).
\end{eqnarray}
Expanding this result around $\tilde\omega=0$ we derive
\begin{eqnarray} \label{I-0}
     {\mathcal I}_0(\tilde\omega, T) =  \frac{1}{4} + {\mathcal O}(\tilde\omega).
\end{eqnarray}
\noindent\newline
{\it (b)}.---To analyze a small-$\tilde\omega$ expansion
\begin{eqnarray}
  {\mathcal I}_{j\neq 0}(\tilde\omega, T)=\alpha_0(j,T) + \tilde\omega\, \alpha_1(j,T) + {\mathcal O}(\tilde\omega^2),
\end{eqnarray}
we proceed in two steps. First, we determine the coefficient $\alpha_0(j,T)$ directly from Eq.~(\ref{I-j-def})
\begin{eqnarray} \label{I-j-def-0}
    \alpha_0(j,T) = {\mathcal I}_{j\neq 0}(0, T) &=&  \int_{T}^{\infty} \frac{d\lambda}{2\pi} \, \frac{\sin(j \lambda)}{\lambda}
    < \infty, \quad \forall\, T>0,
\end{eqnarray}
to deduce the relation $(j \neq 0)$
\begin{eqnarray}
    \alpha_0(-j,T) = -\alpha_0(j,T).
\end{eqnarray}
Second, to determine a linear term of a small-$\tilde\omega$ expansion of ${\mathcal I}_{j\neq 0}(\tilde\omega, T)$, we perform integration by parts in Eq.~(\ref{I-j-def}) to derive the representation
\begin{eqnarray} \fl \label{I-j-def-alt}
    {\mathcal I}_{j\neq 0}(\tilde\omega, T) = T^{-1-2\tilde\omega^2} \frac{\cos[(\tilde\omega +j)T]}{2 \pi (\tilde\omega +j)} -
     \frac{1+2 \tilde\omega^2}{\tilde\omega +j} \int_{T}^{\infty} \frac{d\lambda}{2 \pi} \frac{\cos[(\tilde\omega +j)\lambda]}{\lambda^{2 + 2 \tilde\omega^2}}
\end{eqnarray}
whose integral term possesses a better convergence when $\tilde\omega$ approaches zero, as compared to the one given by Eq.~(\ref{I-j-def}). Differentiating Eq.~(\ref{I-j-def-alt}) with respect to $\tilde\omega$ and setting $\tilde\omega=0$ we derive:
\begin{eqnarray} \fl \label{I-j-der}
    \alpha_1(j,T) =  \frac{d \, {\mathcal I}_{j \neq 0}(\tilde\omega, T)}{d\tilde{\omega}}\Bigg|_{\tilde\omega =0} &=& -\frac{1}{2\pi j} \left(
        \sin(j T) + \frac{\cos(jT)}{j T}
    \right)
        + \frac{1}{j} \int_{T}^{\infty} \frac{d\lambda}{2\pi}\frac{\sin(j\lambda)}{\lambda} \nonumber\\
        &+&
            \frac{1}{j^2} \int_{T}^{\infty} \frac{d\lambda}{2\pi}\frac{\cos(j\lambda)}{\lambda^2} < \infty, \quad \forall\, T>0.
\end{eqnarray}
This implies the relation $(j \neq 0)$
\begin{eqnarray}
    \alpha_1(-j,T) = \alpha_1(j,T).
\end{eqnarray}
As a consequence, we conclude that
\begin{eqnarray} \label{I-sum}
    {\mathcal I}_{-1}(\tilde\omega, T) + {\mathcal I}_{1}(\tilde\omega, T) =  {\mathcal O}(\tilde\omega).
\end{eqnarray}
(It is this particular combination that appears in Eq.~(\ref{B2-sum}).)
\noindent\newline\newline
{\it (c)}.---To examine a small-$\tilde \omega$ expansion
\begin{eqnarray}
{\mathcal K}_{j\neq 0}(\tilde\omega, T)= \kappa_0(j,T)+ \tilde\omega \,\kappa_1(j,T) +{\mathcal O}(\tilde\omega^2),
\end{eqnarray}
we follow the same strategy. First, we determine the coefficient $\kappa_0(j, T)$ directly from Eq.~(\ref{K-j-def})
\begin{eqnarray} \label{K-j-def-0} \fl
    \kappa_0(j, T)= {\mathcal K}_{j\neq 0}(0, T) =
    \int_{T}^{\infty} \frac{d\lambda}{2\pi} \,\ln\lambda \frac{\sin(j \lambda)}{\lambda}
    < \infty, \quad \forall\, T>0,
\end{eqnarray}
to observe the relation $(j \neq 0)$
\begin{eqnarray}
    \kappa_0(-j,T) = -\kappa_0(j,T).
\end{eqnarray}
Second, to examine a linear term of a small-$\tilde\omega$ expansion of ${\mathcal K}_{j\neq 0}(\tilde\omega, T)$, we
perform integration by parts in Eq.~(\ref{K-j-def}) in order to improve integral's convergence:
\begin{eqnarray} \fl \label{K-j-parts}
    {\mathcal K}_{j \neq 0}(\tilde\omega,T) = \frac{\ln T}{2\pi (\tilde\omega+j)} \frac{\cos[(\tilde\omega +j)T]}{T^{1+2\tilde\omega^2}}
    + \frac{1}{\tilde\omega+j}
    \int_{T}^{\infty} \frac{d\lambda}{2\pi}  \frac{\cos[(\tilde\omega+j)\lambda]}{\lambda^{2+2\tilde\omega^2}} \nonumber\\
        - \frac{1+2\tilde\omega^2}{\tilde\omega+j} \int_{T}^{\infty} \frac{d\lambda}{2\pi}  \frac{\cos[(\tilde\omega+j)\lambda]}{\lambda^{2+2\tilde\omega^2}} \ln\lambda.
\end{eqnarray}
Differentiating Eq.~(\ref{K-j-parts}) with respect to $\tilde\omega$ and setting $\tilde\omega=0$, we obtain:
\begin{eqnarray}  \label{K-j-der} \fl
        \kappa_1(j, T)= \frac{d \, {\mathcal K}_{j \neq 0}(\tilde\omega, T)}{d\tilde{\omega}}\Bigg|_{\tilde\omega =0} =
        - \frac{\sin(jT)}{2\pi j} \ln T - \frac{\cos(jT)}{2\pi j^2 T} \ln T \nonumber\\
        - \frac{1}{j} \int_{T}^{\infty} \frac{d\lambda}{2\pi} \frac{\sin(j\lambda)}{\lambda} \left( 1 - \ln\lambda\right)
        \nonumber\\
        - \frac{1}{j^2} \int_{T}^{\infty} \frac{d\lambda}{2\pi} \frac{\cos(j\lambda)}{\lambda^2} \left( 1 -  \ln\lambda\right)
        < \infty, \quad \forall\, T>0.
\end{eqnarray}
This implies the relation $(j \neq 0)$
\begin{eqnarray}
    \kappa_1(-j,T) = \kappa_1(j,T).
\end{eqnarray}
As a consequence, we conclude that
\begin{eqnarray} \label{K-sum}
    {\mathcal K}_{-1}(\tilde\omega, T) + {\mathcal K}_{1}(\tilde\omega, T) = {\mathcal O}(\tilde\omega).
\end{eqnarray}
(Again, it is this particular combination that appears in Eq.~(\ref{B2-sum}).)

Collecting the results Eqs.~(\ref{A2-w3}), (\ref{B2-sum}), (\ref{I-0}), (\ref{I-sum}), and (\ref{K-sum}), we observe that $\Lambda_2(\tilde \omega) = \tilde\omega^2 + {\mathcal O}(\tilde\omega^3)$; hence
\begin{eqnarray}
    \hat{\Lambda}_2(\tilde \omega) = \hat{\Lambda}_1(\tilde \omega) =\tilde\omega^2,
\end{eqnarray}
see Eqs.~(\ref{approx-def}) and (\ref{L1-hat}).
\noindent\newline\newline
{\it Calculation of $\hat\Lambda_3(\tilde\omega)$}.---Since $A_3(\tilde\omega,T)= {\mathcal O}(\tilde\omega^3)$, see Eq.~(\ref{A3-w3}), we need to deal with $B_3(\tilde\omega,T)$ only:
\begin{eqnarray} \label{b-3}
    B_3(\tilde\omega, T) = - \tilde\omega^3 {\rm Im}\int_{T}^{\infty} \frac{d\lambda}{2\pi} \, \lambda^{1-2\tilde{\omega}^2} \, e^{i\tilde{\omega} \lambda} \mathcal{G}_3(\lambda),
\end{eqnarray}
large-$\lambda$ asymptotics of $\mathcal{G}_3(\lambda)$ defined by Eq.~(\ref{G3}) are required.

To proceed, we need to complement the expansions Eqs.~(\ref{F12-asym}) and (\ref{F2-asym}) with the one for $\mathcal{F}_3$ defined by Eq.~(\ref{F3-def}). To this end we, first, employ Eqs.~(\ref{FL-3}), (\ref{eq:f1}), (\ref{eq:f2}) to determine a large-$t$ behavior of $F_3(t)$,
\begin{eqnarray} \label{F3-as}\fl
    \qquad \frac{F_3(t)}{t^3} = i \left\{
        \frac{a_1}{t^3} + a_2 \frac{\cos t}{t^3} + a_3 \frac{\cos t \ln t}{t^3} + {\mathcal O}\left( \frac{\sin(\star\, t)\ln t}{t^4}\right)
    \right\},
\end{eqnarray}
where $a_1, a_2$ and $a_3$ are real coefficients whose explicit values are not required for our analysis; $\sin(\star\, t)$ stands to denote $\sin t$ and $\sin(2t)$, both of which are present in the remainder term. This expansion combined with Eq.~(\ref{eq:f3}) implies the following large-$t$ behavior of $f_3(t)$:
\begin{eqnarray} \label{f3-small-as} \fl
    \qquad f_3(t) = i \left\{
        a_1^\prime \frac{1}{t} + a_2^\prime \frac{\sin t}{t} + a_3^\prime \frac{\cos t}{t} + a_4^\prime \frac{\cos t}{t} \ln t
        +  a_5^\prime \frac{\cos t}{t} \ln^2 t
    \right\} \nonumber\\ + \mathcal{O} \left( \frac{\sin t \ln t}{t^2} \right).
\end{eqnarray}
Here, the coefficients $a_j^\prime \in \mathbb{R}$ are real.

Now, a large-$\lambda$ behavior of $\mathcal{F}_3(\lambda)$ can be read off from Eq.~(\ref{F3-def}):
\begin{eqnarray} \label{F3-asym} \fl
\qquad \qquad
    \mathcal{F}_3(\lambda) = i \left\{
        a_1^{\prime\prime} \frac{1}{\lambda} + a_2^{\prime\prime} \frac{\sin \lambda}{\lambda^2} + a_3^{\prime\prime} \frac{\cos \lambda}{\lambda^2}
        + a_4^{\prime\prime} \frac{\sin \lambda}{\lambda^2} \ln \lambda + a_5^{\prime\prime} \frac{\sin \lambda}{\lambda^2} \ln^2 \lambda
    \right\} \nonumber\\
    \qquad \qquad + \mathcal{O} \left(
        \frac{\cos \lambda}{\lambda^3} \ln^3 \lambda
    \right),
\end{eqnarray}
where the coefficients $a_j^{\prime\prime} \in \mathbb{R}$ are real, again. Inspection of Eqs.~(\ref{G3}), (\ref{F12-asym}), (\ref{F2-asym}) and (\ref{F3-asym}) shows that a large-$\lambda$ behavior of ${\mathcal G}_3(\lambda)$ coincides with that of
$\mathcal{F}_3(\lambda)$.

Having determined a large-$\lambda$ asymptotics of ${\mathcal G}_3(\lambda)$, we turn to the analysis of the function $B_3(\tilde\omega,T)$ as $\tilde\omega \rightarrow 0$. Since ${\mathcal G}_3(\lambda) \in i \mathbb{R}$, a substitution of Eq.~(\ref{F3-asym}) into Eq.~(\ref{b-3}) generates several integrals (see below), whose small-$\tilde\omega$ behavior should be studied in order to figure out if $B_3(\tilde\omega,T)$ contributes to $\hat\Lambda_3(\tilde\omega)$ as defined by Eqs.~(\ref{approx-def}), (\ref{L-partial}) and (\ref{A-B}). This knowledge is required to complete calculation of the small-$\omega$ expansion of the power spectrum $S_\infty(\omega)$, see Eq.~(\ref{S_inf_exp}).
\noindent\newline\newline
{\it (a)}.---The first integral, originating from the $a_1^{\prime\prime}$ term in Eq.~(\ref{F3-asym}), admits a small-$\tilde\omega$ expansion
\begin{eqnarray}
    B_{3,1}(\tilde\omega,T) = \int_{T}^{\infty} \frac{d\lambda}{2\pi} \,  \,\frac{\cos(\tilde{\omega} \lambda)}{\lambda^{2\tilde{\omega}^2}}
    ={\mathcal O}(\tilde\omega^0).
\end{eqnarray}
This result is obtained from the real part of the r.h.s.~of Eq.~(\ref{one-more-f}) evaluated at $\tilde\omega=0$.
Hence, due to Eq.~(\ref{b-3}), the contribution of $B_{3,1}(\tilde\omega,T)$ to $B_3(\tilde\omega,T)$ is of order ${\mathcal O}(\tilde\omega^3)$.
\noindent\newline\newline
{\it (b)}.---The second integral, originating from the $a_2^{\prime\prime}$ term in Eq.~(\ref{F3-asym}), reads
\begin{eqnarray}
    B_{3,2}(\tilde\omega,T) =\int_{T}^{\infty} \frac{d\lambda}{2\pi} \,\sin \lambda\, \frac{\cos(\tilde{\omega} \lambda)}{\lambda^{1+2\tilde\omega^2}}
    ={\mathcal O}(\tilde\omega^0)
\end{eqnarray}
as can be seen by setting $\tilde\omega=0$ directly in the integrand.
\noindent\newline\newline
{\it (c)}.---All other integrals generated by the remaining terms in Eq.~(\ref{F3-asym}) can be treated analogously.

As a consequence, we conclude that $B_3(\tilde\omega,T)$ is of order ${\mathcal O}(\tilde\omega^3)$. Taken together with Eqs.~(\ref{A-B}) and (\ref{A3-w3}), this implies that $\Lambda_3(\tilde\omega) = {\mathcal O}(\tilde\omega^3)$ so that
\begin{eqnarray} \label{lambda-finite}
    \hat\Lambda(\tilde\omega) = 2 \tilde\omega^2 + {\mathcal O}(\tilde\omega^3).
\end{eqnarray}
Substituting Eq.~(\ref{lambda-finite}) into Eq.~(\ref{S_inf_exp}), we derive the sought small-$\tilde\omega$ expansion of the power spectrum $S_\infty(\omega)$ as
stated in Theorem \ref{Th-5}.
\hfill $\square$

\section*{Acknowledgments}
Roman Riser thanks Tom Claeys for insightful discussions and kind hospitality at the Universit\'e catholique de Louvain.
This work was supported by the Israel Science Foundation through the Grants No.~648/18 (E.K. and R.R.) and No.~2040/17 (R.R.). Support from the
Simons Center for Geometry and Physics, Stony Brook University, where a part of this work was completed, is gratefully acknowledged (E.K).

\newpage
\renewcommand{\appendixpagename}{\normalsize{Appendices}}
\addappheadtotoc
\appendixpage
\renewcommand{\thesection}{\Alph{section}}
\renewcommand{\theequation}{\thesection.\arabic{equation}}
\setcounter{section}{0}

\section{Boundary conditions for Painlev\'e VI function $\tilde{\sigma}_N (t; \zeta)$ as $t \rightarrow \infty$} \label{A-1}
To derive the $t \rightarrow \infty$ boundary condition for $\tilde{\sigma}_N (t; \zeta)$ satisfying Eq.~(\ref{pvi}) of the Theorem \ref{Th-3}, we make use of Eqs.~(\ref{phintheta}) and (\ref{phin}) to observe
the relation
\begin{eqnarray}\label{sn-bc}
    \tilde{\sigma}_N(t;\zeta) = -t - 2 \frac{d}{d\varphi} \ln \Phi_N(\varphi;\zeta)\Big|_{\varphi= 2\arctan(1/t)}
\end{eqnarray}
which holds true for $t >0$ and $0\le \varphi < \pi/2$. Since $\varphi \rightarrow 0$ as $t \rightarrow \infty$, we shall consider a small-$\varphi$
expansion of the generating function $\Phi_N(\varphi;\zeta)$
\begin{eqnarray} \label{ps-tcue-2A}
    \Phi_N(\varphi;\zeta) &=&
    \prod_{j=1}^N  \left( \int_0^{2\pi} - \zeta \int_0^\varphi \right) \frac{d\theta_j}{2\pi}
    P_N(\theta_1,\dots,\theta_N) \nonumber\\
    &=& 1+ \sum_{\ell=1}^N \frac{(-\zeta)^\ell}{\ell!} \left( \prod_{j=1}^\ell \int_{0}^{\varphi} \frac{d\theta_j}{2\pi} \right)
    R_{\ell,N} (\theta_1,\dots,\theta_\ell),
\end{eqnarray}
where the JPDF $P_N(\theta_1,\dots,\theta_N)$ is that of ${\rm TCUE}_N$ [Eq.~(\ref{T-CUE})], and $R_{\ell,N} (\theta_1,\dots,\theta_\ell)$ stands for
the $\ell$-th correlation function in ${\rm TCUE}_N$. Due to the Lemma \ref{correlation-f}, these admit a determinantal respresentation
\begin{eqnarray}
    R_{\ell,N} (\theta_1,\dots,\theta_\ell) = \frac{1}{N+1} {\det}_{1\le i,j \le \ell+1} \left[
        {\mathcal S}_{N+1} (\theta_i -\theta_j)
    \right] \Big|_{\theta_{\ell+1}=0},
\end{eqnarray}
where ${\mathcal S}_{N+1}(\theta)$ is the ${\rm CUE}_{N+1}$ sine-kernel:
$$
    {\mathcal S}_{N+1}(\theta) = \frac{\sin[(N+1)\theta/2]}{\sin(\theta/2)}.
$$
For one,
\begin{eqnarray}
     R_{1,N} (\theta) = \frac{1}{N+1} \det\left(
                                              \begin{array}{ll}
                                                {\mathcal S}_{N+1}(0) & {\mathcal S}_{N+1}(\theta) \\
                                                {\mathcal S}_{N+1}(\theta) & {\mathcal S}_{N+1}(0) \\
                                              \end{array}
                                            \right),
\end{eqnarray}
\begin{eqnarray} \fl
    R_{2,N} (\theta_1,\theta_2) = \frac{1}{N+1} \det
        \left(
          \begin{array}{lll}
            {\mathcal S}_{N+1}(0) & {\mathcal S}_{N+1}(\theta_1-\theta_2) & {\mathcal S}_{N+1}(\theta_1) \\
            {\mathcal S}_{N+1}(\theta_1-\theta_2) & {\mathcal S}_{N+1}(0) & {\mathcal S}_{N+1}(\theta_2) \\
            {\mathcal S}_{N+1}(\theta_1) & {\mathcal S}_{N+1}(\theta_2) & {\mathcal S}_{N+1}(0) \\
          \end{array}
        \right),
\end{eqnarray}
etc.

A straightforward calculation produces a small-$\varphi$ expansion of $\Phi_N(\varphi;\zeta)$ whose several initial terms read:
\begin{eqnarray} \fl \label{ps-tcue-2AA}
    \Phi_N(\varphi;\zeta) = 1 - \frac{\zeta}{2\pi} \left(
        R_{1,N} (0) \, \varphi + \frac{1}{2!} R_{1,N}^\prime (0) \, \varphi^2 + \frac{1}{3!} R_{1,N}^{\prime\prime} (0) \, \varphi^3
    \right) \nonumber\\
     +  \frac{1}{2!} \left(\frac{\zeta}{2\pi}\right)^2 \left( R_{2,N} (0,0) \, \varphi^2 +
        \frac{1}{2} \left[
            R_{2,N}^{[0,1]}(0,0) +  R_{2,N}^{[1,0]}(0,0)
        \right] \varphi^3
    \right) \nonumber\\
    - \frac{1}{3!} \left(\frac{\zeta}{2\pi}\right)^3  R_{3,N} (0,0,0) \, \varphi^3 +
    o(\varphi^3).
    \nonumber\\
    {}
\end{eqnarray}
Only one, out of six, coefficients in the expansion is nontrivial,
$$
    R_{1,N}^{\prime\prime} (0) = \frac{N(N+1)(N+2)}{6}
$$
yielding
\begin{eqnarray} \label{ps-tcue-2AAA}
    \Phi_N(\varphi;\zeta) = 1 - \frac{N(N+1)(N+2)}{72\pi} \, \zeta \varphi^3 +
    o(\varphi^3).
\end{eqnarray}
By virtue of Eq.~(\ref{sn-bc}), the boundary condition for $\tilde{\sigma}_N (t; \zeta)$ as $t \rightarrow \infty$ reads
\begin{eqnarray}\label{1texp}
    \tilde{\sigma}_N (t; \zeta) = -t + \zeta\,\frac{N(N+1)(N+2)}{3\pi t^2} + {\mathcal O}(t^{-3}).
\end{eqnarray}
Further terms in the $1/t$-expansion Eq.~(\ref{1texp}) can be restored with the help of the Painlev\'e VI equation itself [Eq.~(\ref{pvi})]. Substituting the large-$t$
ansatz
\begin{eqnarray} \label{ansatz}
\tilde{\sigma}_N (t; \zeta) = -t + \sum_{j=2}^\infty \frac{\sigma_j(N,\zeta)}{t^j}
\end{eqnarray}
therein, we deduce:
\begin{eqnarray}
    \tilde{\sigma}_N (t; \zeta) = -t + \frac{\sigma_2(N,\zeta)}{t^2} +  \frac{\sigma_4(N,\zeta)}{t^4}  + \frac{\sigma_5(N,\zeta)}{t^5} + \mathcal{O}(t^{-6}),
\end{eqnarray}
where
\begin{eqnarray}
    \sigma_2(N,\zeta) &=& \zeta\,\frac{N(N+1)(N+2)}{3\pi}, \nonumber\\
    \sigma_4(N,\zeta) &=& - \frac{2N^2+4N+9}{15} \sigma_2(N,\zeta), \nonumber\\
    \sigma_5(N,\zeta) &=& \frac{\sigma_2^2(N,\zeta)}{3}.
\end{eqnarray}
\begin{remark} \label{Rem-A1}
Since the above procedure is capable of producing the expansion coefficients $\sigma_{j}(N,\zeta)$ of any finite order, it can also be utilized -- by virtue of Eq.~(\ref{phin}) -- to generate a small-$\varphi$ expansion of $\Phi_N(\varphi;\zeta)$ up to required accuracy.
\hfill $\blacksquare$
\end{remark}

\section{Generating function $\Phi_N(\varphi;\zeta)$ and discrete Painlev\'e V equations ($\rm dP_V$)} \label{B-1}
To avoid intricacies \cite{B-2010} of a numerical evaluation of the six Painlev\'e function $\tilde{\sigma}_N(t;\zeta)$ appearing in the generating function Eq.~(\ref{phin}), we opt for an alternative representation of $\Phi_N(\varphi;\zeta)$ in terms of discrete Painlev\'e V equations.

To proceed, we follow Ref.~\cite{FW-2003-arXiv} (see also Refs.~(\cite{FW-2003,FW-2005,FW-2006})), to observe that a sequence of $U(N)$ integrals
\begin{eqnarray}
    I_N(\varphi;\zeta) = \frac{1}{N!} \prod_{j=1}^N  \left(  \int_{-\pi}^{\pi} - \zeta \int_{\pi-\varphi}^{\pi} \right) \frac{d\theta_j}{2\pi}
      \, \prod_{1 \le i < j \le N} |e^{i\theta_i} - e^{i\theta_j}|^2  \nonumber\\
        \qquad \times \prod_{j=1}^N e^{\omega_2 \theta_j} \left| 1+ e^{i\theta_j} \right|^{2\omega_1} t^{-\mu} e^{-i\mu \theta_j}
    \left(1 + t e^{i\theta_j} \right)^{2\mu},
\end{eqnarray}
where $t=e^{i\varphi}$, satisfies a recurrence relation
\begin{eqnarray} \label{IN-rec}
    \frac{I_{N+1} I_{N-1}}{I_{N}^2} = 1 - r_N \bar{r}_N.
\end{eqnarray}
Here, $r_N$ and $\bar{r}_N$ are so-called reflection coefficients appearing in the Szeg\"o theory \cite{S-2003} of orthogonal polynomials
on the unit circle. Remarkably, there exists the $N$-recurrence for reflection coefficients $\{r_N,\bar{r}_N\}$ as specified in Proposition 4.4 in Ref.~\cite{FW-2003-arXiv}; a variation of their Proposition is given below.
\begin{proposition}
The $N$-recurrence for the reflection coefficients of polynomials orthogonal on the unit circle $|z|=1$ with respect to the weight
\begin{eqnarray} \fl
    w(z) = t^{-\mu} z^{-\mu-\omega_1 -i \omega_2} (1+z)^{2\omega_1} (1+t z)^{2\mu} \left\{
                                                                       \begin{array}{ll}
                                                                         1, & \hbox{$\theta \notin (\pi -\phi,\pi)$} \\
                                                                         1-\zeta, & \hbox{$\theta \in (\pi -\phi,\pi)$.}
                                                                       \end{array}
                                                                     \right.
\end{eqnarray}
is governed by two systems of coupled first order discrete Painlev\'e equations (${\rm dP_V}$). The first is
\begin{equation}
\left\{
\begin{array}{l}
  g_{N+1}g_N = t \displaystyle\frac{(f_N + N)(f_N + N + \mu)}{f_N(f_N - 2\omega_1)}, \\
    \\
      f_N + f_{N+1} =  2\omega_1 + \displaystyle\frac{N-1+\mu + \omega}{g_N-1} + \displaystyle\frac{t(N+\mu + \bar{\omega})}{g_N-t},
\end{array}
\right.
\end{equation}
subject to the initial conditions
\begin{eqnarray}\label{IC-1}
    g_1 = t \frac{\mu+\omega + (1+\mu+\bar{\omega})r_1}{\mu+\omega + (1+\mu+\bar{\omega}) t r_1}, \quad f_0=0, \quad r_1 = - \frac{w_{-1}}{w_0}.
\end{eqnarray}
The second system is
\begin{eqnarray}
\left\{
\begin{array}{l}
  \bar{g}_{N+1} \bar{g}_N = t^{-1} \displaystyle\frac{(\bar{f}_N + N)(\bar{f}_N + N + 2\omega_1)}{\bar{f}_N(\bar{f}_N - 2\mu)}, \\
    \\
      \bar{f}_N + \bar{f}_{N+1} = 2\mu + \displaystyle\frac{N+\mu + \omega}{\bar{g}_N-1} + \displaystyle\frac{(N-1 +\mu + \bar{\omega}) t^{-1}}{\bar{g}_N-t^{-1}},
\end{array}
\right.
\end{eqnarray}
subject to the initial conditions
\begin{eqnarray}\label{IC-2}
    \bar{g}_1 =  \frac{\mu+\bar{\omega} + (1+\mu+\omega) t^{-1} \bar{r}_1}{\mu+\bar{\omega} + (1+\mu+\omega) \bar{r}_1}, \quad \bar{f}_0=0,
    \quad \bar{r}_1 = - \frac{w_{1}}{w_0}.
\end{eqnarray}
Here, $\omega = \omega_1 + i\omega_2$ and $\bar{\omega}= \omega_1 - i\omega_2$. The coefficients $w_0, w_{\mp 1}$ in Eqs.~(\ref{IC-1}) and (\ref{IC-2}) are
\begin{eqnarray} \label{WL}
    w_\ell = \frac{1}{2 i \pi}\oint \frac{dz}{z^{\ell+1}}\, w(z).
\end{eqnarray}
The transformations relating the variables $\{g_N, \bar{g}_N\}$ to the reflection coefficients $\{r_N,\bar{r}_N\}$ read:
\begin{eqnarray}
    \frac{r_N}{r_{N-1}} = \frac{1-t^{-1}g_N}{g_N-1} \frac{N-1-\mu +\omega}{N+\mu+\bar{\omega}}
\end{eqnarray}
and
\begin{eqnarray}
    \frac{\bar{r}_N}{\bar{r}_{N-1}} = \frac{1-\bar{g}_N}{\bar{g}_N-t^{-1}} \frac{N-1-\mu +\bar{\omega}}{N+\mu+\omega},
\end{eqnarray}
respectively.
\newline
\end{proposition}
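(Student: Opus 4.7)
The plan is to adapt the ladder-operator approach for orthogonal polynomials on the unit circle (OPUC) with semi-classical weights developed by Forrester and Witte in Refs.~\cite{FW-2003-arXiv,FW-2003}. First I would verify that the weight $w(z)$ stated above is semi-classical, in the sense that the logarithmic derivative of its absolutely continuous factor is a rational function of $z$ with poles only at $z=-1$ and $z=-t^{-1}$; the factor $(1-\zeta)\mathds{1}_{\theta\in(\pi-\varphi,\pi)}$ multiplies the weight on an arc and, inside the Cauchy-transform manipulations below, contributes a jump with endpoints at $z=-1$ and $z=-e^{-i\varphi}$, the former already among the singular points. The Szeg\"o recurrences
\[
\Phi_{N+1}(z)=z\,\Phi_N(z)-\bar r_{N+1}\,\Phi^*_N(z),\qquad \Phi^*_{N+1}(z)=\Phi^*_N(z)-r_{N+1}\,z\,\Phi_N(z),
\]
define the reflection coefficients $r_N,\bar r_N$; combined with the semi-classical assumption, they yield polynomial-coefficient ladder operators that raise and lower the index $N$ on $\Phi_N$ and $\Phi^*_N$.

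Next I would close the system. Evaluating the ladder operators at the distinguished points $z=0,-1,-t^{-1}$ and matching leading coefficients produces a set of bilinear relations coupling $r_{N\pm1},\bar r_{N\pm1}$ to two auxiliary functions that encode the residues of $w'/w$ at the two singular points. Expressing the ratio $r_N/r_{N-1}$ through an auxiliary variable $g_N$ exactly as stated in the Proposition, and then solving one of the bilinear relations for $f_{N+1}$ in terms of $f_N$ and $g_N$, converts the bilinear system into the first ${\rm dP_V}$ scheme for $(f_N,g_N)$. The second system for $(\bar f_N,\bar g_N)$ is obtained by applying the involution $z\mapsto 1/z$ (equivalently, $\omega\leftrightarrow\bar\omega$, $\mu\leftrightarrow\omega_1$, $t\leftrightarrow t^{-1}$), under which $w(z)$ is mapped onto the weight obtained by swapping the roles of the two singular points; since the ladder-operator machinery is symmetric under this involution, the ${\rm dP_V}$ structure is preserved and the barred equations follow directly.

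The initial conditions $f_0=\bar f_0=0$ follow from the empty-sum convention inherent in the definition of $f_N,\bar f_N$ as telescoping differences of Verblunsky ratios. The remaining initial data $g_1,\bar g_1$ are determined by substituting $r_1=-w_{-1}/w_0$ and $\bar r_1=-w_1/w_0$ into the transformation formulas linking $g_N,\bar g_N$ to $r_N/r_{N-1},\bar r_N/\bar r_{N-1}$, with the moments computed from the contour integral (\ref{WL}). The principal technical obstacle, and the point at which the present statement genuinely deviates from Proposition~4.4 of \cite{FW-2003-arXiv}, lies in controlling the arc-jump factor: one must verify that introducing $(1-\zeta)\mathds{1}_{\theta\in(\pi-\varphi,\pi)}$ produces no new residues in $w'/w$ beyond those already carried by the smooth factors, so that the semi-classical class of the weight, and hence the order of the resulting discrete Painlev\'e system, is unchanged. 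Once this check is carried through, the remaining manipulations are a direct algebraic transcription of the Forrester--Witte calculation.
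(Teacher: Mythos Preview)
The paper does not supply its own proof of this Proposition; it is presented as a direct quotation (``a variation of their Proposition'') of Proposition~4.4 in Forrester--Witte~\cite{FW-2003-arXiv}, with the specific weight inserted. Your proposal therefore matches the paper at the level of attribution, and your sketch of the underlying ladder-operator machinery is an accurate summary of the Forrester--Witte argument.

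One correction to your analysis of the arc-jump factor. You write that the jump has endpoints at $z=-1$ and $z=-e^{-i\varphi}$, ``the former already among the singular points.'' In fact \emph{both} endpoints coincide with the existing singularities of the absolutely continuous factor: since $t=e^{i\varphi}$, the factor $(1+tz)^{2\mu}$ vanishes precisely at $z=-t^{-1}=-e^{-i\varphi}$. Thus the arc-jump introduces no new singular points at all; it merely shifts the local exponents at the two Fisher--Hartwig points already present. This is exactly why the ${\rm dP_V}$ structure carries over without modification, and why the paper (and Forrester--Witte) can treat the $\zeta$-deformation as a parameter rather than a structural change. With this observed, the ``principal technical obstacle'' you flag dissolves: the statement is a specialization, not an extension, of the cited result.
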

The Proposition yields a sought ${\rm dP_V}$ representation of the generating function $\Phi_N(\varphi;\zeta)$, see Eq.~(\ref{phintheta}). Indeed, setting $\omega=\omega_1 + i\omega_2 =1$ and $\mu=0$, one observes the relation
$$
    \Phi_N(\varphi;\zeta) = \frac{I_N(\varphi;\zeta)}{N+1}
$$
so that Eq.~(\ref{IN-rec}) translates to
\begin{eqnarray} \label{FN-rec}
    \frac{\Phi_{N+1}\Phi_{N-1}}{\Phi_{N}^2} = \frac{(N+1)^2}{N(N+2)} \left(
        1 - r_N \bar{r}_N
    \right),
\end{eqnarray}
where the reflection coefficients $\{r_N, \bar{r}_N\}$ are determined by equations
\begin{eqnarray}
    \frac{r_N}{r_{N-1}} = \frac{1-t^{-1}g_N}{g_N-1} \frac{N}{N+1}
\end{eqnarray}
and
\begin{eqnarray}
    \frac{\bar{r}_N}{\bar{r}_{N-1}} = \frac{1-\bar{g}_N}{\bar{g}_N-t^{-1}} \frac{N}{N+1},
\end{eqnarray}
considered in conjunction with two systems of coupled first order discrete Painlev\'e equations (${\rm dP_V}$):
\begin{equation}
\left\{
\begin{array}{l}
  g_{N+1}g_N = t \displaystyle\frac{(f_N + N)^2}{f_N(f_N - 2)}, \\
    \\
      f_N + f_{N+1} =  2 + \displaystyle\frac{N}{g_N-1} + \displaystyle\frac{t(N+1)}{g_N-t}
\end{array}
\right.
\end{equation}
and
\begin{eqnarray}
\left\{
\begin{array}{l}
  \bar{g}_{N+1} \bar{g}_N = t^{-1} \displaystyle\frac{(\bar{f}_N + N)(\bar{f}_N + N + 2)}{\bar{f}_N^2}, \\
    \\
      \bar{f}_N + \bar{f}_{N+1} = \displaystyle\frac{N+1}{\bar{g}_N-1} + \displaystyle\frac{N}{t \bar{g}_N-1}.
\end{array}
\right.
\end{eqnarray}
The initial conditions read
\begin{eqnarray}
    \Phi_0 = 1, \quad \Phi_1 = 1 - \frac{\zeta}{2\pi} (\varphi - \sin\varphi),
\end{eqnarray}
\begin{eqnarray}\label{IC-1-ex}
    g_1 = t \frac{w_0 - 2 w_{-1}}{w_0 - 2 t w_{-1}}, \quad f_0=0
\end{eqnarray}
and
\begin{eqnarray}\label{IC-2-ex}
    \bar{g}_1 =  \frac{w_0 - 2 t^{-1} w_1}{w_0 - 2 w_1}, \quad \bar{f}_0=0,
\end{eqnarray}
respectively. By virtue of Eq.~(\ref{WL}), a set of parameters $\{w_0, w_{\pm 1}\}$ can be calculated explicitly:
\begin{eqnarray} \label{WIJ}
    w_0 &=& 2 - \frac{\zeta}{i \pi} \left( \frac{1-t^2}{2t} + \ln t  \right), \\
    \label{WIJ-2}
    w_{\pm 1} &=& 1 \mp \frac{\zeta}{i \pi} \left(
        \frac{1}{4} (t^{\pm 1}-1) (t^{\pm 1}-3) + \frac{1}{2} \ln (t^{\pm 1})
    \right).
\end{eqnarray}
Equations (\ref{FN-rec}) -- (\ref{WIJ-2}) provide the ${\rm dP_V}$ representation of the generating function $\Phi_N(\varphi;\zeta)$.
\begin{remark}
To avoid numerical $z$--differentiation of $\Phi_N(\varphi;1-z)$ appearing in the formula Eq.~(\ref{ps-tcue-1}), it is beneficial to produce a similar system of coupled recurrence equations for $(\partial/\partial z) \Phi_N(\varphi;1-z)$. Since the resulting recurrences are too cumbersome to state them here, we leave their (straightforward) derivation to the inquisitive reader.

\hfill $\blacksquare$
\end{remark}
\begin{remark}
Away from the endpoints $\varphi=0$ and $\varphi=2\pi$, the ${\rm dP_V}$ representation opens a way for effective numerical evaluation of both
$\Phi_N(\varphi;\zeta)$ and  $(\partial/\partial z) \Phi_N(\varphi;1-z)$ for finite $N$. Since the recurrence procedure tends to accumulate numerical errors, we have used quadruple precision numbers to achieve sufficient precision for very large $N$ (e.g., for $N = 10^4$, see Figs.~\ref{Figure_ps_overall} and \ref{Figure_ps_diff}).
\hfill $\blacksquare$
\end{remark}
\begin{remark}
In the vicinity of the endpoints $\varphi=0$ and $\varphi=2\pi$, numerical precision of the above recurrence procedure worsens drastically since the recurrence equations
start to exhibit a singular behavior. To circumvent this drawback at $\varphi=0$, we have used a small-$\varphi$ expansion of $\Phi_N(\varphi;\zeta)$ as described in the Remark
\ref{Rem-A1}. In the vicinity of $\varphi=2\pi$, the symmetry relation Eq.~(\ref{phin-sym}) combined with a small-$\varphi$ expansion makes the job.
\hfill $\blacksquare$
\end{remark}

\smallskip\smallskip\smallskip

\newpage

\section*{References}
\fancyhead{} \fancyhead[RE,LO]{References}
\fancyhead[LE,RO]{\thepage}

\end{document}